\newcommand{\appndexsec}[1]{\subsection{#1}}  
\newtheorem{definition}{Definition}
\newtheorem{theorem}{Theorem}
\newcommand{\E}{\mathbf{E}}
\newcommand{\argmax}{\operatornamewithlimits{argmax}}
\newcommand{\argmin}{\operatornamewithlimits{argmin}}
\newcommand{\matrx}[1]{\boldsymbol{#1}}
\newcommand{\vectr}[1]{\boldsymbol{#1}}
\newcommand{\calx}{\mathcal{X}}   
\newcommand{\calz}{\mathcal{Z}}   
\newcommand{\cali}{\mathcal{I}}   
\newcommand{\cals}{\mathcal{S}}   
\newcommand{\calo}{\mathcal{O}}   
\newcommand{\calc}{\mathcal{C}}   
\newcommand{\calg}{\mathcal{G}}   
\newcommand{\distp}{\mathbf{d}_p} 
\newcommand{\mapem}{M_{\textit{em}}}  
\newcommand{\xsolnset}{\Gamma}  
\newcommand{\reals}{\mathbb{R}}    
\newcommand{\naturals}{\mathbb{N}} 
\newcommand{\integers}{\mathbb{Z}} 
\newcommand{\call}{\mathcal{L}}   
\newcommand{\vtheta}{\vectr{\theta}}     
\newcommand{\vphi}{\vectr{\phi}}           
\newcommand{\vq}{\vectr{q}}           
\newcommand{\va}{\vectr{a}}           
\newcommand{\vz}{\vectr{z}}        
\newcommand{\vv}{\vectr{v}}        
\newcommand{\valpha}{\vectr{\alpha}}   
\newcommand{\vzeros}{\vectr{0}}        
\newcommand{\vones}{\vectr{1}}        
\newcommand{\mg}{\matrx{G}}    
\newcommand{\invn} {INV-N}                                          
\newcommand{\invp} {INV-P}                                          
\newcommand{\emm} {EM}                                         
\newcommand{\dkl}{D_\textit{KL}} 
\newcounter{ncomm}
\title{Full Convergence of the Iterative Bayesian Update and Applications to Mechanisms for Privacy Protection}
\author{
\IEEEauthorblockN{Ehab ElSalamouny\IEEEauthorrefmark{1}\IEEEauthorrefmark{2},
                                Catuscia Palamidessi\IEEEauthorrefmark{1}}
\IEEEauthorblockA{  \IEEEauthorrefmark{1}Inria and LIX, \'{E}cole Polytechnique, France \\
                                  \IEEEauthorrefmark{2}Suez Canal University, Egypt}
}
\begin{document}
 
\maketitle
\thispagestyle{plain}  
\pagestyle{plain}  



\begin{abstract}
The iterative Bayesian update (IBU) and the matrix inversion (INV) are the main methods to 
retrieve the original distribution from noisy data 
resulting from the application of privacy protection mechanisms. 
We show that the theoretical foundations of the IBU   established in 
the literature are flawed, as they rely on an assumption 
that in general is not satisfied in typical real datasets.
We then fix the theory of the IBU, by providing a general convergence result for 
the underlying Expectation-Maximization method. 
Our framework does not rely on the above assumption,
and also covers a more general local privacy model. 
Finally we evaluate the precision of  the IBU 
on data sanitized with the Geometric, $k$-RR, and \textsc{Rappor} mechanisms,
and we show that it outperforms INV in the first case, while it is comparable to INV in the other two cases. 
\end{abstract}

%
%
%
%

\section{Introduction}

The problem of developing  methods for privacy protection while preserving utility
has stimulated an active area of research, and several approaches have been proposed. 
Depending on their architecture, these methods can be distinguished in 
 \emph{central} and  \emph{local}~\cite{Dwork:14:Algorithmic}. 
The central model assumes the presence of a trusted  administrator, who has access to the
users’ original data and takes care of the  data sanitization. 
In the local model the users sanitize their data by themselves before they are collected, 
typically by applying some obfuscation mechanism.

The local model is clearly more robust: it  requires no trusted party, 
and it is less vulnerable to security breaches. Indeed, even if a malicious 
entity manages to break into the repository, it will only access  sanitized data.   
On the other hand, guaranteeing a good utility in the local model is more challenging than in the central one. 

Concerning utility, we ought to distinguish two main categories: the \emph{quality of service}, and the \emph{statistical  value} 
of the collected data. 
The former refers to what the user expects from the  service provider, assuming that he has provided voluntarily his data in exchange of some kind of personalized service. 
The  statistical utility, on the other hand,  measures the precision of   analyses   on the obfuscated  data   w.r.t.  those  on the original data.

In this paper, we focus on  statistical  utility and on the local privacy model based on  injection of controlled random noise.  
We assume a collection of noisy data produced by a population of users, and we  consider the reconstruction 
of the original distribution, i.e., the distribution determined by the original data. In the privacy literature the main methods that have been proposed 
for this purpose are the matrix inversion technique (INV) \cite{Agrawal:05:ICMD,Kairouz:16:ICML} 
and the iterative Bayesian update (IBU) \cite{Agrawal:01:PDS}, \cite{Agrawal:05:ICMD}\footnote{The IBU was proposed in \cite{Agrawal:01:PDS}, 
and its foundations were also laid in that  paper. However, the name  ``iterative Bayesian update'' was only introduced later on in \cite{Agrawal:05:ICMD}.}.
INV has the advantage of being based on simple linear-algebra  techniques and some post-processing to obtain a distribution. 
The post-processing can be a normalization or a projection on the simplex, and  we will call the corresponding methods \invn{} and \invp{} respectively. 
The IBU is an iterative algorithm that is based on the Expectation-Maximization method well known in statistics, and has the advantage of producing the maximum likelihood estimator (MLE) of the noisy data.

The IBU  has recently  attracted renewed  interest from the community. For example, it has been used in 
\cite{Murakami:18:POPETS,Alvim:18:CSF,Kacem:18:PLAS,Oya:19:SP,Oya:19:THESIS}, relying on the foundations 
established in the seminal paper  \cite{Agrawal:01:PDS}. We have found out, however, that there are various mistakes in the theoretical results of \cite{Agrawal:01:PDS}.
The general problem is that \cite{Agrawal:01:PDS} builds on the convergence result of  Wu \cite{Wu:83:jastat} without paying attention to the assumption that the update process   converges to a point in the ``interior'' of the parameter space (cfr. \cite[Section 2.1]{Wu:83:jastat}). The parameter space, in the case of the IBU, consists of the  distributions candidate 
to be the MLE, i.e., a subset of the $n$-dimensional simplex. The ``interior point'' assumption excludes, therefore, the cases in which the MLE is in the border of the simplex, 
i.e., the non-fully-supported distributions.  Now, this assumption is not necessarily satisfied in the local privacy model, for at least two reasons. 
The first is that the original distribution itself (which has high likelihood to be the MLE) may be not fully supported.  
For example, the Gowalla dataset in San Francisco has   ``holes'', i.e., cells with no checkins, even on a relatively coarsely discretized map,  cfr. Fig.~\ref{fig:holes} (data  available at  
\url{https://gitlab.com/locpriv/ibu/blob/master/gdata/nsf.txt}).
This means that the corresponding distribution assigns probability $0$ to those cells. 
The second reason is that, even if the original distribution is fully supported, the MLE may not be. This can happen because some components of the distribution may be underestimated  
if they not contribute much to the likelihood of the observed (noisy) data.

The convergence of the IBU to an MLE in  \cite{Agrawal:01:PDS} (cfr. Theorem 4.3) relies on the MLE being a stationary point, i.e., a point in which the derivatives have value $0$. Indeed, the stationary property would be a consequence of the  ``interior point'' assumption, as proved by  Wu \cite{Wu:83:jastat}. 
But as shown above, that assumption does not hold, and the MLE is not necessarily a stationary point. We show a counterexample in Section~\ref{sec:revisiting_ibu_non_stationary}.

In this paper we fix the foundations of the IBU, and we prove the general convergence of the IBU to an MLE, even in the case in which the ``interior point'' assumption is not satisfied. 
Furthermore, we prove this result in a more general setting than  \cite{Agrawal:01:PDS}. Namely, we assume that each user can apply a different mechanism (also with a possibly different level of privacy), and even change the mechanism several times while the data are being collected. In other words, we assume a \emph{fully local} privacy model. We argue that this is an advantage of our framework, since different users may have different privacy requirements, and even for the same user the  requirements may change over time, or depending on the secret to protect.

\begin{figure}
\centering 
      \includegraphics[width=0.4\textwidth]{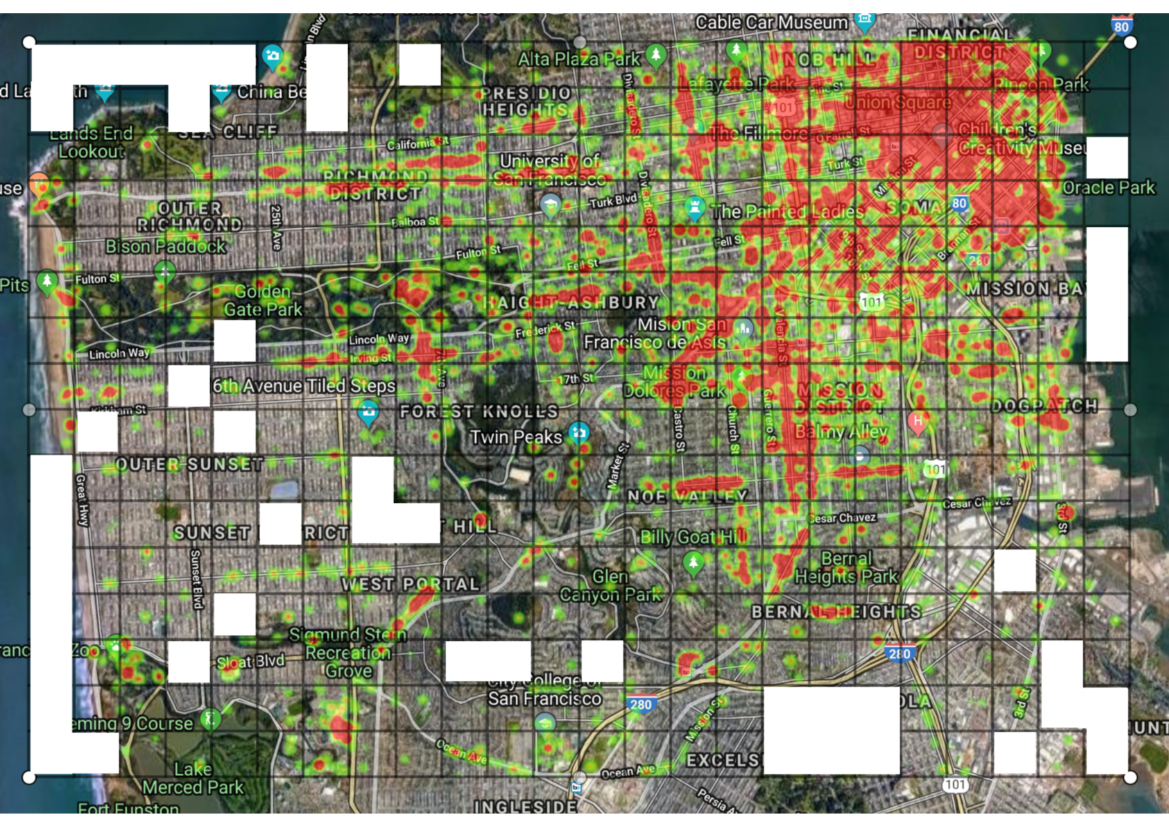}
\caption{Heatmap of the Gowalla checkins in an area in San Francisco. The white polygons represent sets of  cells where there are no checkins.}\label{fig:holes}
\end{figure}

Another problem  in \cite{Agrawal:01:PDS} is the claim that the log-likelihood function has a unique global maximum (cfr. Proposition 4.1), from which \cite{Agrawal:01:PDS} derives that the MLE is unique (cfr. Theorem 4.4). This is not true in general as we show  in Section \ref{sec:revisiting_ibu_non_unique}. As a consequence, Observation 4.1. in  \cite{Agrawal:01:PDS}, stating that as the number of observed data  grows the IBU approximates better and better the original distribution, does not hold either;
we show a counterexample in Section~\ref{sec:not-approximate}. 
Since the whole point of the IBU is to reconstruct  as faithfully as possible the original distribution, it is crucial to ensure that the approximation can be done at an arbitrary level of precision.  As the above counterexample shows, we  get this guarantee  only if the MLE is unique. This motivates us to study the conditions of uniqueness (cfr. Section \ref{sec:unique}).

 %
\begin{figure}
\centering 
\subfigure[Using \invn{}]{
      \label{fig:geom_binomial_invn}
      \includegraphics[width=0.22\textwidth]{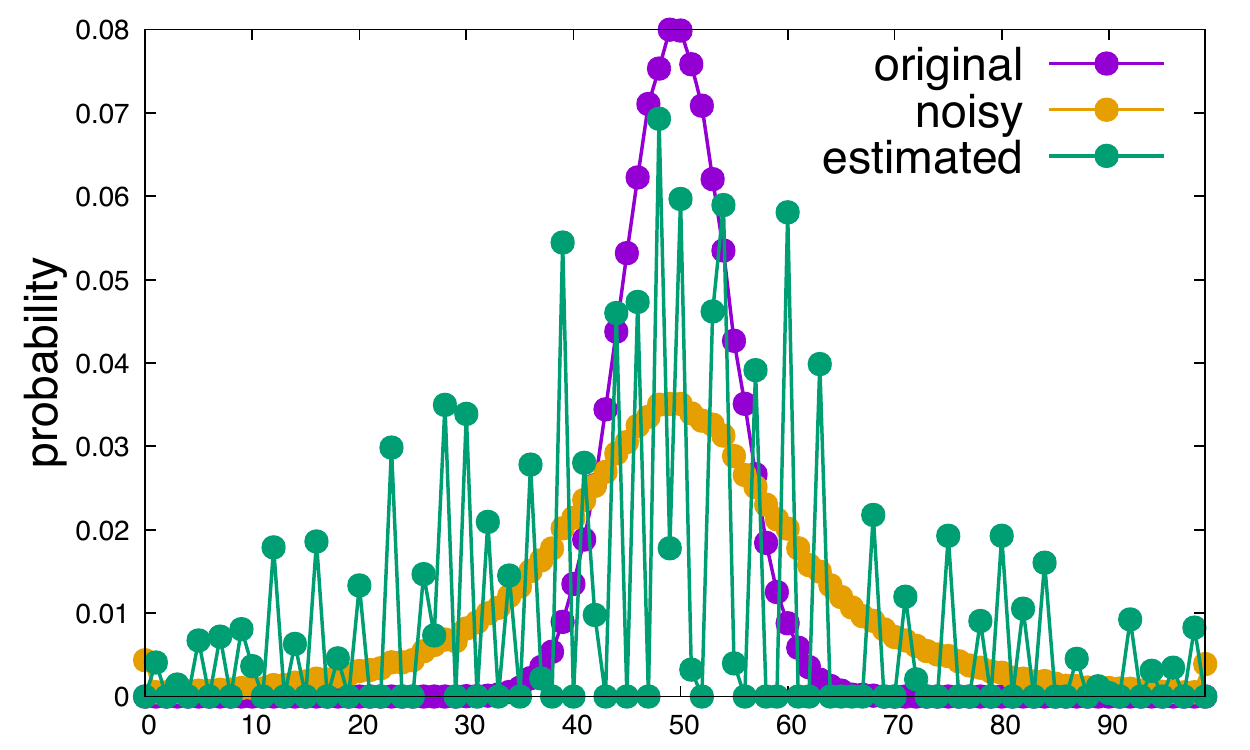}
      }
\subfigure[Using \invn{}]{
      \label{fig:geom_uniform_invn}
      \includegraphics[width=0.22\textwidth]{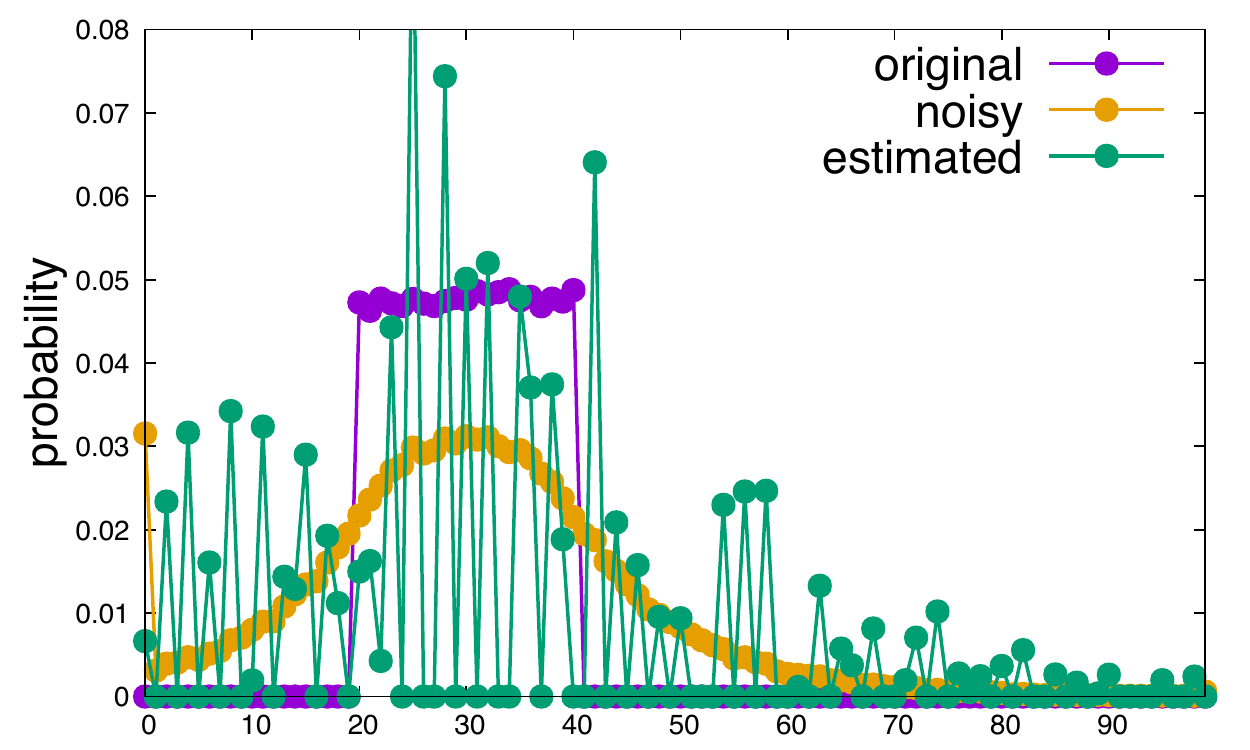}
      }
\subfigure[Using \invp{}]{
      \label{fig:geom_binomial_invp}
      \includegraphics[width=0.22\textwidth]{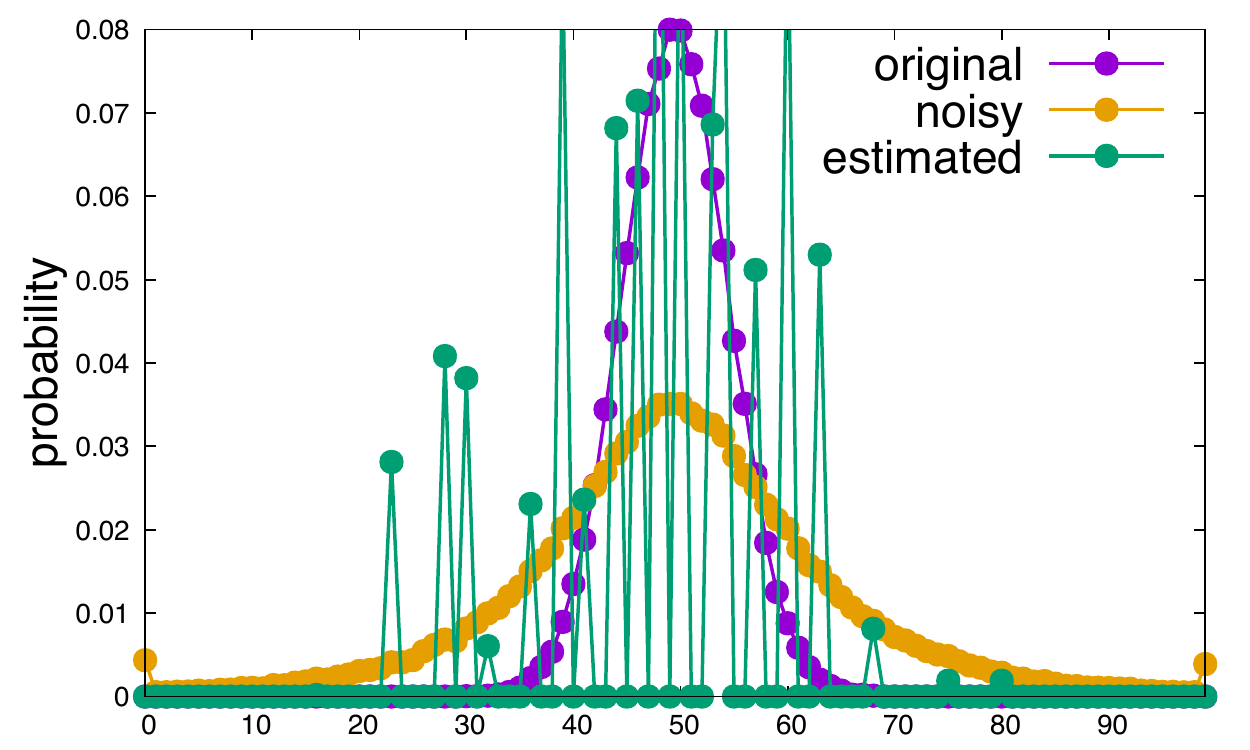}
      }
\subfigure[Using \invp{} ]{
      \label{fig:geom_uniform_invp}
      \includegraphics[width=0.22\textwidth]{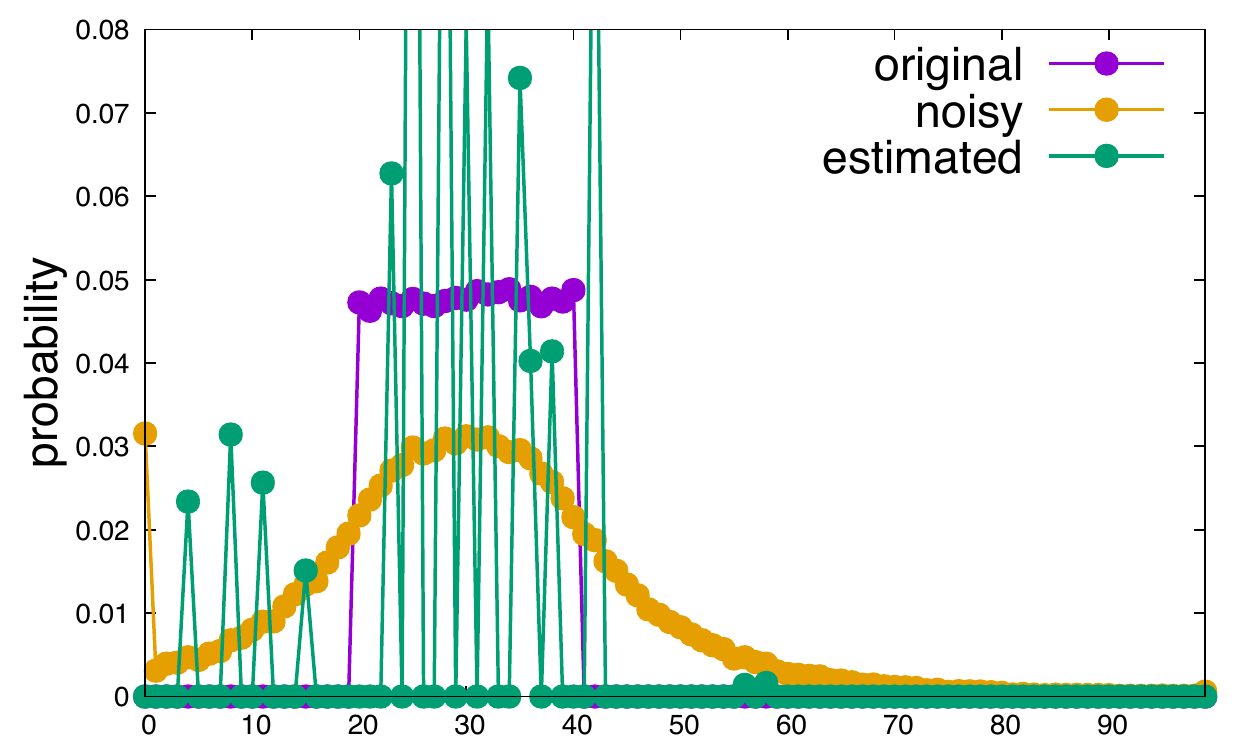}
      }
\subfigure[Using \emm{}]{
      \label{fig:geom_binomial_em}
      \includegraphics[width=0.22\textwidth]{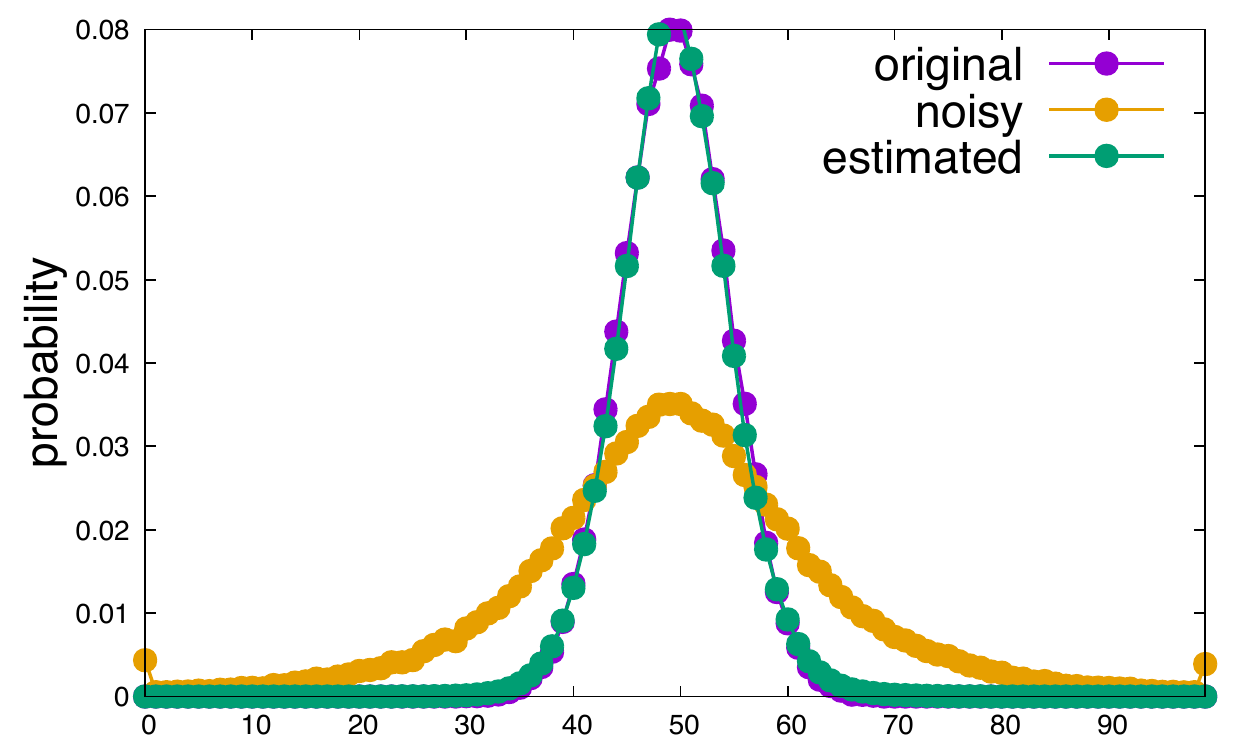}
      }
\subfigure[Using \emm{}]{
      \label{fig:geom_uniform_em}
      \includegraphics[width=0.22\textwidth]{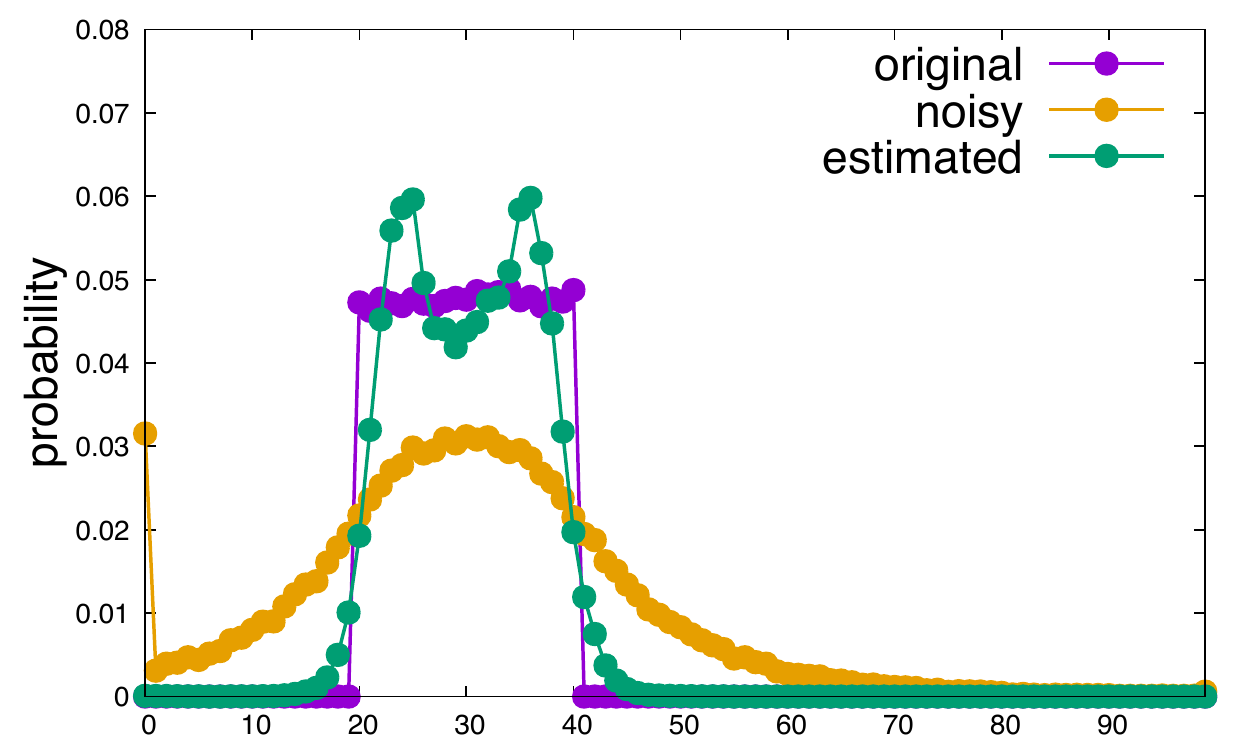}
      }
\caption{Estimating the original distributions from noisy data produced by a geometric mechanism with $\epsilon=0.1$. 
On the left side, the original data are obtained by sampling a  binomial. On the right side, they are sampled from a distribution which is uniform 
on $\{20, \dots,39\}$ and zero outside this range.}
\label{fig:geometric_on_line}
\end{figure}

Finally, we  compare the performance of the IBU with those of \invn{} and \invp{}. 
 A similar comparison was done in \cite{Agrawal:05:ICMD} (resulting in a favorable verdict for the IBU), but the mechanisms used there were rather different from the modern ones. We  are interested  in comparing their precision  on state-of-the-art mechanisms, in particularly the ones used for \emph{local differential privacy} (LDP) \cite{Duchi:13:FOCS,Kairouz:16:JMLR} and for \emph{differential privacy on metrics} ($d_X$-DP)\cite{Chatzikokolakis:13:PETS}, whose instantiation to geographical distance is  known in the area of location privacy under the name of \emph{geo-indistinguishability} \cite{Andres:13:CCS}.
Two well-known mechanisms for LDP are \textsc{Rappor} \cite{Erlingsson:14:CCS} and $k$-randomized-responses ($k$-RR) \cite{Kairouz:16:JMLR}. 
As for $d_X$-DP and geo-indistinguishability, the typical mechanisms are the Geometric and the Laplace noise. 
We have experimentally verified that (a) the IBU and INV are more or less equivalent for $k$-RR and \textsc{Rappor}, while (b) there is a striking difference when we use the  
geo-indistinguiable mechanisms. Figure~\ref{fig:geometric_on_line} reports our experiments with two different probability distributions on the original data (a binomial distribution and a  distribution uniform on a sub-interval), and applying geometric noise. As we can see, the IBU outperforms by far the INV methods. The experiments are described in detail in Section~\ref{sec:lingeo}.


\subsection{Contributions}
The contributions of this paper are as follows: 
\begin{itemize}
\item We  show that there are various mistakes in the theory of the IBU as it appears  in the literature. 
\item We fix the foundations of the IBU by providing a general convergence theorem. 
\item The above theorem is actually valid for a generalization of the local privacy model,  in which different users can use different mechanisms, and the same input can generate several outputs, possibly  under different mechanisms.  
\item We identify the conditions under which the MLE is unique. This is important, because only if the MLE is unique then we can 
    approximate the original distribution at an arbitrary level of precision. 
\item We compare  IBU and  INV  on various distributions and mechanisms, showing  that  IBU  outperforms INV in some cases, and is equivalent in the others. 
\end{itemize}

\subsection{Structure of the paper}
Section \ref{sec:preliminaries} presents some preliminaries. 
Section \ref{sec:revisiting} describes the mistakes in the foundations of the IBU. 
In Section \ref{sec:em_conv} we establish a general convergence result of EM algorithms. 
In Section \ref{sec:privacy_model} we
generalize the local privacy model, extend the IBU to work on this model, and also prove the 
convergence of the derived algorithm to the MLEs. 
In Sections \ref{sec:single_inp}, \ref{sec:mul_input} we discuss
 special cases of this algorithm. 
In Section \ref{sec:unique} we study cases in which the MLE is unique. 
In Section \ref{sec:experiments} we experiment the precision of the IBU compared to INV. Section \ref{sec:related_work} describes related work. 
Finally in Section \ref{sec:conclusion} we conclude and 
describe future work.

The proofs of the various results are available in the appendix. The software used for the experiments is available at {\tt gitlab.com/locpriv/ibu}.

%
%
%
%
%

\section{Preliminaries}\label{sec:preliminaries}
\subsection{Maximum-likelihood estimators}\label{sec:ml}  
A statistical model is often used to explain the observed output  of a system. 
Let $\calo$ be the set of potential observables, with generic element $o$, and let $O$ be the random variable ranging on 
$\calo$ representing the output.
Assume that the probability distribution of $O$ depends on a (possibly multi-dimensional) parameter $\vtheta$ taking values in
a space $\calc$. Given an output $o$, the aim is to find the $\vtheta$ that maximizes the
probability of getting $o$, and that therefore is  \emph{the best explanation} of what we have observed.   
To this purpose, it is convenient to introduce the notion of the \emph{log-likelihood function} $L:\calc \to \reals$,
defined as 
\begin{equation}\label{eq:L}
L(\vtheta) = \log P(O =o | \vtheta). 
\end{equation}
where $P(O =o | \vtheta)$ is the conditional probability of $o$ given $\vtheta$. 
The Maximum-Likelihood estimator (MLE) of the unknown parameter is then defined as the $\vtheta$ that maximizes    
$L(.)$ (and therefore $P(O =o | \vtheta)$, since $\log$ is monotone). In some cases these maximizers can be computed by  analytical techniques, but often, 
e.g. when the observables are not dependent directly on the parameter, 
more sophisticated methods are needed.  

\subsection{The Expectation-Maximization framework}\label{sec:em}
The Expectation-Maximization (EM) framework \cite{Dempster:77:jrstat,Wu:83:jastat,McLachlan:08:BOOK} is a powerful 
method for computing the MLE in various statistical scenarios. 
This method is used when the statistical model has hidden data that, if known, would make  the estimation procedure 
 easier. More precisely, let the hidden data be modeled by a random variable $S$ ranging on $\cals$. 
 Then if $S$ is known to have the value $s$, it may be easier 
to work on the ``complete data'' log-likelihood $\log P(S=s, O=o | \vtheta)$ rather than on the (incomplete data) 
log-likelihood in (\ref{eq:L}). Since the value of $S$ is actually unknown,   instead of $\log P(S=s, O=o | \vtheta)$ we consider its   \emph{expected value}, 
computed using a prior approximation $\vtheta'$ of the parameter. 
This expectation yields the function $Q(\vtheta | \vtheta')$ defined as 
\begin{align}\label{eq:Q}
Q(\vtheta | \vtheta') &= \sum_{s \in \cals} P(S = s | O = o; \vtheta') \nonumber\\[-5mm]
                                    &\qquad\qquad\qquad \log P(S = s, O= o | \vtheta).
\end{align}
Let $H(\vtheta | \vtheta')$ be defined as 
\begin{align}\label{eq:H}
H(\vtheta | \vtheta') &= - \sum_{s \in \cals} P(S = s | O = o; \vtheta') \nonumber\\[-5mm]
                                    &\qquad\qquad\qquad  \log P(S= s | O = o ; \vtheta).
\end{align}
It is easy to verify that 
\begin{equation}   
\label{eq:LQH}
L(\vtheta) = Q(\vtheta | \vtheta') + H(\vtheta | \vtheta'), 
\end{equation}
Using (\ref{eq:LQH}) together with the fact that $H(\vtheta | \vtheta') \geq H(\vtheta' | \vtheta')$ 
(by applying Gibb's inequality) we get the following fundamental property of any EM algorithm: 
%
\begin{equation} \label{eq:dQdL}
L(\vtheta) - L(\vtheta') \geq Q(\vtheta | \vtheta') - Q(\vtheta' | \vtheta')  \quad\forall \vtheta,\vtheta' \in \calc. 
\end{equation}
The above inequality means that if $\vtheta$ is chosen to improve $Q(\vtheta | \vtheta')$ w.r.t. 
$Q(\vtheta' | \vtheta')$, then     $L(\vtheta)$ is also improved w.r.t. $L(\vtheta')$
by at least the same amount. Therefore, $L(\vtheta)$ monotonically grows 
by iterating between two steps: the \emph{expectation} in which $Q(\vtheta | \vtheta')$ is evaluated, and 
the \emph{maximization} which computes a $\vtheta$ that maximizes $Q(\vtheta | \vtheta')$. 
Note that, since $L(.)$ is bounded from above  by $0$,  by the monotone convergence theorem it must converge. 
Table \ref{tab:emnotations} summarizes the above notations and other ones used in Sections \ref{sec:revisiting} and  \ref{sec:em_conv}. 
\begin{table}[t]\caption{Notations used in Sections \ref{sec:preliminaries}--\ref{sec:em_conv}}
\begin{tabular}{r p{0.35 \textwidth}}
\toprule
$\calc$                  & the parameter space\\[0.3ex]
$\vtheta, \vtheta', \vtheta^t$ & elements of $\calc$. $\vtheta^t$ is the parameter 
                                                estimate at time $t$.\\[0.3ex]
$\cals$ & the space of hidden data. \\[0.3ex]
$S$ & random var. representing the hidden data, ranging on $\cals$.\\[0.3ex]
$\calo, \calz$ &   spaces of observed data.\\[0.3ex] 
$O,Z$ & random vars representing  observed data, ranging on $\calo,\calz$ respectively.\\[0.3ex]
$o,z$ & elements of $\calo,\calz$ denoting the observed data.\\[0.3ex] 
$\calx$ & the space of input data\\[0.3ex]
$X, X^i$ & random vars representing the input data, ranging on $\calx$.\\[0.3ex]
$\theta_x$ &probability of $x\in \calx$ based on distribution $\vtheta$.   \\[0.3ex]
$A: \calx \to \calz$ & obfuscation mechanism.   \\[0.3ex]
$a_{x z}$ & probability that $A$ yields  $z$ from  $x$. \\[0.3ex]
$\vq$ & empirical distribution on   $\calz$. \\[0.3ex]
$q_z$ & probability of    $z\in\calz$ according to $\vq$. \\[0.3ex]
$L(\vtheta)$ & the log-likelihood of  $\vtheta$ w.r.t. the observed 
                        data.\\[0.3ex]
$Q(\vtheta | \vtheta')$ & expected complete-data log-likelihood of $\vtheta$.  
                                       \\[0.3ex]  
$H(\vtheta | \vtheta')$ & difference between $L(\vtheta)$ and $Q(\vtheta | \vtheta')$ (cfr. \eqref{eq:LQH}).\\[0.3ex] 
$\mapem(\vtheta)$  &  the point-to-set map of the EM algorithm, mapping $\vtheta$ to a 
                                     subset of $\calc$.\\[0.3ex]
$I$      & A proper interval in $\reals$. \\[0.3ex]
$\valpha:I \!\to \calc$ &  differentiable curve  from $ I$ to $\calc$.\\[0.3ex]   
$\xsolnset$ &  the solution set of an EM algorithm, $\xsolnset \subset \calc$.\\
\bottomrule
\end{tabular}
\label{tab:emnotations}
\end{table}
\subsection{Iterative Bayesian Update procedure}\label{sec:ibu}
The iterative Bayesian update (IBU) \cite{Agrawal:01:PDS}
\footnote{This algorithm was first called `EM-reconstruction' in \cite{Agrawal:01:PDS} and 
afterwards was referred to as IBU in \cite{Agrawal:05:ICMD} and recent publications.}  is an instance of the EM method.
Consider a set of input data represented by the i.i.d. (independent and identically distributed) random variables 
$X^1,X^2,\dots, X^n$, ranging on a space $\calx$ and with distribution $\vtheta$ on $\calx$. Let $\theta_x$ be the probability of $x\in \calx$.    
Suppose that the value of every $X^i$ is independently obfuscated by a given mechanism $A:\calx \to \calz$ to yield a noisy observable $z^i\in\calz$ as shown in Fig.~\ref{fig:local_priv_model_ibu}. 
Given the observations $z^i$ for 
$i=1,2,\dots,n$, the IBU approximates a MLE for them as 
follows. Let $\vq=(q_z: z\in \calz)$ be the empirical distribution on $\calz$, where $q_z$ is 
the number of times $z$ is observed divided by $n$. Let  $a_{x z}$ be the probability that $A$ yields  $z$ from  $x$. The IBU starts with a full-support 
distribution $\vtheta^0$ on $\calx$  (e.g., the uniform distribution), and iteratively produces new distributions by the update rule 
\begin{eqnarray}\label{eq:ibu1}
\theta^{t+1}_x &=& \sum_{z\in \calz} q_z\, \frac {\theta^t_x a_{x z}}{\sum_{u \in \calx} \theta^t_u a_{u z}} \quad \forall x\in \calx.
\end{eqnarray}
%
\begin{figure}
\center
\includegraphics{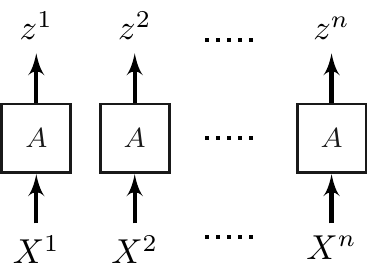}
\caption{The obfuscation model on which IBU is defined. User inputs are $X^1,X^2, \dots, X^n$ taking values in $\calx$. Obfuscating them, by the mechanism $A$,    
yields the noisy data $z^1,z^2, \dots, z^n \in \calz$.}
\label{fig:local_priv_model_ibu}
\end{figure}
The convergence of this algorithm has been studied in \cite{Agrawal:01:PDS},  
 concluding that the IBU converges to a unique MLE which is stationary 
in the space of distributions. In the following section we revisit these claims 
and we prove, by  counterexamples,  that they do not always hold.  
\section{Revisiting the properties of IBU and MLEs}\label{sec:revisiting}
\subsection{The MLE may not be unique}\label{sec:revisiting_ibu_non_unique}
Let $\calx=\{1,2,3\}$, and  consider the obfuscation mechanism $A: \calx \to \calx$ represented by the following stochastic matrix 
(where the inputs are on the rows and the outputs are on the columns. For instance, $a_{11}=    \nicefrac{1}{2}$, $a_{12}=    \nicefrac{1}{3}$, etc.).
\begin{eqnarray}\label{eq:mech}
A &=&\begin{bmatrix}
    \nicefrac{1}{2} &     \nicefrac{1}{3}&     \nicefrac{1}{6}  \\
        \nicefrac{1}{3} &     \nicefrac{1}{3} &     \nicefrac{1}{3}  \\
        \nicefrac{1}{6} &     \nicefrac{1}{3} &     \nicefrac{1}{2}
\end{bmatrix}.
\end{eqnarray} 
Assume  that three users   apply this mechanism and we observe $Z^1=1,Z^2=2,Z^3=3$. 
Given a generic distribution $\vtheta=(\theta_1,\theta_2,\theta_3)$ on $\calx$, 
the log-likelihood is  
$L(\vtheta)=\log P(Z^1= 1, Z^2=2, Z^3 = 3 \mid \vtheta) = \log P(Z^1= 1\mid \vtheta) +\log P(Z^2=2\mid \vtheta) +\log P( Z^3 = 3 \mid \vtheta) =  \log(\nicefrac{\theta_1}{2} + \nicefrac{\theta_2}{3} + \nicefrac{\theta_3}{6}) +\log(\nicefrac{1}{3}) + \log(\nicefrac{\theta_1}{6}+\nicefrac{\theta_2}{3}+\nicefrac{\theta_3}{2})$. 
Fig.~\ref{fig:ml_non_unique_non_convergent_to_real} shows the plot of $L(\vtheta)$, where we  consider only the components $\theta_1, \theta_3$ of  $\vtheta$:  $\theta_2$ is redundant since $\theta_2= 1- \theta_1 - \theta_3$. As we can see, there are infinitely many MLEs, because  every 
$\vtheta$ with $\theta_1=\theta_3$ is a maximum of $L(\vtheta)$, including for instance $(0,1,0)$ and $(\nicefrac{1}{2},0,\nicefrac{1}{2})$. 
This  contradicts   Proposition 4.1 and Theorem 4.4 of \cite{Agrawal:01:PDS} and also   a similar claim in 
\cite{Murakami:18:POPETS} (Section 3.2) that was based on the above results. The implications of this counterexample lead  
to another refutation as shown in the following.
%
\begin{figure}
\includegraphics[width=0.40\textwidth]{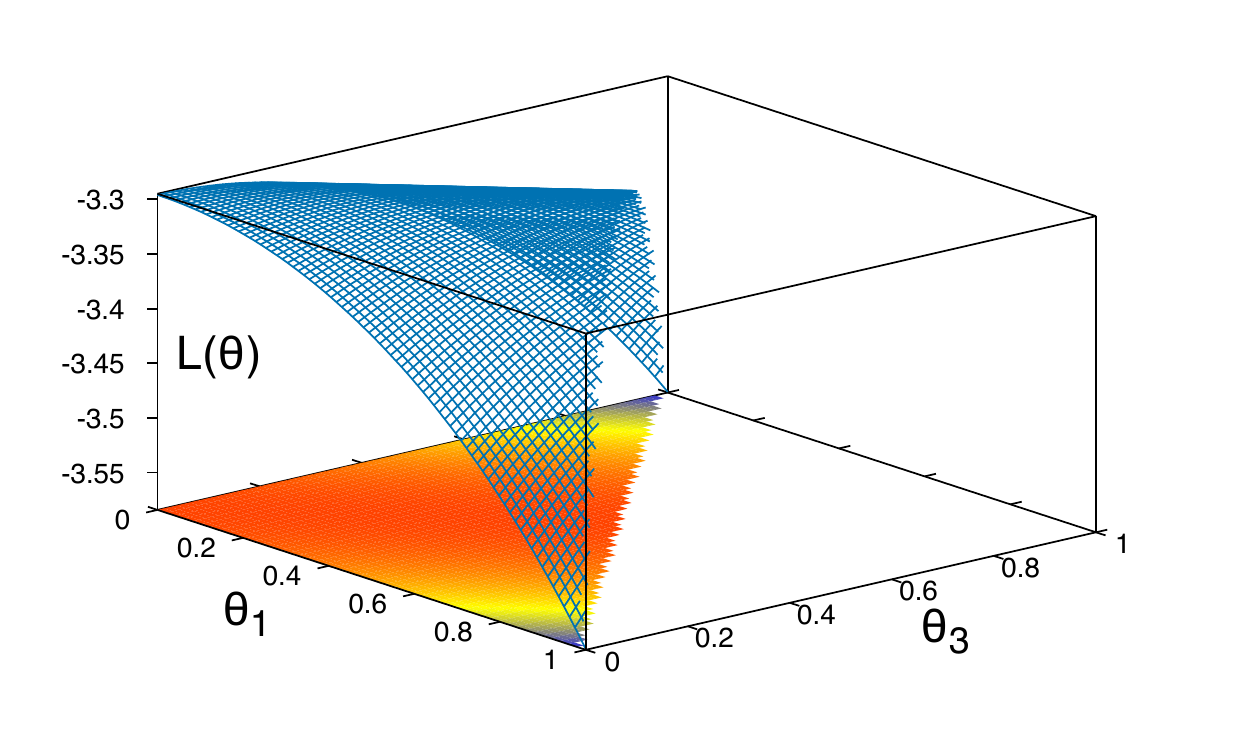}
\caption{The log-likelihood function $L(\vtheta)$ for the distributions
     on $\calx=\{1,2,3\}$ given the mechanism $A$ and the observables $\{1,2,3\}$.}
\label{fig:ml_non_unique_non_convergent_to_real}
\end{figure}
%
%
%
\subsection{The IBU may not approximate the true distribution}\label{sec:not-approximate}
Given a mechanism $A$ and a distribution $\vtheta$ on $\calx$, as the number $n$ of input grows, 
the empirical distribution $\vq$ tends probabilistically to the true distribution induced on the output, which is $\vtheta A$.  
From this, \cite{Agrawal:01:PDS} deduces (Observation 4.1) that the result of the IBU approximates the true distribution  $\vtheta $ as $n\rightarrow \infty$.
If $A$ is not invertible, however, there  may be two different  $\vtheta$,  $\vtheta'$ such that 
$\vtheta A = \vtheta' A$, which means that $\vtheta$ and $\vtheta'$ cannot be statistically distinguished 
on the basis of the empirical distribution observed in output,  no matter how large $n$ is. 
The implication for the IBU is that, even in the ``optimal'' case that $\vq=\vtheta A$, 
where $\vtheta$  is the true distribution, the IBU may converge to a different  $\vtheta'$ 
if $\vtheta' A = \vq$ (both $\vtheta$ and $\vtheta'$ are   MLE of $\vq$). 
This contradicts   Observation 4.1 in \cite{Agrawal:01:PDS}. 

As an example, consider again the mechanism  $A$ defined in \eqref{eq:mech} 
and consider the  empirical distribution  $\vq = (\nicefrac{1}{3},\nicefrac{1}{3},\nicefrac{1}{3})$. 
All distributions in the set $\Theta= \{ \vtheta : \theta_1 = \theta_3\}$ satisfy $\vtheta A = \vq$. 
From this it is easy to see that all distributions in $\Theta$  are fixed points of the transformation 
\eqref{eq:ibu1}, i.e.,  $\vtheta^{t+1} = \vtheta^t$ for all  $\vtheta^t \in \Theta$.
This means that the IBU can  converge to any $\vtheta'\in\Theta$ (depending on the starting distribution), 
instead of the true  $\vtheta$. 
For instance, $\vtheta$ could be $(0,1,0)$ and $\vtheta'$ could be  $(\nicefrac{1}{2},0,\nicefrac{1}{2})$. 

\subsection{The MLE may not be stationary}\label{sec:revisiting_ibu_non_stationary}
Theorem 4.3 in \cite{Agrawal:01:PDS} relies on MLE being stationary, i.e., that the derivatives of $L(\vtheta)$ 
have value $0$ on the MLE, and such property is (erroneously) claimed to hold. 
The following example shows that this is not true in general. 
Let again $\calx=\{1,2,3\}$, and consider a $3$-RR obfuscation mechanism 
$A': \calx \to \calx$ defined by the following matrix.
\begin{eqnarray}\label{eq:stoch2}
A' &=&\begin{bmatrix}
    \nicefrac{1}{2} &  \nicefrac{1}{4} &\nicefrac{1}{4}   \\
  \nicefrac{1}{4} &  \nicefrac{1}{2} & \nicefrac{1}{4}   \\
    \nicefrac{1}{4}  & \nicefrac{1}{4}  &  \nicefrac{1}{2} 
\end{bmatrix}.
\end{eqnarray}
Assume that four users apply this mechanism and we get the observables $Z^1=1,Z^2=2,Z^3 = 2$ and $Z^4=3$. 
The log-likelihood is $L(\vtheta)=\log(\nicefrac{\theta_1}{2}+\nicefrac{\theta_2}{4}+ \nicefrac{\theta_3}{4}) + 2\log(\nicefrac{\theta_1}{4}+\nicefrac{\theta_2}{2}+ \nicefrac{\theta_3}{4}) + \log(\nicefrac{\theta_1}{4}+\nicefrac{\theta_2}{4}+ \nicefrac{\theta_3}{2}) $.
The plot of  $L(\vtheta)$ (Fig.~\ref{fig:ml_non_interior_non_stationary}) shows that the  unique MLE is $(0,1,0)$.  
On $(0,1,0)$ the partial derivatives are  
$\partial L(\vtheta) /\partial \theta_1 = \partial L(\vtheta) /\partial \theta_3 = - 0.5 $, hence $(0,1,0)$ is not stationary.
We also note that the likelihood surface has no stationary points at all.  
\begin{figure}
\includegraphics[width=0.40\textwidth]{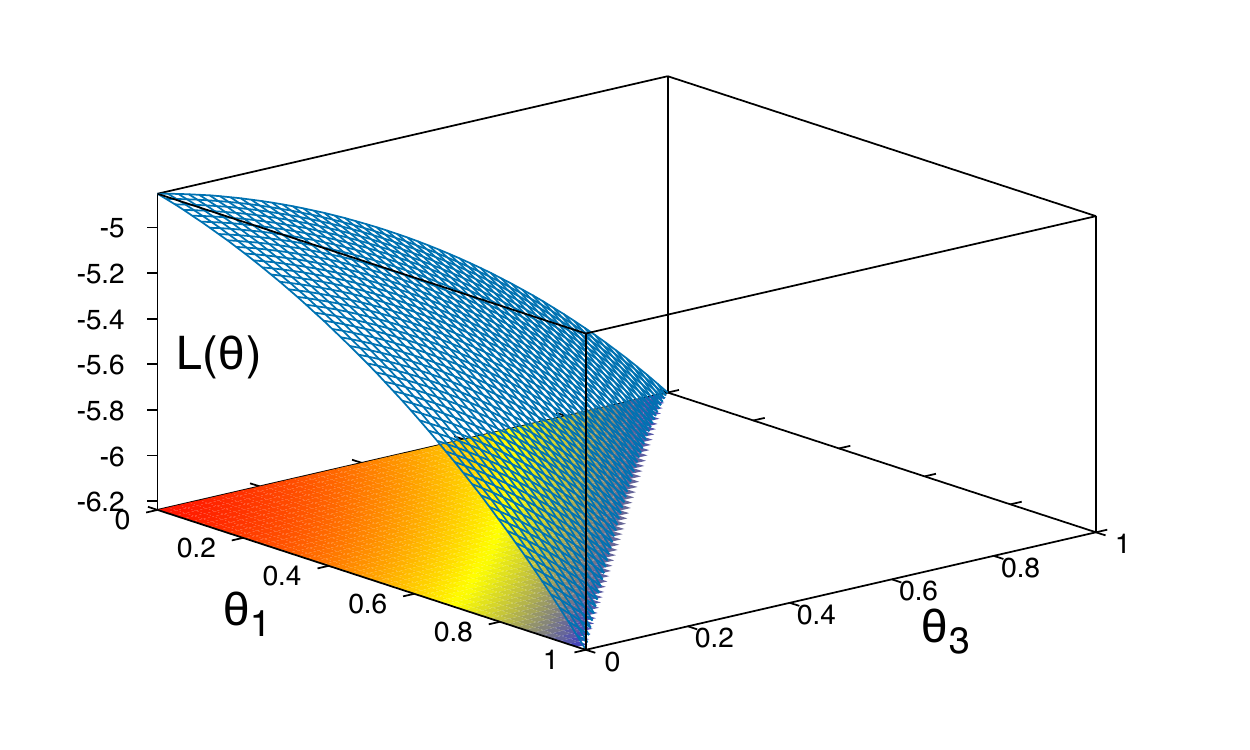}
\caption{The log-likelihood function $L(\vtheta)$ for the distributions 
     on $\calx=\{1,2,3\}$ given the mechanism matrix $A'$ and observables $\{1,2,2,3\}$.}
\label{fig:ml_non_interior_non_stationary}
\end{figure}
\subsection{The IBU may not converge to an interior distribution}
The above misconception in \cite{Agrawal:01:PDS} follows from using the EM properties derived  
in \cite{Wu:83:jastat} while 
ignoring an assumption underlying these properties. That is the limit of the EM updates is interior 
in the parameter space (the probability simplex in our case) (See Section 2.1 in \cite{Wu:83:jastat}). 
The problem is that this assumption may be violated when the IBU is applied to the outcomes of an 
obfuscation mechanism. For example, applying IBU to $A'$ defined in (\ref{eq:stoch2}) and the observations $\{1,2,2,3\}$ yields 
the distribution $(0,1,0)$ which is clearly not interior. Note from Fig.~\ref{fig:ml_non_interior_non_stationary} 
that this distribution is the required MLE which we prove later in this paper (Theorem \ref{thm:alg_conv}).

\section{Full convergence of EM algorithms}\label{sec:em_conv}
In this section we establish a convergence result of the EM algorithms that does not rely on the hypothesis that 
the limit points of the generated sequence $(\vtheta^t)_{t\in\naturals}$ are interior in $\calc$. 
This effort is motivated by the fact that, as discussed in the introduction, such hypothesis is usually not satisfied by typical datasets, e.g. Gowalla, in the contexts of quantitative information flow and privacy.

In the following we will assume, like  \cite{Dempster:77:jrstat,Wu:83:jastat}, that the parameter space $\calc$ 
is a subset of the $m$-dimensional Euclidean space. However we abstract from the assumption made in 
these papers that the limit points are interior to $\calc$, and describe an extended solution set $\xsolnset$ 
to which the EM algorithm converges. We start by defining a curve in $\calc$ as a differentiable function 
$\valpha: I \to \calc$, where $I$ is a proper (i.e. non singleton) interval in $\reals$. Then we can define 
stationary and local maxima for the log-likelihood function along a curve as follows

\begin{definition}[Stationary point along a curve]
Consider a parameter space $\calc \subset \reals^m$, and a curve $\valpha: I \to \calc$. 
Let $\vtheta'$ be a point lying on $\valpha$, i.e. $\vtheta' = \valpha(\tau')$ 
for some $\tau' \in I$. Then $\vtheta'$ is stationary for the log-likelihood $L(.)$ along 
$\valpha$ if ${d L(\valpha(\tau))}/{d \tau} =0$ at $\tau'$. 
\end{definition}

\begin{definition}[Local maxima along a curve]
Consider a parameter space $\calc \subset \reals^m$, and a curve $\valpha: I \to \calc$. 
Let $\vtheta'$ be a point lying on $\valpha$, i.e. $\vtheta' = \valpha(\tau')$ 
for some $\tau' \in I$. Then $\vtheta'$ is a local maximizer for the 
log-likelihood $L(.)$ along $\valpha$ if there is $\delta>0$ such that for every $\tau \in I$ satisfying 
$|\tau-\tau'|< \delta$ it holds $L(\valpha(\tau)) \leq L(\valpha(\tau'))$.
\end{definition}

Now we can define our extended solution set $\xsolnset$. 
\begin{definition}[Extended solution set]\label{def:ext_solution_set}
Consider a parameter space $\calc\subset \reals^m$. Then $\xsolnset$ is the set of 
all points $\vtheta'$ such that for each curve $\valpha$ on which $\vtheta'$ lies, either $\vtheta'$ is 
stationary or it a local maximum for the log-likelihood $L(.)$ along $\valpha$.
\end{definition} 
Clearly the extended solution set $\xsolnset$ is larger than the traditional set of stationary points, 
for which the derivatives must be $0$ along every curve. In fact the extended set allows each of its 
members to be,  along each curve, either stationary or a local maximizer (or both). 

We  show that the limits of any sequence 
generated by an EM algorithm must lie in $\xsolnset$, and the log-likelihood   converges to 
a value corresponding to an element in $\xsolnset$. 

\begin{restatable}[EM full convergence]{theorem}{EMconvergence}
\label{thm:conv:em}
Consider a parameter space $\calc \subset \reals^m$, and an EM algorithm in which 
$L(.)$ is continuous on $\calc$, and $Q(\vtheta | \vtheta')$ is continuous with respect to $\vtheta$ and $\vtheta'$. 
Let $(\vtheta^t)_{t\in\naturals}$ be any generated sequence that is contained in a compact subset of $\calc$.   
Then all the limit points of this sequence are in $\xsolnset$ 
and $L(\vtheta^t)$ converges monotonically to $L(\hat\vtheta)$ for some $\hat\vtheta \in \xsolnset$.
\end{restatable}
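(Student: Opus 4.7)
The plan is to decouple the statement into three sub-claims and prove them in order: (i) the sequence $L(\vtheta^t)$ converges monotonically to some limit $L^*$; (ii) every accumulation point $\hat\vtheta$ of $(\vtheta^t)_{t\in\naturals}$ realizes this limit, i.e.\ $L(\hat\vtheta)=L^*$; and (iii) every such accumulation point lies in $\xsolnset$. Together these immediately yield the theorem.

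For (i), I would invoke the fundamental EM inequality~\eqref{eq:dQdL}: since the algorithm selects $\vtheta^{t+1}$ to maximize $Q(\cdot|\vtheta^t)$, we have $Q(\vtheta^{t+1}|\vtheta^t)\geq Q(\vtheta^t|\vtheta^t)$ and hence $L(\vtheta^{t+1})\geq L(\vtheta^t)$. Because the sequence lies in a compact subset of $\calc$ on which the continuous function $L$ is bounded above, the monotone convergence theorem delivers a finite limit $L^*$. Claim~(ii) is then immediate from continuity of $L$ together with the existence of accumulation points guaranteed by compactness.

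The heart of the proof is (iii), and I would pass through an intermediate lemma: every accumulation point $\hat\vtheta$ is a \emph{global} maximizer of $Q(\cdot|\hat\vtheta)$ on $\calc$. Given a subsequence $\vtheta^{t_k}\to\hat\vtheta$, compactness allows a further subsequence with $\vtheta^{t_k+1}\to\hat\vtheta^{*}$. Taking limits in $Q(\vtheta^{t_k+1}|\vtheta^{t_k})\geq Q(\vtheta|\vtheta^{t_k})$, and using joint continuity of $Q$, yields $Q(\hat\vtheta^{*}|\hat\vtheta)\geq Q(\vtheta|\hat\vtheta)$ for every $\vtheta\in\calc$. Since $L(\hat\vtheta^{*})=L^*=L(\hat\vtheta)$, the EM inequality then forces $Q(\hat\vtheta^{*}|\hat\vtheta)=Q(\hat\vtheta|\hat\vtheta)$, so $\hat\vtheta$ itself attains the global maximum of $Q(\cdot|\hat\vtheta)$.

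To finish, I would fix any differentiable curve $\valpha : I \to \calc$ with $\valpha(\tau')=\hat\vtheta$. The global-maximization property makes $\tau'$ a global (hence local) maximizer of $f(\tau):=Q(\valpha(\tau)|\hat\vtheta)$, while Gibbs' inequality makes $\tau'$ a global (hence local) minimizer of $g(\tau):=H(\valpha(\tau)|\hat\vtheta)$. Combined with the decomposition $L\circ\valpha=f+g$ coming from~\eqref{eq:LQH}, this extremal alignment constrains $L\circ\valpha$ at $\tau'$: when the relevant derivatives exist at an interior $\tau'$, Fermat's criterion applied to the max of $f$ and the min of $g$ makes both vanish, so $\hat\vtheta$ is stationary along $\valpha$; in the remaining cases the direction of the curve is matched against the max/min behavior to certify that $\hat\vtheta$ is a local maximizer of $L$ along $\valpha$, invoking the second clause of Definition~\ref{def:ext_solution_set}. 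I expect this last step to be the main obstacle: handling parameters $\tau'$ on the boundary of $I$, or accumulation points on the boundary of $\calc$, is precisely where Wu's interior argument breaks down and where the novelty of using $\xsolnset$ in place of the classical stationary set has to do its work.
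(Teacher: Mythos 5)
Your overall route is genuinely different from the paper's: the paper verifies the hypotheses of Zangwill's global convergence theorem (Theorem~\ref{thm:gct}) -- closedness of the EM point-to-set map plus strict ascent of $L$ off $\xsolnset$ -- whereas you bypass Zangwill entirely with a direct subsequence argument. Your parts (i) and (ii) are fine, and your intermediate lemma is correct and nicely executed: extracting $\vtheta^{t_k}\to\hat\vtheta$, $\vtheta^{t_k+1}\to\hat\vtheta^{*}$, passing to the limit in $Q(\vtheta^{t_k+1}|\vtheta^{t_k})\geq Q(\vtheta|\vtheta^{t_k})$ via joint continuity, and squeezing with \eqref{eq:dQdL} and $L(\hat\vtheta^{*})=L^{*}=L(\hat\vtheta)$ does show that every accumulation point $\hat\vtheta$ globally maximizes $Q(\cdot|\hat\vtheta)$.

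The gap is in your last step, and it sits exactly where the theorem is supposed to earn its keep: the boundary cases. Knowing only that $\tau'$ is a global maximizer of $f(\tau)=Q(\valpha(\tau)|\hat\vtheta)$ and a global minimizer of $g(\tau)=H(\valpha(\tau)|\hat\vtheta)$ does \emph{not} constrain $L\circ\valpha=f+g$ when $\tau'$ is an endpoint of $I$: take $f(\tau)=-\tau$ and $g(\tau)=2\tau$ on $I=[0,1)$; then $f$ has its maximum and $g$ its minimum at $\tau'=0$, yet $f+g$ is strictly increasing, so $\tau'$ is neither stationary nor a local maximizer. So ``matching the direction of the curve against the max/min behavior'' cannot work from these two facts alone. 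What is missing is the sharper statement that the paper isolates as Lemma~\ref{prop:LQ-derivatives}: because $\sum_{s}P(S=s|O=o;\valpha(\tau))\equiv 1$, the (possibly one-sided) derivative of $g$ at $\tau'$ is exactly $0$, not merely $\geq 0$, equivalently $\tfrac{d}{d\tau}L(\valpha(\tau))\big|_{\tau'}=\tfrac{d}{d\tau}Q(\valpha(\tau)|\hat\vtheta)\big|_{\tau'}$. With that identity your argument closes cleanly in all cases: at an interior $\tau'$ Fermat gives $f'(\tau')=0$, hence stationarity; at an endpoint the one-sided derivative of $L\circ\valpha$ equals that of $f$, which is $\leq 0$ into the interval because $\tau'$ maximizes $f$, so $\hat\vtheta$ is either stationary or a local maximizer along $\valpha$ as required by Definition~\ref{def:ext_solution_set}. (Like the paper, you also tacitly assume $L\circ\valpha$ and $Q(\valpha(\cdot)|\hat\vtheta)$ are differentiable at $\tau'$; that is implicit in the notion of stationarity along a curve, so it is not an additional gap.) Once the lemma is added, your proof is correct and arguably more self-contained than the paper's Zangwill-based one, at the price of not reusing the classical convergence machinery.
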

We remark that Theorem \ref{thm:conv:em} extends Theorem 2 of \cite{Wu:83:jastat} which 
describes the convergence to stationary points based on an assumption that 
the limit points of the EM sequence are interior in the $\calc$.
Our theorem is stronger since it confirms the 
convergence of the algorithm to the likelihood maximizers whether they are stationary in the interior or lying 
on the border of $\calc$. 
In general there is no guarantee that they are also global maximizers (i.e., MLEs), but if  
$\calc$ is convex and the likelihood function is concave, then they are MLEs. 
\begin{restatable}{theorem}{convex}
\label{thm:convex_space}
Let $\calc$ be a convex parameter space, and $L(.)$ be a concave log-likelihood function over $\calc$. 
Then the extended solution set $\xsolnset$ is exactly the set of global maximizers of $L(.)$.
\end{restatable}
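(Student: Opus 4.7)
The plan is to prove the stated equality by two inclusions. The easy direction ``global maximizer $\Rightarrow$ member of $\xsolnset$'' is essentially immediate from Definition \ref{def:ext_solution_set}: if $L(\vtheta^*) \geq L(\vtheta)$ for every $\vtheta \in \calc$, then on any curve $\valpha : I \to \calc$ passing through $\vtheta^*$ at some $\tau^*$ we trivially have $L(\valpha(\tau)) \leq L(\vtheta^*)$ for all $\tau \in I$, so $\vtheta^*$ is a local maximizer of $L$ along $\valpha$ (any $\delta > 0$ witnesses this). Hence $\vtheta^* \in \xsolnset$.

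For the reverse inclusion I would argue by contradiction, exploiting the convexity of $\calc$ to form an affine curve and the concavity of $L$ to compare values along it. Suppose $\vtheta' \in \xsolnset$ and yet $L(\vtheta^*) > L(\vtheta')$ for some $\vtheta^* \in \calc$. Define the line segment $\valpha : [0,1] \to \calc$ by $\valpha(\tau) = (1-\tau)\vtheta' + \tau \vtheta^*$; this lies in $\calc$ by convexity, is differentiable, and passes through $\vtheta'$ at $\tau' = 0$. Concavity of $L$ then yields
\[
L(\valpha(\tau)) \;\geq\; (1-\tau)\, L(\vtheta') + \tau\, L(\vtheta^*) \;>\; L(\vtheta') \qquad \forall \tau \in (0,1].
\]
From this I conclude two things. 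First, $\vtheta'$ is \emph{not} a local maximizer of $L$ along $\valpha$, because in every right-neighborhood of $\tau' = 0$ the value of $L \circ \valpha$ strictly exceeds $L(\vtheta')$. Second, dividing by $\tau$ and passing to the limit gives
\[
\left.\frac{d\, L(\valpha(\tau))}{d\tau}\right|_{\tau = 0^+} \;\geq\; L(\vtheta^*) - L(\vtheta') \;>\; 0,
\]
so $\vtheta'$ is not stationary along $\valpha$ either. This contradicts the assumption $\vtheta' \in \xsolnset$, and hence $\vtheta'$ must be a global maximizer.

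The main subtlety I foresee concerns the reading of ``${d L(\valpha(\tau))}/{d\tau} = 0$ at $\tau'$'' when $\tau' = 0$ is the endpoint of $I$, a situation that is forced whenever $\vtheta'$ lies on the boundary of $\calc$ (a two-sided extension of the segment toward $\vtheta^*$ may exit $\calc$). The natural and consistent convention is to use the appropriate one-sided derivative, and for the concave function $L \circ \valpha$ on $[0,1]$ this right derivative at $0$ always exists in $\reals \cup \{+\infty\}$. Under either convention the displayed derivative is strictly positive, so the stationarity condition fails and the argument requires no separate treatment of interior versus boundary points of $\calc$.
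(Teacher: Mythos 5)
Your proposal is correct and follows essentially the same route as the paper's proof: both directions use Definition~\ref{def:ext_solution_set} directly for the easy inclusion, and for the converse both take the line segment $\valpha(\tau)=(1-\tau)\vtheta'+\tau\vtheta^*$ (valid by convexity of $\calc$) and use concavity of $L$ to show $\vtheta'$ is neither stationary nor a local maximizer along it, contradicting membership in $\xsolnset$. Your remark on the one-sided derivative at $\tau'=0$ matches the paper's implicit convention (its limit $\tau\to 0$ is taken within $[0,1]$), so there is no substantive difference.
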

The next section introduces our local privacy model LPM and shows that the conditions of 
Theorem \ref{thm:convex_space} are satisfied for it, so 
that the EM algorithm yields always a global maximizer. 

\section{Local privacy model}\label{sec:privacy_model}
\begin{table}[tbp]\caption{Additional notation used in Sections \ref{sec:privacy_model} -- \ref{sec:unique}} 
\begin{tabular}{r p{0.35\textwidth}}
\toprule
$Z^i = (Z^i_j)$        & a multivariate random vector representing the output at index $i$. \\ 
$\vz^i = (z^i_j)$     & the output vector at index $i$ (i.e. the value of $Z^i$), $z^i_j$ is 
                                 the $j$-th observable in the vector $\vz^i$.  \\ 
$k_i$          & the length of the vector $\vz^i$, i.e. the number of observables at index $i$. \\
$A^{i j}$      & the matrix of the mechanism used to obfuscate input $X^i$ to yield the value of $Z^i_j$.\\
$\mg=[g_{x i}]$     & the outputs probability matrix, with $g_{x i}$ being the probability of the observed output at 
                                 index $i$ given that $X^i=x$. \\
$\mg(\cali')$  & a matrix consisting only the columns of $\mg$ that correspond 
                                        to the outputs in $\cali'$ (Section \ref{sec:unique}).\\                                  
\bottomrule
\end{tabular}
\label{tab:notations}
\end{table}
Let $\calx$ be the space of sensitive data of the users. We call the datum of every user an input 
to the system. The inputs are assumed to be independent and drawn from the same probability distribution 
$\vtheta = (\theta_x : \forall x \in \calx)$, hence represented by  i.i.d.
random variables $X^i \sim \vtheta$ where $i=1,2,\dots, n$ and $n$ is the number of inputs. For 
convenience we will denote the set of indices $\{1,2,\dots,n\}$ by $\cali$. 
We assume that every input $X^i$ is obfuscated by $k_i$ mechanisms to yield a multivariate random variable 
$Z^i = (Z^i_1, Z^i_2, \dots, Z^i_{k_i})$, where the elementary variables $Z^i_j$ may have different domains.
We also denote the observed realization of $Z^i$ by $\vz^i = (z^i_1, z^i_2, \dots, z^i_{k_i})$. The mechanisms 
used to produce the elements of $\vz^i$ may vary in their definition and level of privacy. For example if $\calx$ 
is a set of locations, then an input $i$ may yield the output $\vz^i = (z^i_1, z^i_2)$ by running first a Laplace 
mechanism that generates the noisy location $z^i_1$ and then a cloaking mechanism that produces $z^i_2$. 
We assume that every obfuscation mechanism used in the process is known. This allows to compute the probability 
$P(Z^i = \vz^i | X^i=x)$ for every $x \in \calx$. 
\begin{definition}[Outputs probability matrix]\label{def:probability_matrix}
The outputs probability matrix, $\mg=[g_{xi}]$, associates to every 
$x \in \calx$, and every $i \in \cali$ the conditional 
probability $g_{x i}$ of yielding the observed output value at $i$ when 
$x$ is the value of the corresponding input, that is 
$
g_{x i} = P(Z^i = \vz^i | X^i = x). 
$
\end{definition}
Note that $\mg$ is not necessarily square since it consists of $|\calx|$ rows and $n$ 
columns. It is not stochastic either, since its individual columns correspond 
to the indices $\cali$ at which the outputs are drawn using possibly different mechanisms, and from possibly different domains.  
$\mg$ is constructed as follows: 
Let $A^{i j} = [a^{i j}_{x z}]$ be the matrix of the mechanism that was applied by user $i$ to report his 
$j$th observable $z^i_j$. Then
$
g_{x i} = \prod_{j=1}^{k_i} a^{i j}_{x z^i_j}.
$
Table \ref{tab:notations} summarizes the above notations together with others that
will be used later.  Fig.~\ref{fig:local_priv_model} illustrates our local privacy 
model (LPM). 
\begin{figure}
\center
\includegraphics{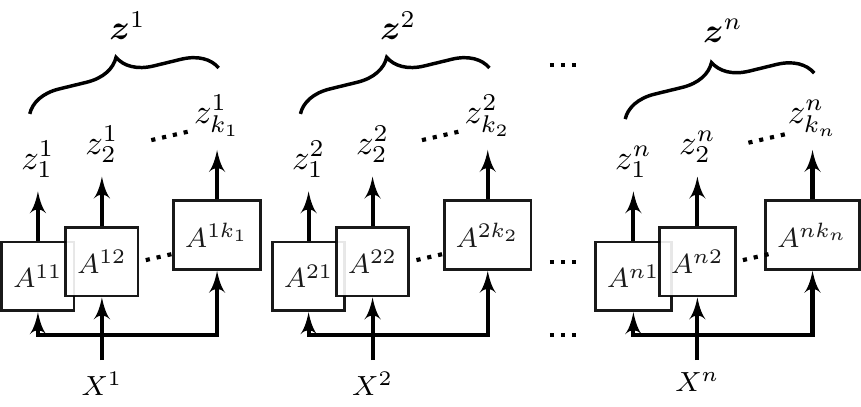}
\caption{The local privacy model. User inputs are $X^1,\dots, X^n$, and corresponding output 
vectors are $\vz^1,\dots,\vz^n$.}
\label{fig:local_priv_model}
\end{figure}
%
%
%

\subsection{The space of distributions and the likelihood function}
In the LPM, we require to find the distribution $\vtheta$ on $\calx$ that maximizes the 
log-likelihood function $L(.)$. Therefore, we define our space of distributions to include 
only those having a finite log-likelihood.
\begin{equation}\label{eq:space_of_distributions}
\calc = \{ \vtheta\in\reals^{|\calx|} : \theta_x\geq 0\,\forall x \in \calx, 
\sum_{x\in \calx} \theta_x =1, L(\vtheta) > -\infty \}. 
\end{equation}
The condition $L(\vtheta)>-\infty$ assures the continuity of $L(.)$ everywhere in $\calc$. 
This is essential for the convergence of the EM algorithm as stated by Theorem \ref{thm:conv:em}. 
The log-likelihood function $L(\vtheta)$ is the logarithm of the joint probability of observed 
output vectors assuming that every input follows the distribution $\vtheta$. 
Since the outputs at $i=1,2,\dots,n$ are independent, $L(\vtheta)$ can be written as 
$L(\vtheta)= \sum_{i=1}^n L_i(\vtheta)$, 
where $L_i(\vtheta)= \log \sum_{x \in \calx} \theta_x\, g_{x i}$ is the likelihood of $\vtheta$ with respect 
to the observed output at $i$. Therefore  
\begin{equation}\label{eq:L_lpm}
L(\vtheta) = \sum_{i=1}^n \log \sum_{x \in \calx} \theta_x\, g_{x i}. 
\end{equation}

A distribution $\hat\vtheta$ is a maximum likelihood estimator (MLE)
on $\calx$  if it maximizes the log-likelihood   $L(.)$, i.e. 
\begin{equation}\label{eq:ml}
\hat\vtheta \in \argmax_{\vtheta}  L(\vtheta).
\end{equation}
In general there may be more than one MLE, depending on the  matrix $\mg$. 
Therefore the right side of (\ref{eq:ml}) identifies the set of MLEs rather than 
a single distribution. 

\subsection{Evaluating an MLE using an EM algorithm}

In the following we derive an EM instance algorithm that evaluates a likelihood maximizer $\hat\vtheta$ over $\calx$. 
This algorithm starts with an initial distribution $\vtheta^0$ and then iteratively yields a sequence of estimators $(\vtheta^t)$, 
such that the log-likelihood $L(\vtheta^t)$ monotonically increases with $t$. 
We  now show how to derive the elements of this sequence. 
The expected complete-data likelihood $Q(.|.)$ can be written as  
\[
Q(\vtheta | \vtheta^t) = \E\left[ \log P(\land_{i=1}^n X^i, Z^i| \vtheta) | \land_{i=1}^n Z^i= \vz^i ; \vtheta^t \right].
\]
\begin{restatable}[E-step]{theorem}{estep}
\label{thm:estep}
The value of $Q(\vtheta | \vtheta^t)$ 
is given by
\begin{align}\label{eq:der:q}
Q(\vtheta | \vtheta^t) &=\sum_{x\in\calx} \psi_x(\vz, \vtheta^t) \log \theta_x + K(\vtheta^t)
\end{align}
where 
$\psi_x(\vz, \vtheta^t) = \sum_{i=1}^n P(X^i = x | Z^i = \vz^i ; \vtheta^t)$  
and $K(.)$ is a continuous function depending only on $\vtheta^t$.
\end{restatable}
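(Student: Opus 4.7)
The plan is to start by exploiting the independence of the indexed pairs $(X^i, Z^i)$ across $i \in \cali$, which follows from the LPM assumption that inputs are i.i.d.\ and each obfuscation is performed independently per user. Under this independence the joint log-probability decomposes as a sum, so that
\[
\log P\bigl(\land_{i=1}^n X^i, Z^i \mid \vtheta\bigr) \;=\; \sum_{i=1}^n \log P(X^i, Z^i \mid \vtheta),
\]
and the conditional expectation defining $Q(\vtheta\mid\vtheta^t)$ distributes over the sum. Each term $P(X^i=x, Z^i=\vz^i\mid\vtheta)$ equals $\theta_x\,g_{xi}$ by the definition of the outputs probability matrix and because $X^i \sim \vtheta$, so $\log P(X^i=x,Z^i=\vz^i\mid\vtheta) = \log\theta_x + \log g_{xi}$.

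Next I would take the expectation with respect to the hidden $X^i$ conditioned on $Z^i=\vz^i$ under the current estimate $\vtheta^t$. This gives, for each $i$,
\[
\sum_{x\in\calx} P(X^i=x \mid Z^i=\vz^i;\vtheta^t)\,\bigl[\log\theta_x + \log g_{xi}\bigr].
\]
Summing over $i$ and swapping the order of summation isolates the dependence on $\vtheta$ in the first term, yielding exactly the coefficient
$\psi_x(\vz,\vtheta^t) = \sum_{i=1}^n P(X^i=x \mid Z^i=\vz^i;\vtheta^t)$ in front of $\log\theta_x$. The residual term $K(\vtheta^t) = \sum_{i=1}^n\sum_{x\in\calx} P(X^i=x\mid Z^i=\vz^i;\vtheta^t)\,\log g_{xi}$ depends only on $\vtheta^t$ (the observations $\vz$ and the matrix $\mg$ are fixed data), which gives the claimed decomposition \eqref{eq:der:q}.

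Finally, I would verify the continuity claim for $K$. By Bayes' rule, $P(X^i=x\mid Z^i=\vz^i;\vtheta^t) = \theta^t_x g_{xi} / \sum_{u\in\calx}\theta^t_u g_{ui}$, which is a ratio of polynomials in the entries of $\vtheta^t$. On the space $\calc$ defined in \eqref{eq:space_of_distributions}, the constraint $L(\vtheta)>-\infty$ ensures that $\sum_{u}\theta^t_u g_{ui} > 0$ for every $i$, so each denominator is bounded away from zero as $\vtheta^t$ varies, and hence $K$ is continuous on $\calc$.

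The one mildly subtle step is the justification that $\calc$ actually keeps all relevant denominators strictly positive: the condition $L(\vtheta^t)>-\infty$ means $\log\sum_x\theta^t_x g_{xi} > -\infty$ for every $i$, which is precisely what is needed. Everything else is bookkeeping (decomposition via independence, linearity of expectation, and reordering a finite double sum), so I do not anticipate any real obstacle beyond being explicit about the independence assumptions implicit in the LPM.
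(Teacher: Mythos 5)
Your proposal is correct and follows essentially the same route as the paper's proof: decompose $Q$ over $i$ by independence, expand the conditional expectation over $x\in\calx$, factor $P(X^i=x,Z^i=\vz^i\mid\vtheta)=\theta_x g_{xi}$, and collect the $\log\theta_x$ terms into $\psi_x$ with the remainder absorbed into $K(\vtheta^t)$. Your explicit verification of the continuity of $K$ via Bayes' rule and the constraint $L(\vtheta^t)>-\infty$ is a small addition the paper leaves implicit, but it does not change the argument.
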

%
%
The authors of \cite{Agrawal:01:PDS} studied a special case of our setting, 
that is when the mechanisms are identical and every output vector $\vz^i$ is singleton. 
However their expression of $Q$ in \cite[Theorem 4.1]{Agrawal:01:PDS} is 
not an instance of ours. This is because they define $Q(\vtheta | \vtheta^t)$ 
  as
$\E\left[ \log P(\land_{i=1}^n X^i | \vtheta) | \land_{i=1}^n Z^i= \vz^i ; \vtheta^t \right]$,
which we believe is not consistent with the standard notion of complete-data log-likelihood 
presented in Section \ref{sec:em}.

Now the maximization step of the algorithm evaluates the 
new update $\vtheta^{t+1}$ to be the maximizer of $Q(. | \vtheta^t)$ with fixed $\vtheta^t$.  
The following theorem characterizes $\vtheta^{t+1}$. 
\begin{restatable}[M-step]{theorem}{mstep}
\label{thm:mstep}
Let $\vtheta^{t+1}$ be the value of $\vtheta$ that maximizes $Q(\vtheta | \vtheta^t)$ in the M-step.  
Then $\vtheta^{t+1}$ is given by
\begin{equation}\label{eq:mstep}
\vtheta^{t+1}_x = 1/n \sum_{i=1}^n P(X^i = x | Z^i = \vz^i ; \vtheta^t) \quad \forall x\in\calx. 
\end{equation}
\end{restatable}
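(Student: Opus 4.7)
The plan is to combine Theorem \ref{thm:estep} with a standard constrained optimization argument (Lagrange multipliers on the simplex). Since $K(\vtheta^t)$ does not depend on $\vtheta$, maximizing $Q(\vtheta \mid \vtheta^t)$ over $\vtheta \in \calc$ reduces to maximizing
\[
F(\vtheta) = \sum_{x \in \calx} \psi_x(\vz, \vtheta^t)\, \log \theta_x
\]
subject to the simplex constraints $\theta_x \geq 0$ and $\sum_{x \in \calx}\theta_x = 1$. Since each $\psi_x(\vz,\vtheta^t) \geq 0$ (being a sum of conditional probabilities) and $\log$ is concave, $F$ is concave in $\vtheta$, so any stationary point of the Lagrangian on the relative interior of the simplex will be the unique global maximizer.

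Next I would form the Lagrangian $\mathcal{L}(\vtheta, \lambda) = F(\vtheta) - \lambda\bigl(\sum_x \theta_x - 1\bigr)$ and set $\partial \mathcal{L}/\partial \theta_x = \psi_x(\vz, \vtheta^t)/\theta_x - \lambda = 0$, which gives $\theta_x = \psi_x(\vz, \vtheta^t)/\lambda$. Substituting into the simplex constraint yields $\lambda = \sum_{x\in\calx} \psi_x(\vz, \vtheta^t)$. The key computation is then to simplify this normalizing constant by swapping the order of summation:
\[
\sum_{x\in\calx} \psi_x(\vz, \vtheta^t) = \sum_{x\in\calx}\sum_{i=1}^n P(X^i = x \mid Z^i = \vz^i; \vtheta^t) = \sum_{i=1}^n 1 = n,
\]
since for each $i$, $P(\cdot \mid Z^i = \vz^i; \vtheta^t)$ is a probability distribution over $\calx$. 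Substituting $\lambda = n$ gives exactly the formula \eqref{eq:mstep}.

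The one place that needs care is the boundary of the simplex: if $\psi_x(\vz, \vtheta^t) = 0$ for some $x$, then $\log \theta_x$ is singular, but with the standard convention $0 \log 0 = 0$ the term simply drops out of $F$ and the optimum places $\theta_x = 0$ for such $x$, which is consistent with the proposed closed form. Finally, I would verify that the resulting $\vtheta^{t+1}$ lies in $\calc$ as defined by \eqref{eq:space_of_distributions}: it is a probability distribution by construction, and its log-likelihood is finite because the EM monotonicity property \eqref{eq:dQdL} guarantees $L(\vtheta^{t+1}) \geq L(\vtheta^t) > -\infty$. I expect no serious obstacle here; the heart of the argument is just the Lagrange computation and the observation that $\sum_x \psi_x = n$.
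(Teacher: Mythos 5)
Your proposal is correct and follows essentially the same route as the paper's proof: a Lagrange-multiplier computation on the simplex constraint, using Theorem~\ref{thm:estep} to reduce to maximizing $\sum_x \psi_x(\vz,\vtheta^t)\log\theta_x$ and then observing that $\sum_x \psi_x(\vz,\vtheta^t)=n$. Your additional remarks on concavity, the $0\log 0$ boundary case, and membership of $\vtheta^{t+1}$ in $\calc$ are sound refinements that the paper leaves implicit, but the core argument is identical.
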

In   light of the above theorem we note that it is important to start the algorithm with a fully-supported distribution 
$\vtheta^0$. In fact, if $\theta^0_x = 0$  \eqref{eq:mstep} 
implies that $\theta_x^t = 0$ for all $t>0$, which means entirely excluding $x$ from the estimation process. 
%
\subsection{Estimation algorithm}
Using the update statement in Theorem \ref{thm:mstep}, the general EM Algorithm \ref{alg:estimation} 
yields an estimator $\hat\vtheta$ for the hidden distribution over $\calx$. We   show 
in Section \ref{sec:conv_em_lpm} that this estimator is a MLE but may not be unique. 
The input data of this algorithm are the pairs $(z^i_j,A^{ij})$ for every $i,j$ 
where $z^i_j$ is the $j$th observable in the output $\vz^i$, and $A^{i j}$ is the matrix of the mechanism used to yield $z^i_j$. 
These data are then used to evaluate the probabilities $g_{xi}$, which are used to obtain $\vtheta^{t+1}$ based on Theorem 
\ref{thm:mstep}. 
\begin{algorithm}
\KwData{ 
$\left(  (z^i_j,A^{ij}=[a^{i j}_{x z}]): i \in \cali , 1 \leq j \leq k_i   \right)$ and $\delta>0$ \;
}
\KwResult{Approximate the MLE $\hat\vtheta$ in (\ref{eq:ml}).}
Compute: 
$g_{x i} = P(Z^i=\vz^i | X_i=x) = \prod_{j=1}^{k_i}  a^{i j}_{x z^i_j} \quad \forall i,x:  i \in \cali ,x \in \calx$\;
Set $t=0$ and $\vtheta^0$ to any distribution over $\calx$ with 
$\theta^0_x>0$ for every $x \in \calx$\; 
\Repeat{ $| L(\vtheta^t)  - L(\vtheta^{t-1}) | < \delta$ }{
Set $\theta^{t+1}_x = (1/n) \sum_{i=1}^n   \frac {\theta^t_x g_{x i}}{\sum_{u \in \calx} \theta^t_u g_{u i}} $\;
Set $t= t+1$ \;
}
\Return $\vtheta^t$
\vspace{3mm}
\caption{An EM algorithm for estimating the distribution over the hidden data}
\label{alg:estimation}
\end{algorithm}

The condition $\theta^0_x>0$ for all $x \in \calx$ is important not only to involve 
all the elements of $\calx$ in the estimation process as noted earlier, but also to 
guarantee that $L(\vtheta^0)>-\infty$. The latter condition together with the fact 
that $L(\vtheta^t)$ is increasing with $t$, ensures that 
the sequence of estimates $\vtheta^t$ generated by the algorithm is contained 
in a compact subset of $\calc$. 
\begin{restatable} {lemma}{compact}
\label{lemma:compact}
For any $L^-\in\reals$ (i.e.,  $L^->-\infty$), let $\calc' = \{\vtheta \in \calc : L(\vtheta) \geq L^- \}$.
Then $\calc'$ is a compact subset of $\calc$. 
\end{restatable}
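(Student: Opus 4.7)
The plan is to prove that $\calc'$ is closed and bounded in $\reals^{|\calx|}$, then invoke the Heine--Borel theorem. Boundedness is immediate, since every element of $\calc'$ lies in the probability simplex, which is bounded. The substantive part of the argument is closedness, and here the subtlety is that $\calc$ itself is \emph{not} closed in general: points on the boundary of the simplex may satisfy $L(\vtheta)=-\infty$ and therefore fail to belong to $\calc$. The lower-bound constraint $L(\vtheta)\geq L^-$ is precisely what prevents a sequence in $\calc'$ from accumulating on such bad boundary points.

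Concretely, let $(\vtheta^n)\subset \calc'$ converge to some $\vtheta^*\in\reals^{|\calx|}$. First, since the simplex $\{\vtheta\geq\vzeros,\sum_x \theta_x=1\}$ is closed, $\vtheta^*$ is a valid distribution on $\calx$. Next I rewrite the log-likelihood using (\ref{eq:L_lpm}) as $L(\vtheta)=\sum_{i=1}^n \log f_i(\vtheta)$ where $f_i(\vtheta)=\sum_{x\in\calx}\theta_x g_{xi}$ is linear (hence continuous) in $\vtheta$. Each $f_i$ is also bounded above on the simplex by $\max_{x}g_{xi}$, so each $\log f_i(\vtheta)$ is bounded above by a constant $M_i$ independent of $\vtheta$.

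The key step is to show $f_i(\vtheta^*)>0$ for every $i$. Suppose for contradiction that $f_{i_0}(\vtheta^*)=0$ for some $i_0$. Then by continuity $f_{i_0}(\vtheta^n)\to 0$, hence $\log f_{i_0}(\vtheta^n)\to -\infty$. Combined with the uniform upper bounds $\log f_i(\vtheta^n)\leq M_i$ on the remaining terms, this gives $L(\vtheta^n)\to -\infty$, contradicting $L(\vtheta^n)\geq L^-$. Therefore every $f_i(\vtheta^*)>0$, so $L$ is continuous in a neighbourhood of $\vtheta^*$; passing to the limit in $L(\vtheta^n)\geq L^-$ yields $L(\vtheta^*)\geq L^->-\infty$. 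This places $\vtheta^*$ in $\calc$ and then in $\calc'$, establishing closedness.

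The main (and essentially the only) obstacle is the one just described: $\calc$ is not closed as a subset of the simplex, so a generic limit of points in $\calc$ might escape to the $L=-\infty$ set. The trick is to exploit the asymmetry that each $\log f_i$ is bounded above but unbounded below, so the presence of even a single vanishing $f_i$ forces $L\to -\infty$, which the sublevel-set constraint rules out.
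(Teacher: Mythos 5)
Your proof is correct and takes essentially the same route as the paper: boundedness because $\calc'$ sits inside the probability simplex, and closedness via a convergent sequence in $\calc'$, concluding compactness by Heine--Borel (the paper invokes Bolzano--Weierstrass). The only difference is that you spell out why the limit point cannot fall on the set where $L=-\infty$ --- using the uniform upper bounds on each $\log f_i$ so that a single vanishing $f_i$ would force $L(\vtheta^n)\to-\infty$ --- whereas the paper simply appeals to continuity of $L(.)$ on $\calc$ at that step, so your write-up is, if anything, more careful at exactly the delicate point.
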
 
This property meets  the conditions in Theorem \ref{thm:conv:em},   
ensuring the convergence of $\vtheta^t$ to the extended solution set $\Gamma$. 
\subsection{Convergence of the estimation algorithm to a MLE}\label{sec:conv_em_lpm}
The objective of every EM algorithm is to yield the MLE 
for the hidden parameter. In this respect we recall that an EM algorithm in general 
may yield a stationary point which is not necessarily a global maximizer. However 
in the application of EM to our LPM, it turns out that  Algorithm \ref{alg:estimation} 
yields a global maximizer.  

\begin{restatable}{lemma}{convexityLPM}
\label{lemma:convexityLPM}
Let $\calc$ and $L(.)$ be defined as in  \eqref{eq:space_of_distributions} and \eqref{eq:L_lpm} respectively. 
Then $\calc$ is convex and $L(.)$ is concave on $\calc$. 
\end{restatable}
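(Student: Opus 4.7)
The plan is to handle the two claims in sequence, first proving concavity of $L$ on the full probability simplex (restricted to the finite region), and then deducing convexity of $\calc$ from it, although the two arguments can actually be carried out independently.

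For the concavity of $L$, I would observe that for each fixed $i$, the map $\vtheta \mapsto \sum_{x\in\calx} \theta_x\, g_{xi}$ is affine in $\vtheta$, and the scalar function $\log$ is concave and (strictly) increasing on $(0,\infty)$. Hence the composition $\vtheta \mapsto \log\sum_{x\in\calx}\theta_x\,g_{xi}$ is concave on the set where the argument is strictly positive. Since $L$ is a finite sum of such concave terms, $L$ is concave on any convex subset of its effective domain, in particular on $\calc$ (once we know $\calc$ is convex).

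For the convexity of $\calc$, take $\vtheta,\vtheta'\in\calc$ and $\lambda\in[0,1]$, and set $\vtheta''=\lambda\vtheta+(1-\lambda)\vtheta'$. Nonnegativity and the constraint $\sum_x \theta''_x =1$ follow immediately from the fact that both properties are preserved by convex combinations. The only nontrivial requirement is the condition $L(\vtheta'')>-\infty$, which, by \eqref{eq:L_lpm}, is equivalent to $\sum_{x\in\calx}\theta''_x g_{xi}>0$ for every $i\in\cali$. I would handle the boundary cases $\lambda\in\{0,1\}$ trivially, and for $\lambda\in(0,1)$ use
\[
\sum_{x\in\calx}\theta''_x g_{xi} = \lambda\sum_{x\in\calx}\theta_x g_{xi} + (1-\lambda)\sum_{x\in\calx}\theta'_x g_{xi} \geq \lambda\sum_{x\in\calx}\theta_x g_{xi}>0,
\]
where the strict positivity on the right comes from $\vtheta\in\calc$, i.e.\ $L(\vtheta)>-\infty$. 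This shows $\vtheta''\in\calc$.

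There is no real obstacle here; the only subtlety to watch is that the naive argument ``$\log\circ\text{affine}$ is concave'' needs the affine expression to remain strictly positive, which is exactly why $\calc$ is defined with the clause $L(\vtheta)>-\infty$ and why the convexity argument has to explicitly verify that this clause is stable under convex combinations. Once both pieces are in place, concavity of $L$ on the convex set $\calc$ follows, and the hypotheses of Theorem~\ref{thm:convex_space} are met, so Algorithm~\ref{alg:estimation} converges to a global maximizer as announced.
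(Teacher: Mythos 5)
Your proposal is correct and follows essentially the same route as the paper: concavity of $L$ comes from each $L_i$ being a concave ($\log$ of an affine, nonnegative expression) function, and convexity of $\calc$ comes from checking that the finiteness clause $L(\vtheta)>-\infty$ is preserved under convex combinations. The only (harmless) difference is that you verify finiteness term-by-term via strict positivity of each $\sum_x \theta''_x g_{xi}$, whereas the paper gets it in one step from the concavity inequality $L(\vtheta'')\geq \lambda L(\vtheta)+(1-\lambda)L(\vtheta')>-\infty$.
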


Theorems~\ref{thm:conv:em} and \ref{thm:convex_space}, whose conditions are ensured by  Lemmas \ref{lemma:compact} 
and \ref{lemma:convexityLPM} respectively, allow to  prove  that in our LPM setting  the EM algorithm \ref{alg:estimation} 
converges to a MLE.
\begin{restatable}[Convergence to MLEs]{theorem}{algconv}
\label{thm:alg_conv}
Consider any sequence $(\vtheta^t)_{t \in \naturals}$ of distributions generated by Algorithm 
\ref{alg:estimation}. 
Then $L(\vtheta^t)$ converges monotonically 
to $L(\hat\vtheta)$ where $\hat\vtheta$ is a global maximizer of $L(.)$. 
\end{restatable}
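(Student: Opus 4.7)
The plan is to check, one by one, the hypotheses of Theorem \ref{thm:conv:em} and then use the convexity result of Theorem \ref{thm:convex_space} to upgrade ``extended solution set'' to ``global maximizer''. Since Theorems \ref{thm:estep} and \ref{thm:mstep} already certify that Algorithm \ref{alg:estimation} is a bona fide EM instance for the log-likelihood $L(.)$ of \eqref{eq:L_lpm}, the remaining work splits into three parts: (i) continuity of $L$ and $Q$; (ii) compactness of the iterates; (iii) concavity to rule out non-maximizing limit points.

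First I would verify continuity. The space $\calc$ in \eqref{eq:space_of_distributions} was defined precisely to exclude distributions on which the inner sum $\sum_{x} \theta_x g_{x i}$ vanishes for some $i$, so every term $\log \sum_{x} \theta_x g_{x i}$ in \eqref{eq:L_lpm} is a continuous function of $\vtheta$ on $\calc$, and hence so is $L$. For $Q$, Theorem \ref{thm:estep} gives the closed form $Q(\vtheta\mid\vtheta') = \sum_x \psi_x(\vz,\vtheta')\log\theta_x + K(\vtheta')$, where $\psi_x$ and $K$ are continuous in $\vtheta'$ (both are polynomial-over-polynomial expressions in the coordinates of $\vtheta'$ whose denominators do not vanish on $\calc$), and $\log\theta_x$ is continuous in $\vtheta$; joint continuity follows.

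Next I would establish the compactness hypothesis. The initial point $\vtheta^0$ is chosen with $\theta^0_x>0$ for all $x\in\calx$, so that every term of \eqref{eq:L_lpm} is finite and thus $L(\vtheta^0)>-\infty$, i.e.\ $\vtheta^0 \in \calc$. The fundamental EM inequality \eqref{eq:dQdL}, together with the fact that the M-step of Theorem \ref{thm:mstep} maximizes $Q(\cdot\mid \vtheta^t)$, yields $L(\vtheta^{t+1}) \geq L(\vtheta^t)$ for every $t$. Therefore the whole sequence is contained in $\calc' = \{\vtheta\in\calc: L(\vtheta)\geq L(\vtheta^0)\}$, which is compact by Lemma \ref{lemma:compact}. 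At this point the hypotheses of Theorem \ref{thm:conv:em} are all satisfied, so the limit points of $(\vtheta^t)$ lie in the extended solution set $\xsolnset$ and $L(\vtheta^t)$ converges monotonically to $L(\hat\vtheta)$ for some $\hat\vtheta\in\xsolnset$.

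To finish, I would invoke Lemma \ref{lemma:convexityLPM}, which provides exactly the hypotheses of Theorem \ref{thm:convex_space}: $\calc$ is convex and $L(.)$ is concave on $\calc$. Consequently $\xsolnset$ coincides with the set of global maximizers of $L(.)$ on $\calc$, so $\hat\vtheta$ is an MLE in the sense of \eqref{eq:ml}, which is the statement of the theorem. I do not anticipate a single ``hard'' step, since the heavy lifting has been done upstream; the main subtlety is the somewhat unusual one of making sure that $L(\vtheta^0)>-\infty$ gets used correctly both to place $\vtheta^0$ in $\calc$ and to pin the whole trajectory inside the compact sublevel set $\calc'$, which is the point where the algorithm's requirement of a fully-supported initial distribution is essential.
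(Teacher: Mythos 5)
Your proposal is correct and follows essentially the same route as the paper's own proof: monotonicity from the M-step plus \eqref{eq:dQdL} places the iterates in the compact sublevel set $\calc'$ of Lemma \ref{lemma:compact}, continuity of $L$ and $Q$ allows Theorem \ref{thm:conv:em} to conclude convergence to $\xsolnset$, and Lemma \ref{lemma:convexityLPM} with Theorem \ref{thm:convex_space} identifies $\xsolnset$ with the global maximizers. Your extra remarks spelling out why $L$ and $Q$ are continuous and why $\vtheta^0\in\calc$ are consistent with, and slightly more explicit than, the paper's argument.
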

It is important to note that $L(.)$ is \emph{not strictly concave} in general, which means 
we may have many MLEs as exemplified earlier in Section \ref{sec:revisiting_ibu_non_unique}.
\section{Single input}\label{sec:single_inp}
In the following we consider the instance of our LPM when the system has only one input $X$ which is 
obfuscated multiple times to yield a vector of observables $\vz$. We will consider both the cases when various mechanisms 
are used to produce the elements of $\vz$, and   when only one  is used. 
\subsection {Obfuscation by various mechanisms}\label{sec:single_inp_arb_mech}
Suppose that the input $X$ is obfuscated by $k$ arbitrary mechanisms to produce the observed output  
$\vz = (z_1, z_2, \dots, z_k)$, as shown in Fig.~\ref{fig:local_priv_model_single_inp}.
\begin{figure}
\centering
\includegraphics{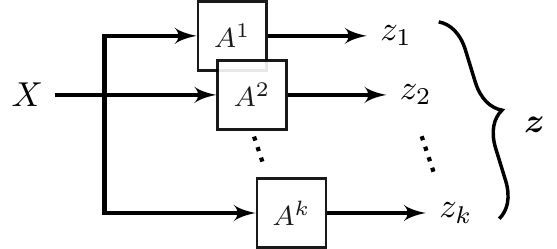}
\caption{The local privacy model for a single input. The user hidden input is  $X$, and corresponding output 
vector is $\vz = (z_1,z_2, \dots,z_k)$.}
\label{fig:local_priv_model_single_inp}
\end{figure}
%
The following theorem characterizes the MLE in this scenario.  
\begin{restatable}[MLE for a single input]{theorem}{mlsingleinp}
\label{thm:single_inp}
Let $X$ be drawn from a hidden distribution over $\calx$. Suppose 
$X$ is obfuscated to produce an output   $\vz = (z_1, z_2,\dots, z_k)$. 
Then a distribution $\hat\vtheta$ over $\calx$ is an MLE if and only if 
\[
 \sum_{x \in \hat\calx}  \hat{\theta}_x =1\quad\textit{where}
 \quad \hat\calx = \argmax_{x\in\calx} P(\vz | x).
\]  
\end{restatable}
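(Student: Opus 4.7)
The plan is to reduce the statement to a one-dimensional linear optimization problem over the probability simplex. With a single input ($n=1$) the outputs probability matrix $\mg$ collapses to a single column, so I will just write $g_x = P(\vz \mid x) = \prod_{j=1}^k a^{1j}_{x\, z_j}$ for each $x \in \calx$. Substituting $n=1$ into the log-likelihood \eqref{eq:L_lpm} gives
\begin{equation*}
L(\vtheta) \;=\; \log \sum_{x \in \calx} \theta_x\, g_x,
\end{equation*}
which is well defined on $\calc$ (i.e., finite) precisely when the support of $\vtheta$ intersects $\{x : g_x > 0\}$.

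Next I would exploit monotonicity of $\log$: maximizing $L(\vtheta)$ on $\calc$ is equivalent to maximizing the linear functional $f(\vtheta) = \sum_x \theta_x g_x$ over the simplex $\{\vtheta : \theta_x \geq 0,\ \sum_x \theta_x = 1\}$. For any such $\vtheta$, setting $g^\star = \max_{x \in \calx} g_x$ (attained exactly on $\hat\calx$), I would write
\begin{equation*}
f(\vtheta) \;=\; \sum_{x \in \calx} \theta_x g_x \;\leq\; \sum_{x \in \calx} \theta_x g^\star \;=\; g^\star,
\end{equation*}
so $g^\star$ is an upper bound on $f$, and this bound is clearly attained (e.g., by any point mass on some $x \in \hat\calx$). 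Hence the set of MLEs coincides with the set of maximizers of $f$ on the simplex.

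For the characterization, I would argue both directions of the equivalence separately. For the ``if'' direction: if $\sum_{x\in\hat\calx}\hat\theta_x = 1$, then $\theta_x = 0$ for every $x \notin \hat\calx$, and so $f(\hat\vtheta) = \sum_{x \in \hat\calx} \hat\theta_x g^\star = g^\star$, making $\hat\vtheta$ an MLE. For the ``only if'' direction: if $\hat\vtheta$ is an MLE, then the upper-bound inequality above is tight, i.e., $\sum_x \hat\theta_x (g^\star - g_x) = 0$. Since each summand is nonnegative (because $g^\star \geq g_x$ and $\hat\theta_x \geq 0$), every summand must vanish, forcing $\hat\theta_x = 0$ whenever $g_x < g^\star$, that is whenever $x \notin \hat\calx$. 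Combined with $\sum_x \hat\theta_x = 1$, this yields $\sum_{x \in \hat\calx} \hat\theta_x = 1$, completing the proof.

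There is no real obstacle here; the only mild subtlety is making sure $\hat\vtheta$ lies in $\calc$ rather than just the simplex. Since $g^\star > 0$ whenever the observation $\vz$ has positive probability at all, any $\hat\vtheta$ concentrated on $\hat\calx$ has $L(\hat\vtheta) = \log g^\star > -\infty$, so membership in $\calc$ is automatic.
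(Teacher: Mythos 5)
Your proof is correct and follows essentially the same route as the paper: both reduce the problem to bounding the linear functional $\sum_x \theta_x P(\vz\mid x)$ by $\max_x P(\vz\mid x)$ and observe, via $\sum_x \hat\theta_x\bigl(\max_u P(\vz\mid u)-P(\vz\mid x)\bigr)=0$ with nonnegative summands, that equality forces $\hat\theta_x=0$ off $\hat\calx$. Your extra remark on membership in $\calc$ is a harmless refinement the paper leaves implicit.
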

Theorem \ref{thm:single_inp} states that the MLEs are determined by 
the subset $\hat\calx$ consisting of the elements of $\calx$ that maximize $P(\vz | x)$. 
In particular, every distribution that assigns total probability $1$ to the elements of $\hat\calx$ 
is a MLE. 
%
It is therefore clear from Theorem \ref{thm:single_inp} that we may have one or infinitely many 
MLEs depending on $|\hat\calx|$. In fact if $\hat\calx=\{x\}$, i.e. it is 
singleton, then the MLE is unique, namely the one having $P(x)=1$. 
On the other hand, if $|\hat\calx| >1$, then there are infinitely many MLEs. 
This observation is yet another refusal for the claim made in 
\cite[Proposition 4.1]{Agrawal:01:PDS} 
that the log-likelihood function has always a unique MLE.


Finally we recall that every probability $P(\vz | x)$, which is denoted by $g_{x 1}$ (as in Definition \ref{def:probability_matrix}),  
is evaluated from the $k$ obfuscation mechanisms (shown in Fig.~\ref{fig:local_priv_model_single_inp}) 
that produced the elements of $\vz$ . 
Let $a^j_{x z}$ be the probability of the mechanism $A^j$ to yield an 
observable $z$ from $x$. Then   
$
g_{x 1} = \prod_{j=1}^k a^j_{x z_j}.
$

\subsection{Obfuscation by identical mechanisms}

Suppose now that the single input $X$ is obfuscated by a fixed mechanism $A$ to yield the 
$k$ observables of the output vector $\vz=(z_1, z_2, \dots, z_k)$. 
In this case the elements of $\vz$ belong all to the same set $\calz$, and therefore we can 
construct an empirical distribution $\vq=(q_z:z\in \calz)$ where $q_z$ is the proportion of $z$ 
in the observed vector $\vz$, i.e. the number of times $z$ appears in $\vz$ divided by 
 $k$. Let $a_{x z}$ be the conditional probability of the mechanism to yield $z$ when its input is $x$. Then 
\begin{eqnarray}\label{eq:gx1_1}
g_{x 1} &=& \prod_{j=1}^k a_{x z_j} \;=\; \prod_{z\in\calz} ( a_{x z})^{k q_z}.
\end{eqnarray} 
From Theorem \ref{thm:single_inp} and \eqref{eq:gx1_1} we then derive
\begin{eqnarray}\label{eq:one}
\hat\calx &=& \argmax_{x\in\calx} g_{x 1} \;=\; \argmax_{x\in \calx} \sum_{z\in \calz} q_z \log a_{x z}. 
\end{eqnarray}
The above equation can   also be interpreted using the Kullback-Leibler divergence $\dkl$ \cite{Cover:06:BOOK}. In fact, 
if we denote by $\va_x$ the conditional distribution 
of the mechanism for $x\in \calx$, we have $\dkl(\vq \,||\, \va_x) = H(\vq) - \sum_{z\in \calz} q_z \log a_{x z}$, 
where $H(\vq) $ is the entropy of $\vq$, which is a constant because $\vq$ is fixed. 
Then, from \eqref{eq:one} we get
\begin{eqnarray}\label{eq:ml-kl}
\hat\calx &=& \argmin_{x\in \calx} \,\dkl(\vq \,||\, \va_x). 
\end{eqnarray}
In other words, the elements  $x\in\hat\calx$ are exactly those for which 
the mechanism distributions $\va_x$ are most similar to the observed empirical 
distribution $\vq$ (w.r.t. $\dkl$). 

\subsection{Relation with the probability of error}

We consider now the impact of the length $k$ of the observed vector $\vz$ on the ML estimation. 
Suppose that the real input value is $\bar x$. Then the observables in the vector $\vz$ are drawn
from $\va_{\bar x}$.  
From \cite[Chapter 11]{Cover:06:BOOK}, we have
\begin{eqnarray}\label{eq:conv_rate}
P\left(\dkl(\vq \,||\, \va_{\bar x}) > \delta\right) &\leq & (1+k)^{|\calz|}\, 2^{-k \delta}.
\end{eqnarray}
This means that the empirical distribution $\vq$ converges exponentially (as 
$k\to \infty$) to $\va_{\bar x}$. If $\va_{\bar x}$ is different from all other rows of 
the mechanism, then by  \eqref{eq:ml-kl} the MLE for a large $\vz$ is unique  and assigns probability 
$1$ to the real input $\bar x$. In other words, as $k\to \infty$ the MLE identifies the real input
with a probability of error $p_e\to 0$. On the other hand, if the row distribution of 
$\bar x$ is not unique, then the MLE is not unique. 
In this case $p_e$ is a constant which depends on the prior distribution on $\calx$. The authors of \cite{Boreale:15:MSCS} 
considered the general case where the mechanism may have identical rows, and 
described upper and lower bounds on $p_e$ that hold for any prior on $\calx$. In particular 
when the mechanism has a unique row for every $x\in \calx$ their upper bound  is equal to $0$, 
which coincides with our observation. 


\section{Multiple inputs }\label{sec:mul_input}
In the most general scenario of our LPM, multiple inputs are available, and 
every input $X^i$ is obfuscated repeatedly to produce an observed output vector $\vz^i$ as shown in Fig.~\ref{fig:local_priv_model}. 
In this setting the MLE is obtained using Algorithm \ref{alg:estimation} which applies the update step 
\[ 
\theta^{t+1}_x = (1/n) \sum_{i=1}^n   \frac {\theta^t_x g_{x i}}{\sum_{u \in \calx} \theta^t_u g_{u i}}.
\]

The above rule can be instantiated to two special cases that have been considered in the literature \cite{Murakami:18:POPETS,Agrawal:05:ICMD}. 
Both of these works assume that each input is obfuscated once to yield a single observable. However in  \cite{Murakami:18:POPETS} the obfuscation mechanisms 
may be different, while in \cite{Agrawal:05:ICMD} they have to be always the same. We describe these cases in the following.  
\subsection {Obfuscating each input once with different mechanisms}\label{sec:mul_inp_single_mech_arb}
Suppose that every input $X^i$ is obfuscated once by an arbitrary mechanism $A^i=[a^i_{xz}]$ 
to produce the output $\vz^i$ consisting of a single observable, i.e. $\vz^i=(z^i)$. In this case it is 
easy to see that $g_{x i} = a^i_{x z_i}$, and hence the update step in the general EM algorithm 
\ref{alg:estimation} becomes
\begin{eqnarray}\label{eq:Murakami}
\theta^{t+1}_x &=& (1/n) \sum_{i=1}^n   \frac {\theta^t_x a^i_{x z^i}}{\sum_{u \in \calx} \theta^t_u a^i_{u z^i}}.
\end{eqnarray}
The above formula is equivalent to the update rule in the algorithm proposed by 
\cite{Murakami:18:POPETS}. But we emphasize that it is necessary to 
satisfy our condition on the initial distribution $\vtheta^0$ ($\vtheta^0_x > 0$ for all $x\in \calx$)
to obtain the MLE (see Section~\ref{sec:privacy_model}, in particular the comment after Algorithm~\ref{alg:estimation}). 
The authors of  \cite{Murakami:18:POPETS}, following \cite{Agrawal:05:ICMD}, consider starting with $\vtheta^0=\vq$ (when $\calx=\calz$), 
but this would be a bad choice when $\vq$ is not fully supported. 
\subsection {Obfuscating each input using a single fixed mechanism}\label{sec:mul_inp_single_mech_fixed}
We now consider the case in which every input $X^i$ is 
obfuscated once by the same mechanism to produce a single observable $z^i\in \calz$. 
In this setting we have  $g_{x i} = a_{x z^i}$, and the update 
step in the EM algorithm \ref{alg:estimation} becomes
\begin{eqnarray}\label{eq:pre-ibu}
\theta^{t+1}_x = (1/n) \sum_{i=1}^n \frac {\theta^t_x a_{x z^i}}{\sum_{u \in \calx} \theta^t_u a_{u z^i}} \quad \forall x\in \calx.
\end{eqnarray}
The above rule can be further simplified using the empirical distribution 
$\vq=(q_z: z\in \calz)$ of the observables. By grouping the 
terms with the same value of $z^i$  we get
\begin{eqnarray*}
\theta^{t+1}_x =  \sum_{z\in \calz} q_z\, \frac {\theta^t_x a_{x z}}{\sum_{u \in \calx} \theta^t_u a_{u z}} \quad \forall x\in \calx
\end{eqnarray*}
which is the  IBU 
described in Section \ref{sec:preliminaries}. 
In the special case when $\calz = \calx$, the authors of \cite{Agrawal:05:ICMD}  propose to start with the 
empirical distribution, i.e.  $\vtheta^0= \vq$, to accelerate the convergence; but as we 
remarked earlier this proposal  is valid only if  $q_z>0$ for every $z$. 



\section{Uniqueness of the MLE}\label{sec:unique}
In this section we investigate the uniqueness of the MLE for a given space $\calx$ of secrets and a probability 
matrix $\mg$ which determines the probabilities of the observed vectors. In the case of a single input, 
Theorem \ref{thm:single_inp} implies that the MLE is unique if and only if 
$\argmax_{x\in \calx} P(\vz | x)$ consists of one element. In the general case of multiple inputs we identify by the following theorem 
sufficient conditions for the uniqueness of the MLE. For any nonempty $\cali' \subseteq \cali$ we define $\mg(\cali')$ to be the 
matrix consisting of the columns of $\mg$ that correspond to the indexes in $\cali'$. Then the required condition of uniqueness
is formulated relative to $\mg(\cali')$ as follows.
\begin{restatable}[Uniqueness]{theorem}{mlunique}
\label{thm:mlunique}
Let $\calx$ be a space on which there is a distribution $\vtheta'$
 with $L(\vtheta')>-\infty$. Assume there is a set of indexes $\cali' \subseteq \cali$ 
such that   $(\vtheta - \vphi) \,\mg(\cali') \neq \vzeros$ for every non-identical distributions $\vtheta, \vphi$ on $\calx$.  
Then there exists a unique MLE on $\calx$ (with respect to the data observed  in the whole  $\cal I$). 
\end{restatable}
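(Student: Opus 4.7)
The plan is to reduce the theorem to the strict concavity of $\log$, after first handling existence separately. Existence follows from a compactness argument: taking $L^- = L(\vtheta')$ in Lemma \ref{lemma:compact}, the set $\calc' = \{\vtheta \in \calc : L(\vtheta) \geq L(\vtheta')\}$ is compact and nonempty (it contains $\vtheta'$), so the continuous function $L(\cdot)$ attains a maximum on $\calc'$. Since any $\vtheta \in \calc \setminus \calc'$ satisfies $L(\vtheta) < L(\vtheta')$, this maximum is in fact a global maximum on $\calc$, giving at least one MLE.

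For uniqueness, suppose $\hat\vtheta_1, \hat\vtheta_2 \in \calc$ are two MLEs with common optimal value $L^\ast$. By Lemma \ref{lemma:convexityLPM}, $\calc$ is convex and $L$ is concave, so the midpoint $\vtheta_m = \tfrac{1}{2}(\hat\vtheta_1 + \hat\vtheta_2)$ lies in $\calc$ and satisfies $L(\vtheta_m) \geq \tfrac{1}{2}(L(\hat\vtheta_1)+L(\hat\vtheta_2)) = L^\ast$; maximality forces equality, i.e.\ the concavity inequality is tight.

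Writing $L(\vtheta) = \sum_{i=1}^n \log \langle \vtheta, \mg_{\cdot\, i}\rangle$ where $\mg_{\cdot\, i}$ is the $i$-th column of $\mg$, each summand is strictly concave in $\vtheta$ because $\log$ is strictly concave on $(0,\infty)$ and the arguments $\langle \hat\vtheta_j, \mg_{\cdot\, i}\rangle$ are strictly positive (otherwise $L(\hat\vtheta_j) = -\infty$, contradicting $\hat\vtheta_j \in \calc$). Tightness of the concavity inequality therefore forces term-by-term equality, which yields $\langle \hat\vtheta_1, \mg_{\cdot\, i}\rangle = \langle \hat\vtheta_2, \mg_{\cdot\, i}\rangle$ for every $i \in \cali$. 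Restricting to the columns in $\cali'$, we obtain $(\hat\vtheta_1 - \hat\vtheta_2)\,\mg(\cali') = \vzeros$, and the hypothesis immediately gives $\hat\vtheta_1 = \hat\vtheta_2$.

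The only subtle point is ensuring that the inner products $\langle \hat\vtheta_j, \mg_{\cdot\, i}\rangle$ are strictly positive so that strict concavity of $\log$ can be applied; this is exactly what the definition \eqref{eq:space_of_distributions} of $\calc$ guarantees for any distribution with finite log-likelihood. Beyond that, the hypothesis is precisely tailored to close the argument, so no further obstacle arises.
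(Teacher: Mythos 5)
Your proposal is correct and follows essentially the same route as the paper: existence via the compactness of $\{\vtheta\in\calc : L(\vtheta)\geq L(\vtheta')\}$ (Lemma~\ref{lemma:compact}) and the extreme value theorem, and uniqueness via the equality condition in the term-wise Jensen inequality for $\log$, which under the hypothesis $(\vtheta-\vphi)\,\mg(\cali')\neq\vzeros$ rules out two distinct maximizers --- the paper merely packages this as strict concavity of $L(.)$ on the convex set $\calc$ instead of your midpoint argument. One minor wording quibble: the claim that ``each summand is strictly concave in $\vtheta$'' is not literally true (each $\log\langle\vtheta,\mg_{\cdot\, i}\rangle$ is constant along directions annihilated by the corresponding column), but the inference you actually use --- strict concavity of $\log$ on $(0,\infty)$ forcing $\langle\hat\vtheta_1,\mg_{\cdot\, i}\rangle=\langle\hat\vtheta_2,\mg_{\cdot\, i}\rangle$ when the term-wise inequality is tight --- is sound.
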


Note that for a given distribution $\vtheta$ on $\calx$, the expression $\vtheta \,\mg(\cali')$ describes 
the probabilities of the outputs $\cali'$. Therefore the condition in the above theorem means that 
different distributions $\vtheta, \vphi$ yield different vectors of probabilities for the outputs $\cali'$. 

Next, we revise some well known mechanisms from the privacy literature, and investigate the uniqueness of the MLE 
when they are applied in the setting of IBU (Fig.~\ref{fig:local_priv_model_ibu}). We show that if the space of secrets of interest  $\calx$ 
is defined to be the same as the set of observed values, then there is a unique MLE on $\calx$.  

\subsection{k-RR mechanisms}\label{sec:uniqueness:krr}
The $k$-ary randomized response mechanism, $k$-RR, was originally introduced by Warner \cite{Warner:65:jastat} 
to sanitize sensitive data that are drawn from a binary alphabet ($k=2$). 
Then it was extended by \cite{Kairouz:16:JMLR} to arbitrary $k$-size alphabets. 
This mechanism applied to an element $y$ of the alphabet produces another element $z$ with  probability:
\begin{equation}\label{eq:krr}
P(z|y) = \frac{1}{k-1+ e^\epsilon} 
\left\{ 
\begin{array}{l l}
e^\epsilon & \mbox{if $z=y$}\\
1                & \mbox{if $z\neq y$} 
\end{array} 
\right.
\end{equation} 
It is known that $k$-RR satisfies $\epsilon$-local differential privacy, and \cite{Kairouz:16:JMLR} has proved it to be 
optimal (under the LDP constraints) for a range of statistical utility metrics, e.g., Total Variation and KL divergence. 
Now, using Theorem \ref{thm:mlunique},  
we show that if the space of secrets of interest $\calx$ is defined to be the same as the set of the observed values 
then there is a unique MLE on $\calx$. Note that the mechanism may be defined on a superset of $\calx$. 

\begin{restatable}{corollary}{mluniquekrr}
\label{thm:mlunique-krr}
Let $\calx$ be the set of values reported by a $k$-RR mechanism. Then there is a unique MLE on $\calx$. 
\end{restatable}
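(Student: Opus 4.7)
The plan is to apply Theorem~\ref{thm:mlunique}, which reduces uniqueness of the MLE to (i) producing some distribution on $\calx$ with finite log-likelihood, and (ii) finding a subset $\cali' \subseteq \cali$ such that $(\vtheta - \vphi)\,\mg(\cali') \neq \vzeros$ for every pair of distinct distributions $\vtheta, \vphi$ on $\calx$.

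First I would construct $\cali'$. Since by hypothesis $\calx$ is exactly the set of values actually reported, for each $x \in \calx$ there is at least one index $i_x \in \cali$ with $z^{i_x} = x$. Taking $\cali' = \{i_x : x \in \calx\}$, the submatrix $\mg(\cali')$ is $|\calx| \times |\calx|$, with $(y, i_x)$-entry equal to $P(x \mid y)$ as given by \eqref{eq:krr}. After indexing each column by the corresponding $x \in \calx$, this submatrix takes the form
\[
\mg(\cali') \;=\; \frac{1}{k - 1 + e^\epsilon}\bigl((e^\epsilon - 1)\, I + J\bigr),
\]
where $I$ and $J$ denote the $|\calx| \times |\calx|$ identity and all-ones matrices.

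Next I would establish invertibility of $\mg(\cali')$. Its eigenvalues are $(e^\epsilon - 1 + |\calx|)/(k - 1 + e^\epsilon)$ on the span of $\vones$ and $(e^\epsilon - 1)/(k - 1 + e^\epsilon)$ on the orthogonal complement of $\vones$ (with multiplicity $|\calx| - 1$), and all are strictly positive since $\epsilon > 0$. Hence $\mg(\cali')$ is nonsingular, so $(\vtheta - \vphi)\,\mg(\cali') = \vzeros$ forces $\vtheta = \vphi$, giving condition (ii).

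For (i) I would take $\vtheta'$ to be the uniform distribution on $\calx$: every entry $g_{x,i}$ is at least $1/(k - 1 + e^\epsilon) > 0$ by \eqref{eq:krr}, so $\sum_x \theta'_x g_{x,i} > 0$ for every $i$, giving $L(\vtheta') > -\infty$ via \eqref{eq:L_lpm}. Theorem~\ref{thm:mlunique} then yields the unique MLE on $\calx$. I expect no serious obstacle: the only non-routine step is the eigenvalue computation, which follows immediately from the rank-one-plus-scalar-identity structure of the restricted $k$-RR transition matrix.
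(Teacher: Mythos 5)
Your proof is correct and follows essentially the same route as the paper: the same choice of $\cali'$ (one observed index per element of $\calx$), the same restricted $k$-RR submatrix, and the same reduction to Theorem~\ref{thm:mlunique}. The only differences are cosmetic: you establish the stronger fact that $\mg(\cali')$ is nonsingular via the spectrum of $(e^\epsilon-1)I+J$, whereas the paper only proves injectivity on zero-sum vectors by combining $\vv\,\mg(\cali')=\vzeros$ with $\vv\vones=0$, and your explicit check that the uniform distribution has $L(\vtheta')>-\infty$ makes a hypothesis precise that the paper leaves implicit.
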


\subsection{Geometric mechanisms}\label{sec:geometric}
The (linear) geometric mechanism probabilistically maps the space of integers $\integers$ to itself. 
Precisely, given the parameter  $\epsilon>0$, it maps every $y \in \integers$ to $z\in \integers$ with probability 
\begin{equation}\label{eq:geometric}
P(z |y) = c \, e^{-\epsilon |z-y|}, \quad \mbox{where}\; \;   c=\frac{1-e^{-\epsilon}}{1+e^{-\epsilon}}.
\end{equation}
The geometric mechanism   is known to be 
$\epsilon$-differential private, and furthermore is universally optimal  \cite{Ghosh:09:STOC}.   
Now, using Theorem \ref{thm:mlunique}, we can show that again if we define $\calx$ to be the set 
of values reported by the mechanism, then there is a unique MLE on $\calx$.   
\begin{restatable}{corollary}{mluniquegeometric}
\label{thm:mlunique-geometric}
Let $\calx$ be the set of values reported by a geometric mechanism. Then there is a unique MLE on $\calx$. 
\end{restatable}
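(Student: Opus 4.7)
The plan is to invoke Theorem \ref{thm:mlunique}. Two things must be checked: first, that there exists at least one $\vtheta'$ on $\calx$ with $L(\vtheta') > -\infty$; and second, that there is a choice of $\cali' \subseteq \cali$ making the linear map $\vv \mapsto \vv\,\mg(\cali')$ injective on the difference space $\{\vtheta - \vphi : \vtheta, \vphi \text{ distributions on } \calx\}$. The first point is immediate: the geometric kernel \eqref{eq:geometric} is strictly positive, so the uniform distribution on $\calx$ gives $g_{xi}>0$ for every $x,i$ and hence a finite log-likelihood.

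For the second point, the crucial feature of the statement is that $\calx$ is defined as the set of values actually \emph{reported} by the mechanism. Hence for each $x \in \calx$ there is an index $i_x \in \cali$ with $z^{i_x} = x$. I set $\cali' = \{i_x : x \in \calx\}$. With this choice, $\mg(\cali')$ is a $|\calx|\times|\calx|$ matrix whose $(y,x)$ entry is $P(z^{i_x}=x \mid y) = c\,e^{-\epsilon |x-y|}$. After pulling out the positive scalar $c$, the question reduces to whether the symmetric Laplace-kernel Gram matrix $K = \bigl[\,e^{-\epsilon|x-y|}\,\bigr]_{x,y\in\calx}$ is nonsingular.

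Nonsingularity of $K$ follows from the fact that $\kappa(t) = e^{-\epsilon|t|}$ is a strictly positive-definite function on $\reals$. Indeed, its Fourier transform is the Lorentzian $\hat{\kappa}(\omega) = \tfrac{2\epsilon}{\epsilon^2+\omega^2}$, which is strictly positive, so Bochner's theorem yields that $\kappa$ is (the characteristic function of a probability measure and hence) strictly positive definite. Consequently the Gram matrix of $\kappa$ on any finite collection of distinct reals — in particular on the elements of $\calx$ — is strictly positive definite, therefore invertible. So $\vv\,\mg(\cali') = \vzeros$ forces $\vv = \vzeros$, establishing the hypothesis of Theorem \ref{thm:mlunique} and thus the uniqueness of the MLE on $\calx$.

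The only delicate step is invertibility of the Laplace-kernel Gram matrix. If the Fourier/Bochner argument is deemed heavier than necessary for a corollary, it can be replaced by an elementary computation: ordering $\calx$ as $x_1<\cdots<x_m$, one checks directly that $K$ admits an explicit $LDL^\top$ factorization with strictly positive diagonal, which is the standard calculation for the Laplace kernel on a sorted grid and yields positive definiteness in closed form. Either route gives the desired invertibility, and the corollary follows.
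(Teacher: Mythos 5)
Your proposal is correct, and it follows the paper's overall strategy (pick one index per reported value so that $\cali'$ turns $\mg(\cali')$ into the $|\calx|\times|\calx|$ matrix $c\,[e^{-\epsilon|x-y|}]$, then reduce uniqueness to nonsingularity of that matrix via Theorem \ref{thm:mlunique}); the difference lies in how nonsingularity is established. The paper's proof is elementary linear algebra: ordering $x_1<\dots<x_m$ and writing $\alpha=e^{-\epsilon}$, it observes that $\vv\,(C_1-\alpha^{x_2-x_1}C_2)=0$ forces $v_1=0$, and inductively eliminating with successive column pairs gives $\vv=\vzeros$. You instead invoke strict positive definiteness of the Laplace kernel $t\mapsto e^{-\epsilon|t|}$ via its strictly positive Fourier transform (Bochner), which is sound — the only implicit step is the standard argument that a trigonometric sum $\sum_j v_j e^{i\omega x_j}$ vanishing on a set of full measure must be identically zero, which is what upgrades "positive definite" to "strictly positive definite on distinct points." Your route buys a stronger conclusion (positive definiteness, hence a conceptual reason the kernel matrix is invertible that transfers immediately to, say, Euclidean Laplace kernels in the planar setting), at the cost of heavier machinery; the paper's column-elimination is self-contained and has the practical advantage that it carries over verbatim to the truncated geometric mechanism (Corollary \ref{thm:mlunique-tgeometric}) simply by rescaling columns by $c_{x_i}/c$, whereas the truncated kernel is no longer of the translation-invariant form to which Bochner applies directly (though column rescaling does not affect invertibility, so your argument also extends with that one extra remark). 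Your suggested elementary fallback ($LDL^\top$ of the sorted Laplace kernel matrix) is essentially the paper's induction in matrix-factorization form. Minor points: you correctly note, as the paper does not bother to, that the hypothesis $L(\vtheta')>-\infty$ of Theorem \ref{thm:mlunique} is trivially met since the kernel \eqref{eq:geometric} is strictly positive; and your invertibility conclusion gives injectivity of $\vv\mapsto\vv\,\mg(\cali')$ on all of $\reals^{|\calx|}$, which is more than the difference-of-distributions condition actually required.
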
 
One well known variant of the geometric mechanism is its \emph{truncated} version \cite{Ghosh:09:STOC}  which works on a bounded range 
of integers. Let $r_1, r_2$ be two integers with 
$r_1< r_2$. Then the mechanism maps every $y$  between $r_1$ and $r_2$  into an integer $z$ in the same range 
with probability 
\begin{align}\label{eq:tgeometric}
P(z |y) &= c_z \, e^{-\epsilon |z-y|},\\
\mbox{where}\; \; c_z &= \frac{1}{1+e^{-\epsilon}} 
\left\{ 
 \begin{array}{l l}
1            &\mbox{if $z \in \{r_1, r_2\}$} \\
1-e^{-\epsilon} &\mbox{if $r_1< z< r_2$} \\
0            & \mbox{otherwise}
\end{array} \label{eq:tgeom-cz}
\right..
\end{align}
Also in this case we can show that there is a unique MLE on the set $\calx$  of observed values. Again, $\calx$ may not contain all the integers between $r_1$ and $r_2$. 

\begin{restatable}{corollary}{mluniquetgeometric}
\label{thm:mlunique-tgeometric}
Let $\calx$ be the set of values reported by a truncated geometric mechanism. Then there is a unique MLE on $\calx$. 
\end{restatable}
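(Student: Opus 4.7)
The plan is to apply Theorem~\ref{thm:mlunique}, following exactly the template that (presumably) works for the non-truncated geometric case in Corollary~\ref{thm:mlunique-geometric}; the only new wrinkle is the $z$-dependent normalizing factor $c_z$ arising from the truncation in \eqref{eq:tgeom-cz}.

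First I would enumerate $\calx=\{x_1,\ldots,x_m\}$. By hypothesis every $x_\ell\in\calx$ is actually reported, so I may select an index $i_\ell\in\cali$ whose (single) observable is $z^{i_\ell}=x_\ell$, and set $\cali'=\{i_1,\ldots,i_m\}$. Then $\mg(\cali')$ is the $m\times m$ matrix
\[
[\mg(\cali')]_{x_k,\,i_\ell}\;=\;P(x_\ell\mid x_k)\;=\;c_{x_\ell}\,e^{-\epsilon|x_\ell-x_k|}.
\]
Because the factor $c_{x_\ell}$ depends only on the column, I would factor $\mg(\cali')=M\,D$, where $D=\mathrm{diag}(c_{x_1},\ldots,c_{x_m})$ and $M_{k\ell}=e^{-\epsilon|x_k-x_\ell|}$. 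Since $\calx\subseteq\{r_1,\ldots,r_2\}$, \eqref{eq:tgeom-cz} gives $c_{x_\ell}>0$ for each $\ell$, so $D$ is invertible. Hence invertibility of $\mg(\cali')$ reduces to invertibility of $M$, which is the same statement that is needed for the plain geometric mechanism.

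The main (and only delicate) step is to show that $M$ is nonsingular whenever the $x_k$ are distinct. This $M$ is the Gram matrix of the kernel $K(s,t)=e^{-\epsilon|s-t|}$, which is strictly positive definite on $\reals$: its Fourier transform is, up to a positive scalar, the Cauchy density $\epsilon/(\pi(\epsilon^2+\xi^2))$, which is strictly positive everywhere, so by Bochner's theorem $K$ is strictly positive definite and its Gram matrix at distinct points is positive definite and hence invertible. An equivalent elementary route is to sort $x_1<\cdots<x_m$, write $\alpha=e^{-\epsilon}\in(0,1)$, and perform an LDL-type elimination to obtain $\det M=\prod_{k=1}^{m-1}\bigl(1-\alpha^{2(x_{k+1}-x_k)}\bigr)>0$; I would include this alternative if a self-contained argument is preferred. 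Either way, $M$ is invertible.

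Combining these steps, $\mg(\cali')$ is invertible, so its rows (indexed by $\calx$) are linearly independent, and in particular $(\vtheta-\vphi)\,\mg(\cali')\neq\vzeros$ for every pair of distinct distributions $\vtheta,\vphi$ on $\calx$. The hypothesis of Theorem~\ref{thm:mlunique} is therefore satisfied, and uniqueness of the MLE on $\calx$ follows. The expected obstacle is purely the positive-definiteness fact above; everything else is the same bookkeeping as in the $k$-RR and non-truncated geometric corollaries, with $D$ absorbing the boundary correction.
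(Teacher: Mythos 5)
Your proposal is correct and matches the paper's argument in all essentials: the paper also selects one column of $\mg$ per reported value, observes that the truncated case differs from the plain geometric one only by the positive column scalings $c_{x_\ell}$ (your diagonal factor $D$), and establishes nonsingularity of the resulting matrix $[e^{-\epsilon|x_k-x_\ell|}]$ by the same successive-column elimination that underlies your determinant formula $\prod_{k}\bigl(1-\alpha^{2(x_{k+1}-x_k)}\bigr)>0$, before invoking Theorem~\ref{thm:mlunique}. Your Bochner/strict-positive-definiteness argument is a valid alternative justification of that one step, but it is not needed and does not change the structure of the proof.
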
 

\section{Experimental evaluation}\label{sec:experiments}
In the following we experimentally evaluate the performance of the EM algorithm (\emm{}) by measuring the statistical 
distance between the original and estimated distributions. We focus on the setting of the IBU,  where every input 
 $x\in \calx$ generates a single observable $z \in \calz$ using a mechanism defined by a stochastic matrix $A$. 
We will also compare the IBU with the matrix inversion technique \cite{Agrawal:05:ICMD,Kairouz:16:ICML}, where
the empirical distribution $\vq$ on $\calz$ is used to estimate the original distribution by evaluating the vector 
$\vv=\vq \, A^{-1}$ (where $A^{-1}$ is the inverse of $A$), and then transforming $\vv$ into a valid distribution on $\calx$. 
This transformation may be done by truncating the negative components of $\vv$ to $0$ and then normalizing the vector;  
or alternatively by projecting $\vv$ onto the  simplex in $\reals^{|\calx|}$ using e.g. the algorithm in 
\cite{Wang2013ProjectionOT}. We refer to these two methods as \invn{} and \invp{} respectively. 

In our experiments we consider two classes of obfuscation mechanisms. 
The first class includes the mechanisms that satisfy $\epsilon$-geo-indistinguishablity \cite{Andres:13:CCS}, 
namely the geometric, Laplace, and exponential mechanisms \cite{Chatzikokolakis:17:POPETS}.
The second category includes the mechanisms that satisfy $\epsilon$-local differential privacy, namely the $k$-RR and 
the Google's \textsc{Rappor} mechanisms. We will use two types of data:  synthetic in the linear space, and real-world 
geographic data from the Gowalla dataset \cite{Gowalla:190518} in the planar space. 
\subsection{Synthetic data in the linear space}\label{sec:lingeo}
We define the space of input data $\calx$ to be the numbers $\{0,1,\dots,99\}$. 
We assume that the users obfuscate their data 
using a truncated geometric mechanism (cfr. Section~\ref{sec:geometric})  with a
strong geo-indistinguishability level ($\epsilon=0.1$). Then we apply the  methods to estimate 
the original distribution. We run two experiments. In the first one we draw $10^5$ inputs from a binomial 
distribution with parameter $p=0.5$. 
In the second experiment, we sample the same number of inputs from a distribution that is uniform on the 
elements $\{20,\dots,39\}$ and assigning probability $0$ to every other element in $\calx$. 
In the two experiments, 
the data are obfuscated by the above truncated geometric mechanism, 
and then the methods \invn{}, \invp{}, and \emm{} are applied on the resulting noisy data to estimate the original distribution. 
The results of these   experiments are shown in Fig.~\ref{fig:geometric_on_line}. Clearly the \emm{} outperforms 
the other two methods in approximating the real distributions, despite some distortion  in 
the case of the uniform distribution  (Fig.~\ref{fig:geom_uniform_em}) due to the discontinuities of this distribution. 
Note that the distribution estimated 
by \invp{} is slightly more similar to the true distribution than the one produced by \invn{}. 

\subsection{Geographic data from Gowalla, planar geometric noise}\label{sec:Gowalla}
We consider now the case in which the elements of $\calx$ are locations in the planar space. 
We consider a zone in the North part of San 
Francisco  bounded by the latitudes $37.7228, 37.7946$, and the longitudes $-122.5153, -122.3789$, covering 
 an area of $12$km $\times$ $8$km. We then discretize 
the region into a grid of $24\times 16$ cells of  width  0.5km  (see Fig.~\ref{fig:sf_map}). 
We approximate every location by the center of 
its enclosing cell, and  define the space $\calx$ to be the set of these centers. 
Now we use the check-ins in the Gowalla dataset restricted to this region ($123273$ check-ins) as the real users data. 
We obfuscate these data using the truncated planar geometric mechanism \cite{Chatzikokolakis:17:POPETS} 
(described also in Section \ref{sec:plan_geom} of the appendix),  known to satisfy 
$\epsilon$-geo-indistinguishability. This mechanism is defined by a formula similar to \eqref{eq:geometric}, except that $x$ and $z$ are location on a plane and $|z-x|$ is replaced by the Euclidean distance between $z$ and $x$. 
We apply the mechanism with $\epsilon=1.0$ 
to produce the noisy data. 
%
\begin{figure}[t]
\centering
\includegraphics[width=0.4\textwidth]{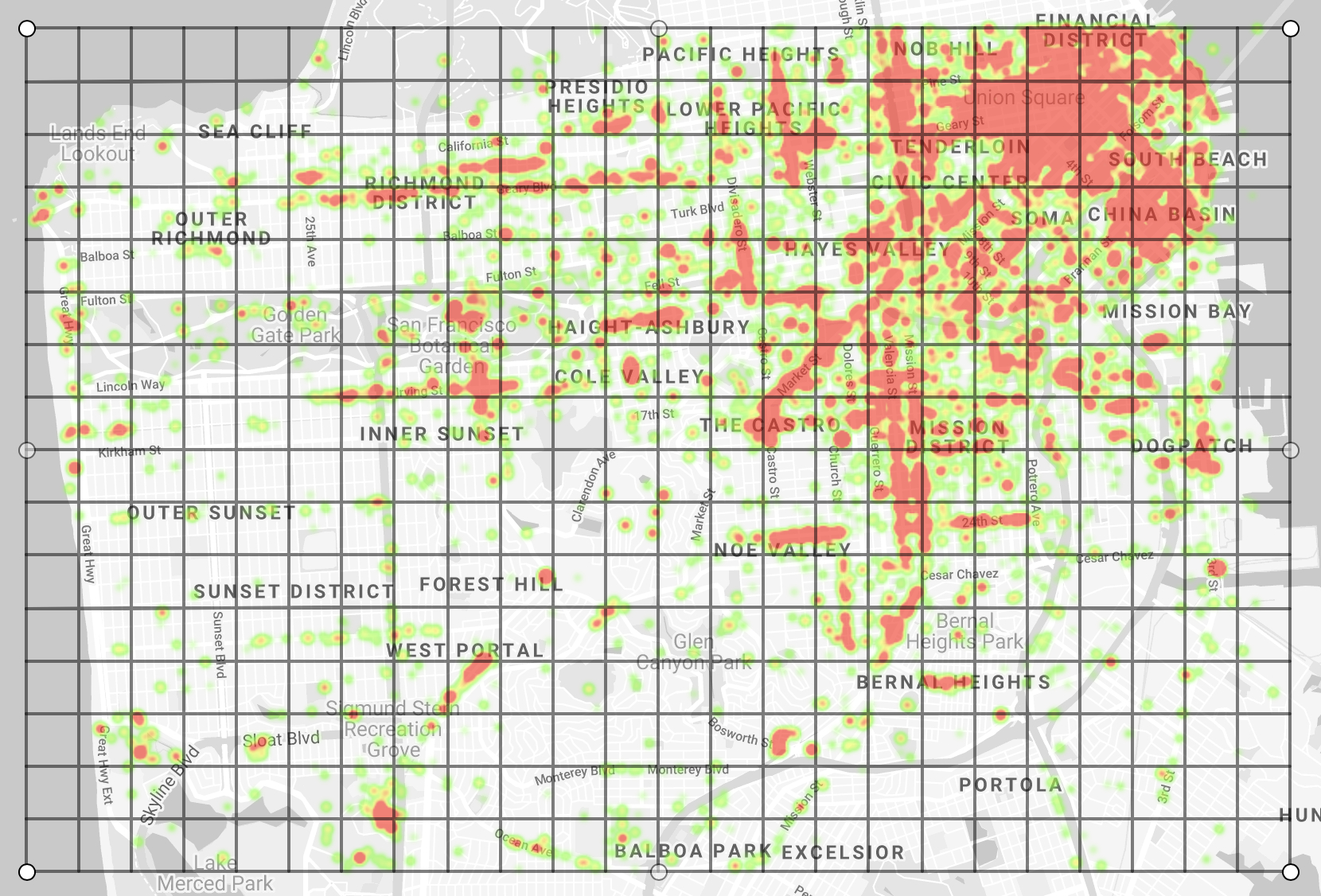}
\caption{The north part of San Francisco, discretized in a grid of $24\times16$ cells, covering an area of 12km$\times$8km. The   
heatmap is based on the check-ins  from the Gowalla dataset.}
\label{fig:sf_map}
\end{figure}
 \begin{figure}[t]
\centering 
\subfigure[\text{Original}]{
      \label{fig:original}
      \includegraphics[width=0.22\textwidth]{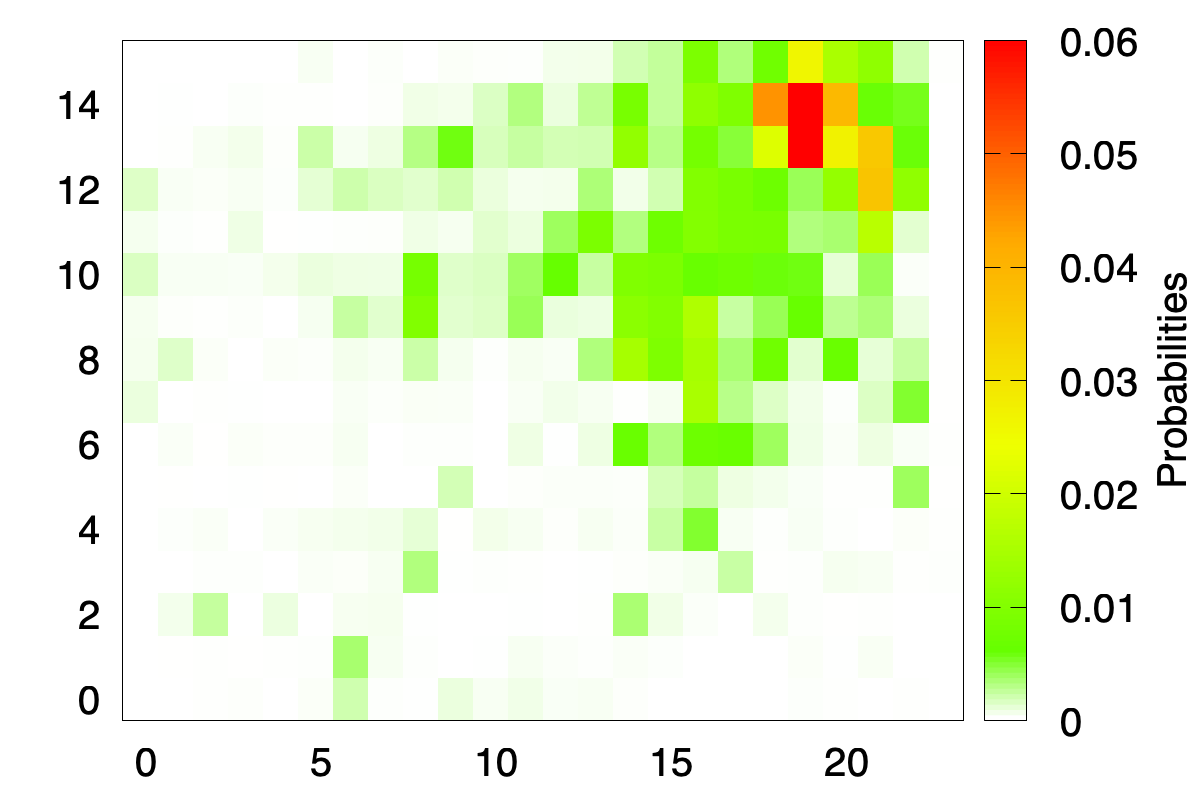}
      }
\subfigure[Noisy  (TV = 0.4544)]{
      \label{fig:geom_grid_noisy}
      \includegraphics[width=0.22\textwidth]{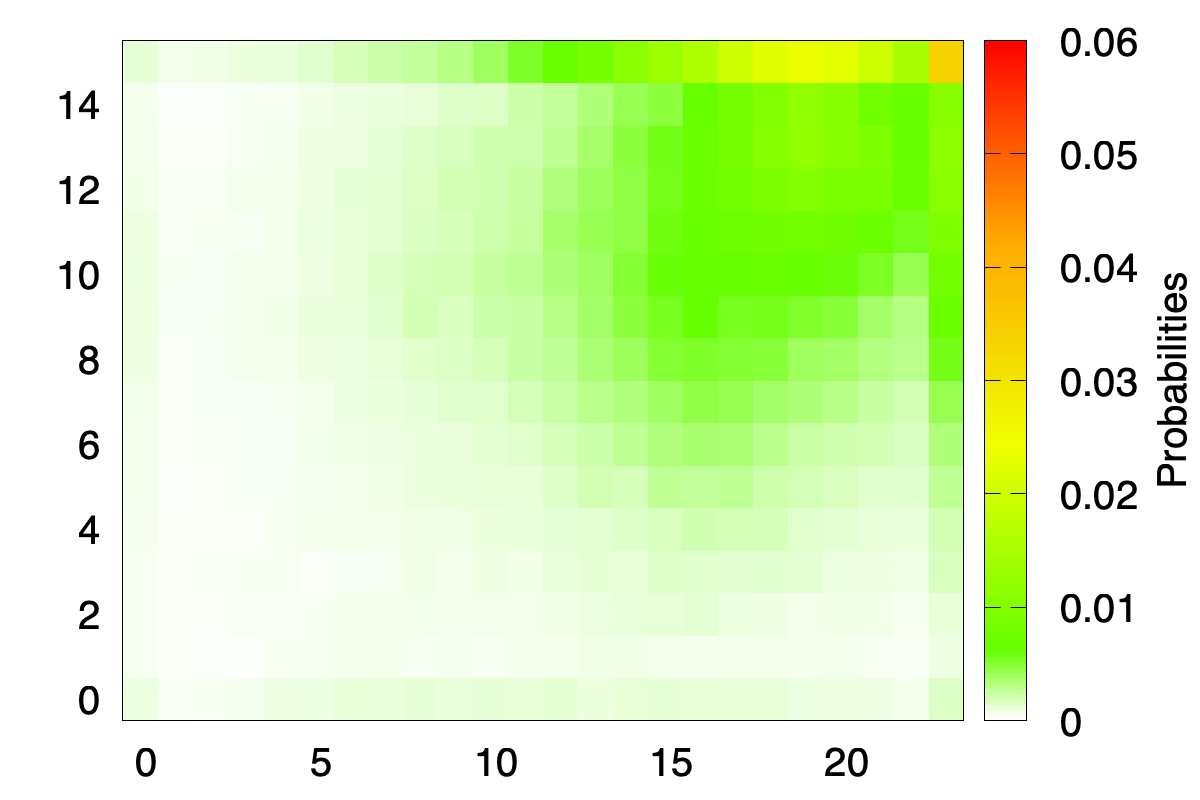}
      }
\subfigure[\invp{} (TV = 0.4567)]{
      \label{fig:geom_grid_estimated_invp}
      \includegraphics[width=0.22\textwidth]{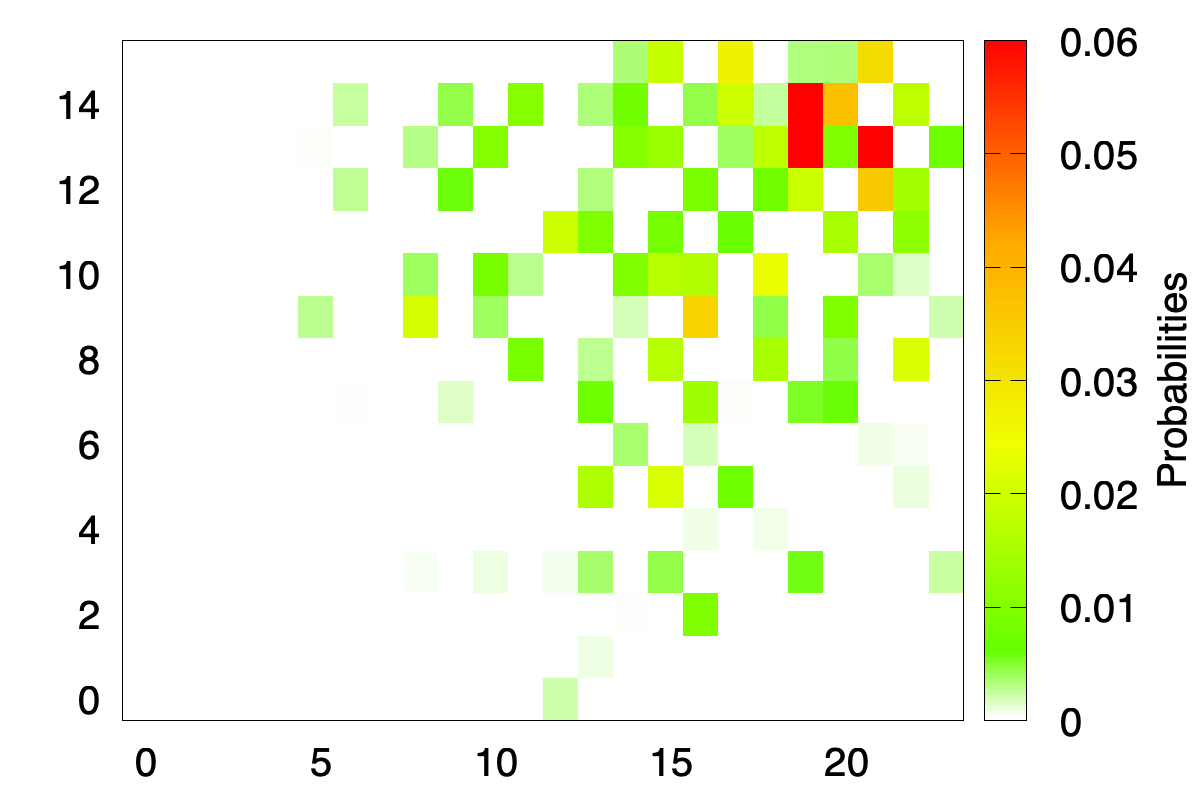}
      }
\subfigure[\invn{} (TV = 0.4633)]{
      \label{fig:geom_grid_estimated_invn}
      \includegraphics[width=0.22\textwidth]{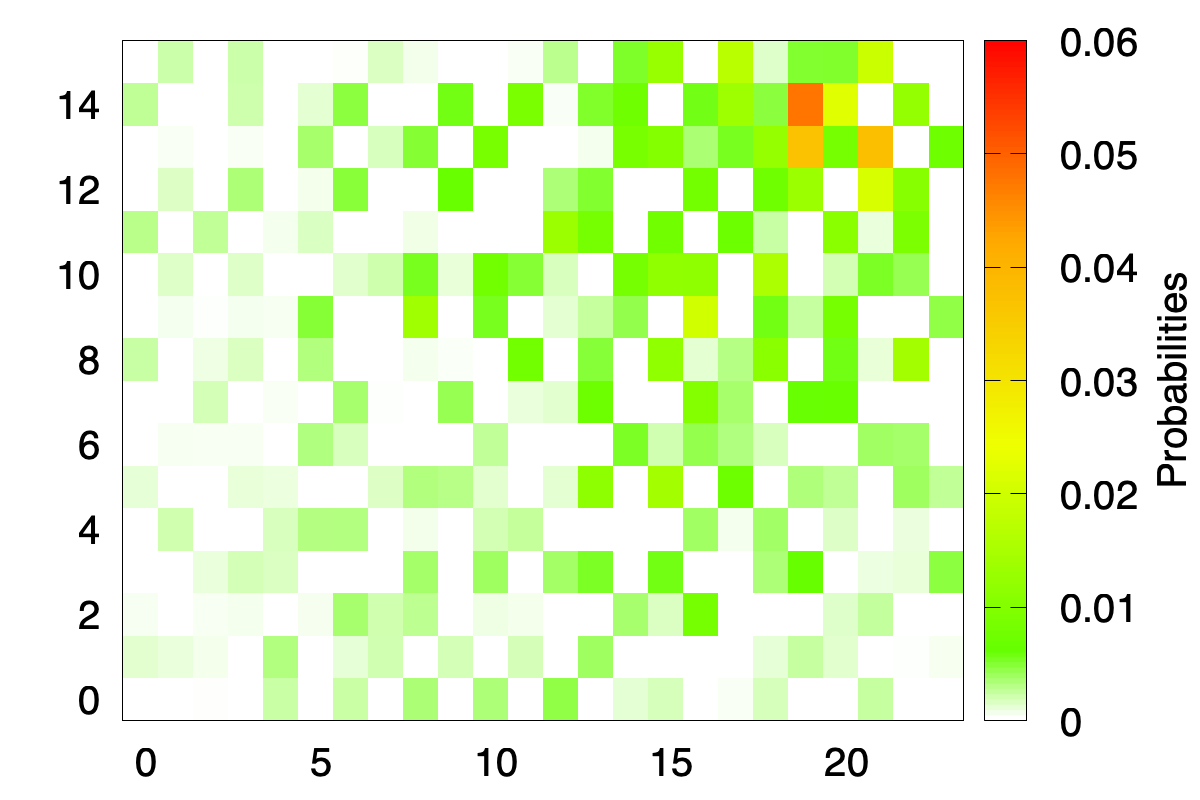}
      }
\subfigure[\emm{} (TV = 0.2238)]{
      \label{fig:geom_grid_estimated_em}
      \includegraphics[width=0.22\textwidth]{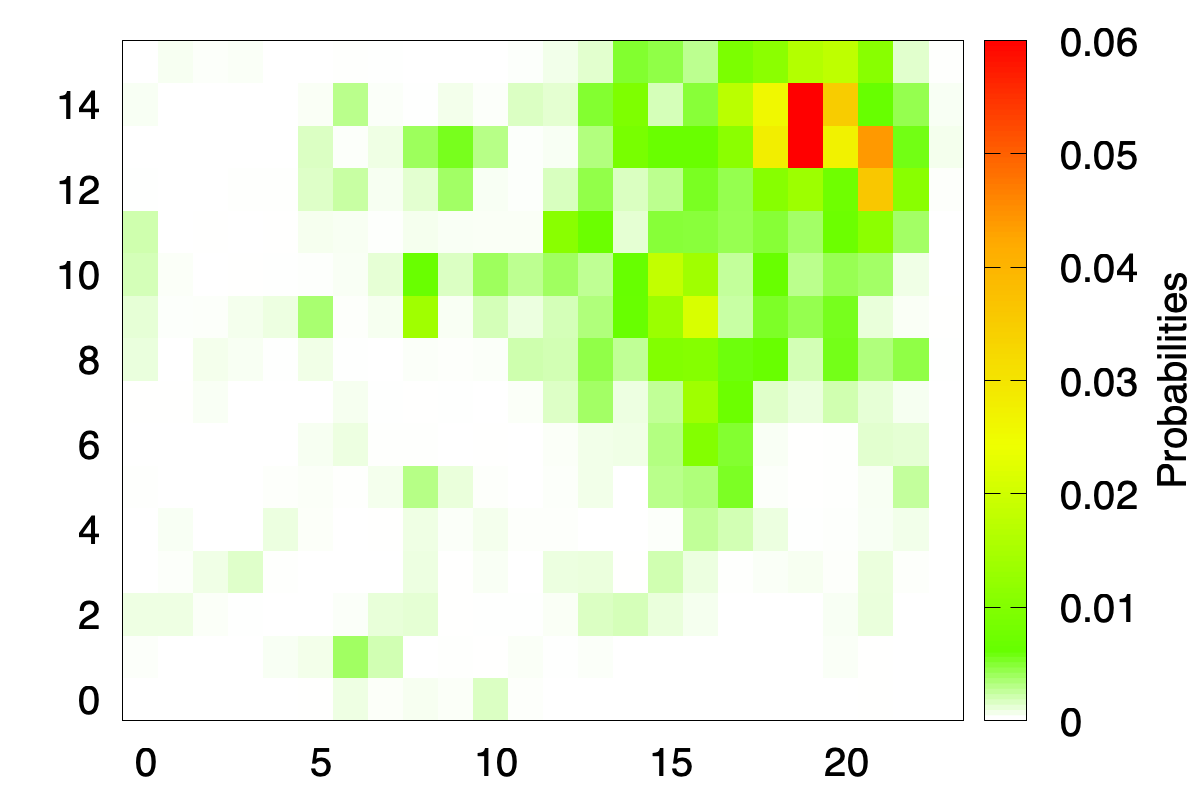}
      }
\caption{Estimation of the Gowalla  original distribution. The noisy data are 
produced by a truncated planar geometric mechanism with $\epsilon=1.0$. 
The shown TV is the total variation distance between the estimation and  the original distribution.}
\label{fig:geometric_on_grid}
\end{figure}
Fig.~\ref{fig:geometric_on_grid} shows the original and noisy distributions on the grid and   the 
distributions estimated by \invn{}, \invp{} and \emm{}. Also in this   case  
we observe that the EM method outperforms both \invn{} and \invp{}. We   also observe that \invp{} is slightly better than \invn{}.
Finally note that the EM in this experiment yields an MLE that is not interior (in the probability simplex) 
since many cells have probability $0$ as shown in Fig.~\ref{fig:geom_grid_estimated_em}. This is a practical situation 
that violates the assumption in \cite{Wu:83:jastat}, and hence motivates our revision of the IBU foundations. 

\subsection{Estimation under various levels of privacy}
We perform again the  experiment described in  Section~\ref{sec:Gowalla}, but now we  vary the privacy parameter $\epsilon$. We use a range for $\epsilon$ between $0.2$ and $6.0$, and for every value we 
run the obfuscation-estimation procedure $100$ times to obtain the boxplot in Fig.~\ref{fig:box_estimators_geom}. The estimation quality is measured  in terms of the 
distance from the true distribution (defined   by the frequencies 
of the check-ins in Gowalla). We consider two  statistical distances: 
the Total Variation (TV) and the Earth Mover's Distance (EMD)\footnote{The EMD is the minimum cost to transform one distribution into another one by moving probability masses between   cells.  \cite{Alvim:18:CSF} argues that it is  particularly appropriate   for location privacy}.
As expected, the estimation quality of all methods improves with larger values of $\epsilon$, corresponding to introducing less  noise. 
We observe that  \emm{} outperforms  \invn{} and \invp{}, especially at low values of $\epsilon$ (stronger levels of privacy). 
We also remark that the EMD is more indicative than the TV for comparing  distributions on locations: Consider the
 distributions  resulting from \invp{} and \invn{}  in  Fig.~\ref{fig:geometric_on_grid} (where $\epsilon=1$): arguably, the first is more similar to the true distribution than the second one, the latter being  quite scattered away from the locations where the probability is accumulated. 
 This qualitative superiority of  \invp{} over  \invn{}   is reflected by the EMD (Fig.~\ref{fig:box_geom_emd}), where \invp{} is 
significantly better than \invn{} for  $\epsilon=1$. On the other hand, 
in Fig.~\ref{fig:box_geom_tv}, the TV distances for \invn{} and   \invp{}  for  $\epsilon=1$ on are almost identical (although they differ from the original for opposite reasons: the one from  \invp{} is too skewed and the one from  \invn{} is too scattered). 
%
%
\begin{figure}[t]
\centering 
\subfigure[Total variation distance]{
      \label{fig:box_geom_tv}
      \includegraphics[width=0.4\textwidth]{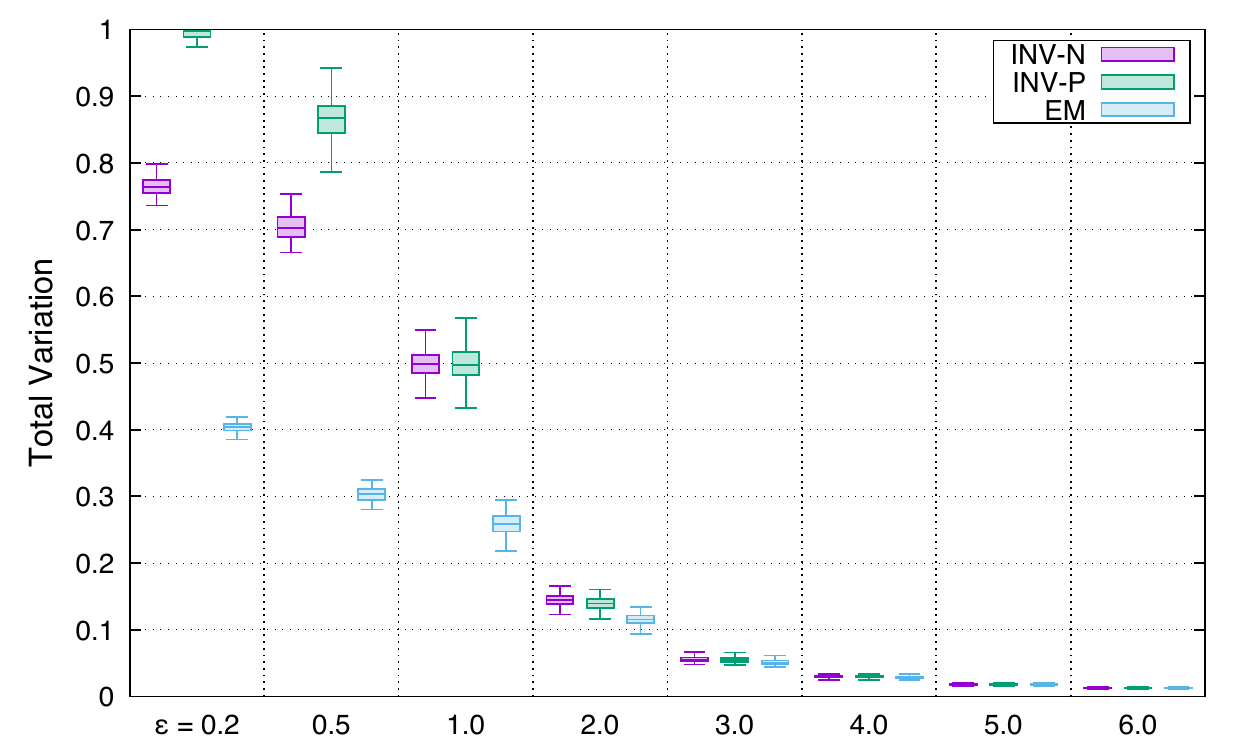}
      }
\subfigure[Earth mover's distance]{
      \label{fig:box_geom_emd}
      \includegraphics[width=0.4\textwidth]{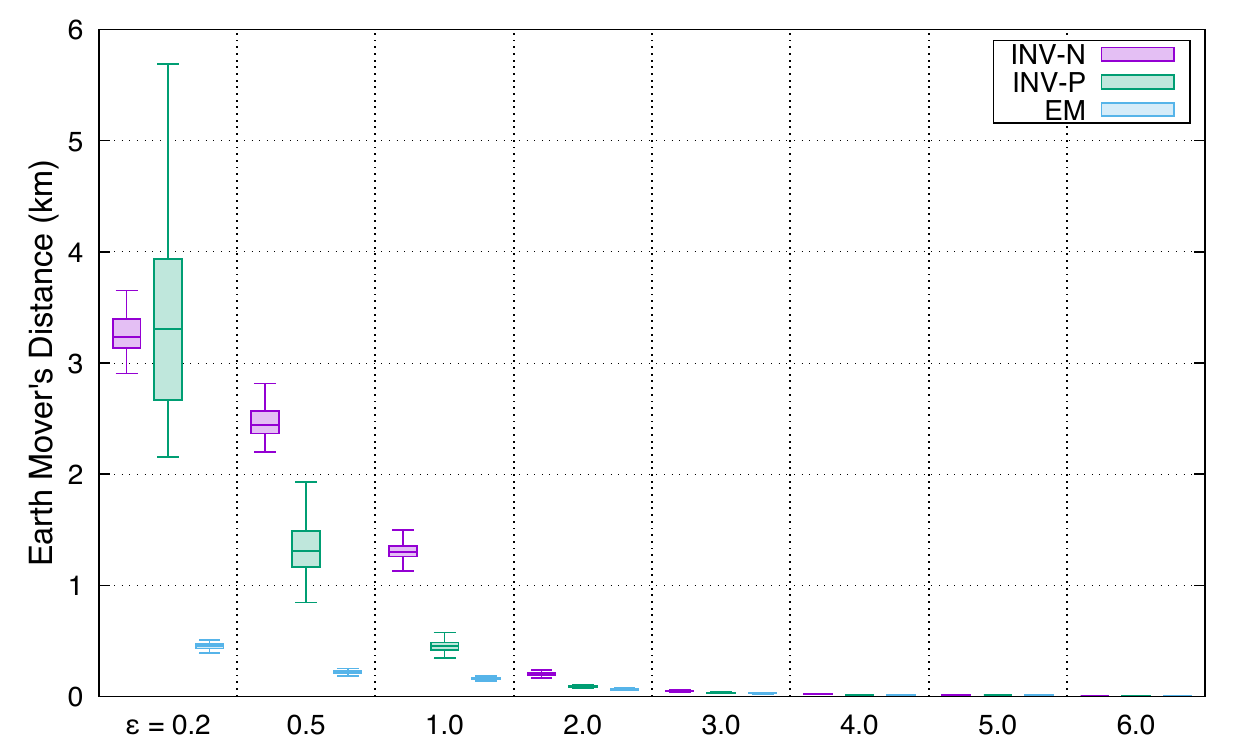}
      }
\caption{Statistical distances between the original and the estimated distributions using \emm{}, \invn{}, and \invp{}. The noisy data are produced using 
truncated planar geometric mechanisms with various values of $\epsilon$.}
\label{fig:box_estimators_geom}
\end{figure}

Fig.~\ref{fig:box_estimators_lap_exp}  shows the results for the Laplace and exponential mechanisms 
with the same range of $\epsilon$ as before, and using the TV to measure the estimation quality. 
Again we observe that \emm{} estimation method substantially outperforms \invp{} and \invn{}. We 
also observe that  the Laplace mechanism produces a better estimation  than the exponential one. This is 
because the latter introduces larger noise. This is in line with a similar result with respect to the quality of service \cite{Chatzikokolakis:17:POPETS}.

\begin{figure}[t]
\centering 
\subfigure[Using the Laplace mechanism]{
      \label{fig:box_lap_tv}
      \includegraphics[width=0.4\textwidth]{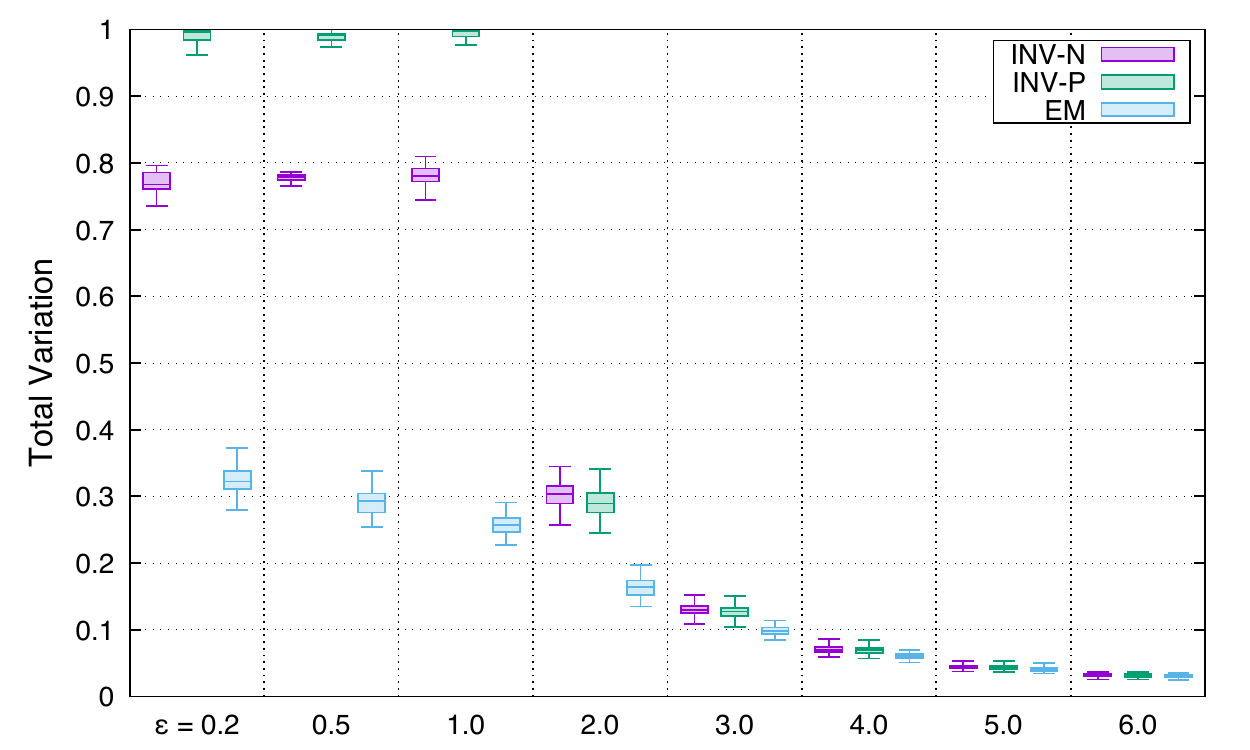}
      }
\subfigure[Using the exponential mechanism]{
      \label{fig:box_exp_emd}
      \includegraphics[width=0.4\textwidth]{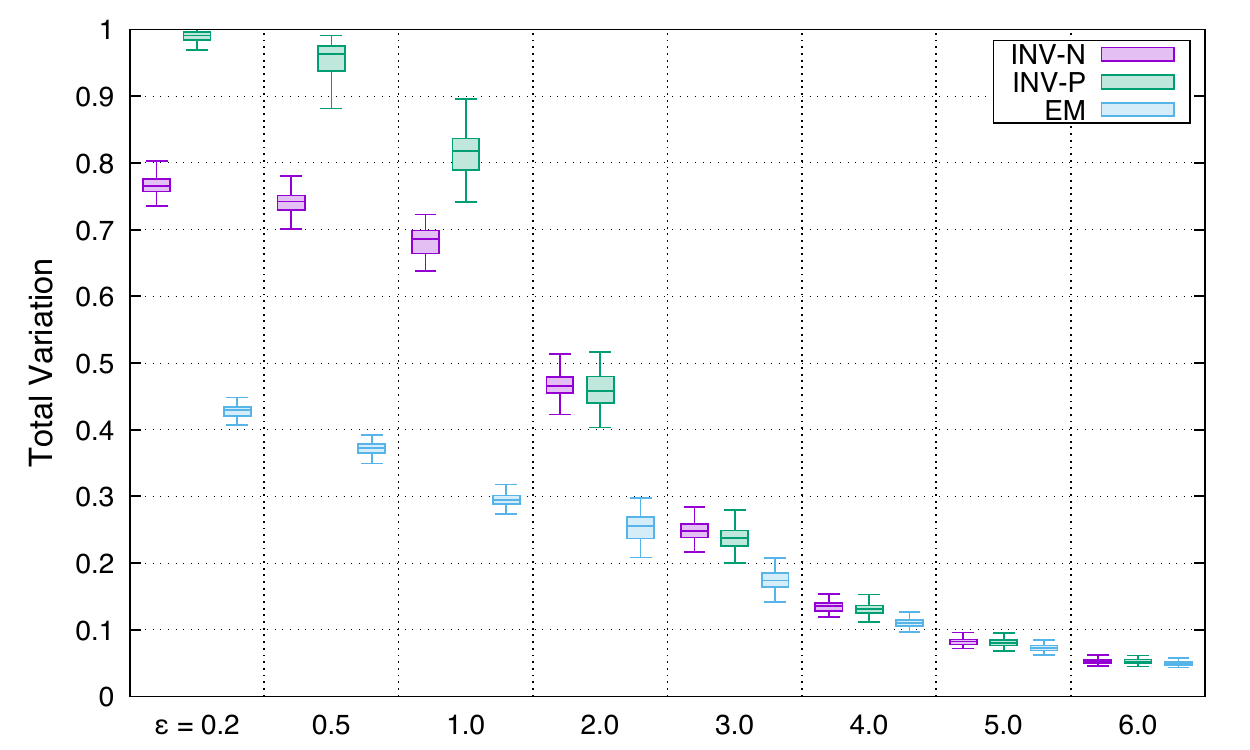}
      }
\caption{The statistical distance, measured by the total variation, between the original and estimated distributions using \emm{}, \invn{}, and \invp{}. 
The obfuscation is performed using (a) the Laplace mechanism, and (b) exponential mechanism.}
\label{fig:box_estimators_lap_exp}
\end{figure}

\subsection{Estimating distributions obfuscated by $k$-RR mechanisms}
%
Now we use the $k$-RR mechanism described in Section \ref{sec:uniqueness:krr} to obfuscate the real data, and 
compare the performance of the three estimation methods on the grid of San Francisco (Fig.~\ref{fig:sf_map}).  
We apply the mechanism using various values of $\epsilon$ between $1.0$ and $10$. The results are shown in 
Fig.~\ref{fig:box_estimators_krr} in which we plot the TV and EMD distances for the three 
estimators, at every value of $\epsilon$. Unlike the cases of geometric, Laplace, and exponential mechanisms, 
we observe that the differences in the performance of estimators are not substantial. Yet, with respect to the TV distance, the \emm{} 
outperforms both \invp{} and \invn{} at all the considered values of $\epsilon$. With respect to the EMD, \emm{} and \invp{}  
have almost the same quality and superior to the \invn{}. 
\begin{figure}[t]
\centering 
\subfigure[Using the total variation distance]{
      \label{fig:box_krr_tv}
      \includegraphics[width=0.4\textwidth]{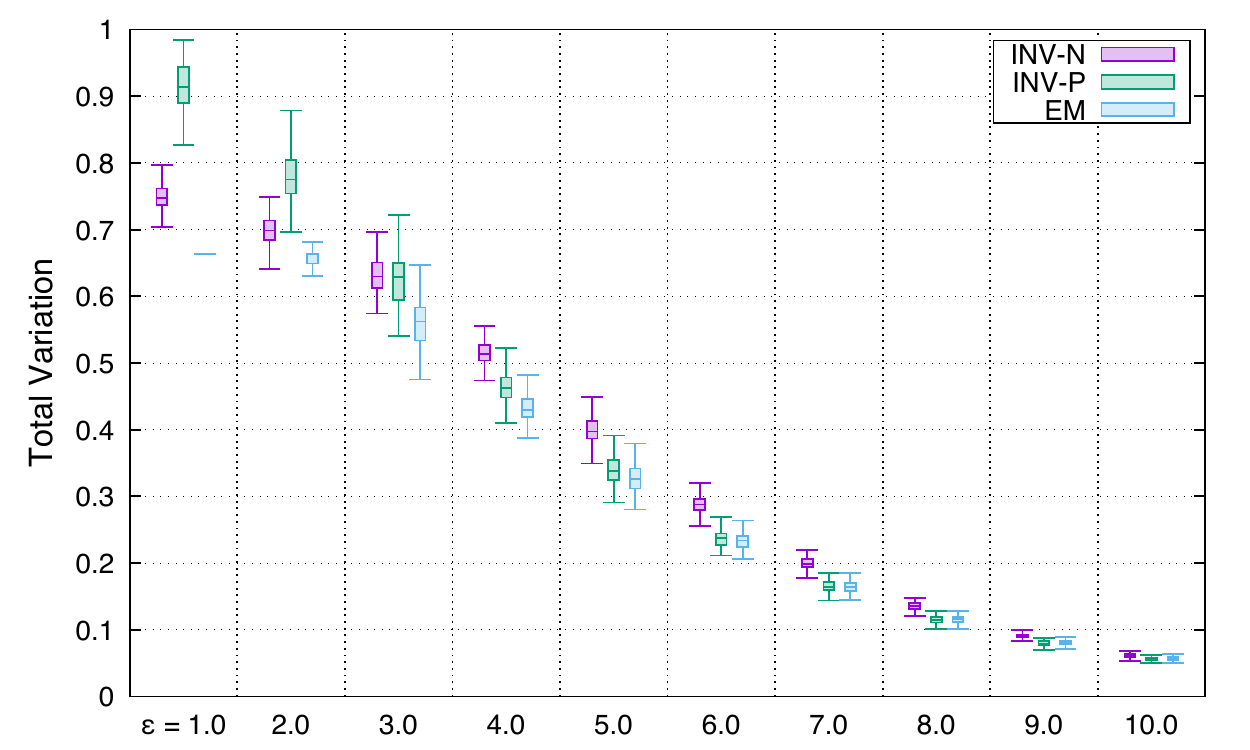}
      }
\subfigure[Using the earth mover's distance]{
      \label{fig:box_krr_emd}
      \includegraphics[width=0.4\textwidth]{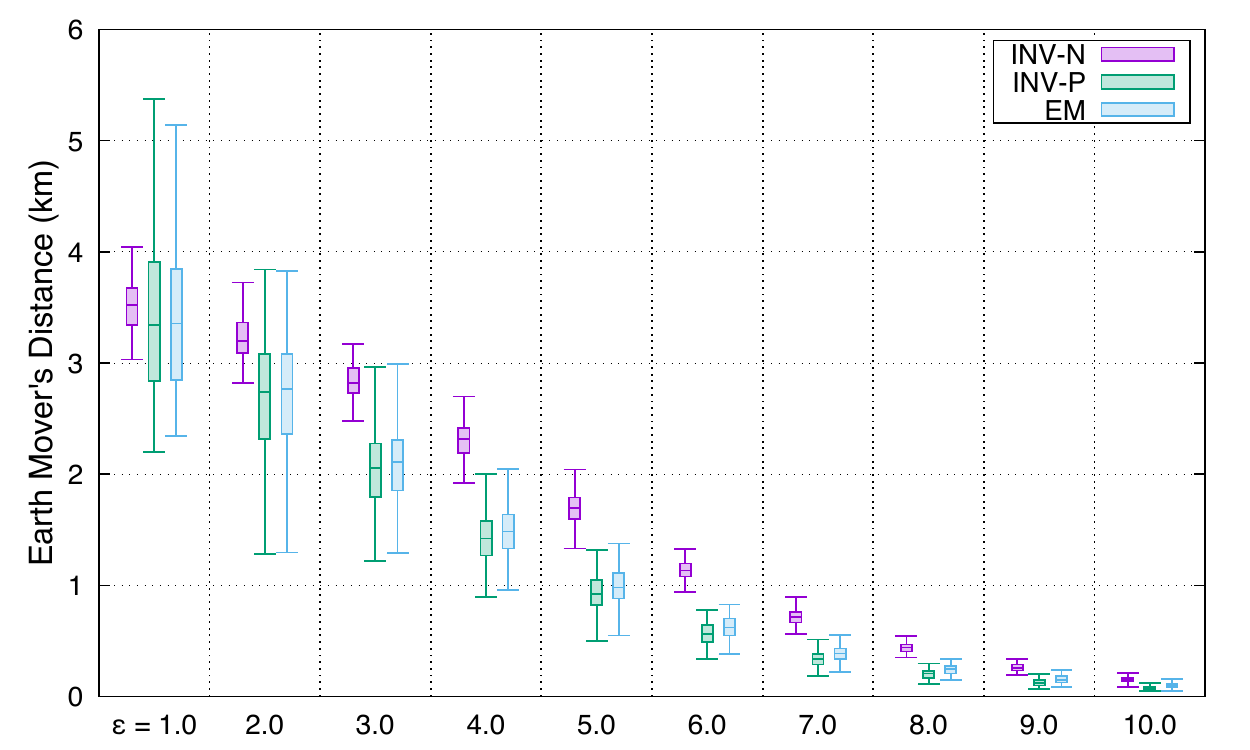}
      }
\caption{The statistical distance (measured by the total variation in (a), and by the earth mover's distance in (b)) 
between the real and estimated distributions using the indicated methods. A $k$-RR mechanism 
with indicated values of $\epsilon$ is used for obfuscating the user data.}
\label{fig:box_estimators_krr}
\end{figure}

\subsection{Estimating distributions obfuscated by \textsc{Rappor}}
\textsc{Rappor} is a mechanism built on the idea of randomized response to allow collecting statistics from 
end-users with differential privacy guarantees \cite{Erlingsson:14:CCS}. 
The basic form of this mechanism, called Basic One-Time \textsc{Rappor} maps the space of input values $\calx$ 
to a space $\calz$ of size $2^{|\calx|}$. More precisely every user datum $x \in \calx$ is encoded in a bit array $B$ of size $|\calx|$
in which only one bit $b_x$ that uniquely corresponds to $x$ is set to $1$, and other bits are set to $0$. 
Then every bit $b$ of $B$ is obfuscated independently to yield the same value $b$ with probability 
$e^{\epsilon/2}/(1+e^{\epsilon/2})$, and $1-b$ with probability $1/(1+e^{\epsilon/2})$. This obfuscation yields a noisy 
bit vector $B'$ that is reported to the server. It can be shown that \textsc{Rappor} satisfies $\epsilon$-local differential privacy.

In order to see the estimation quality of \emm{} with \textsc{Rappor}, we define $\calx= \{0,1, \dots, 9\}$, which is therefore 
mapped to $\calz$ of size $2^{10}$. We sample $10^5$ inputs (the real data) from two distributions: a binomial with $p=0.5$, 
and a uniform distribution on $\{3,4,5,6\}$ with probability $0$ to other elements. In each one of these cases, we 
obfuscate the inputs using a \textsc{Rappor} with $\epsilon=0.5$, and then apply \emm{} to estimate the original distribution. 
Fig.~\ref{fig:rappor_on_line_em} shows the results of this experiment.
\begin{figure}[t]
\centering 
\subfigure[Binomial distribution]{
      \label{fig:binomial_rappor_em}
      \includegraphics[width=0.22\textwidth]{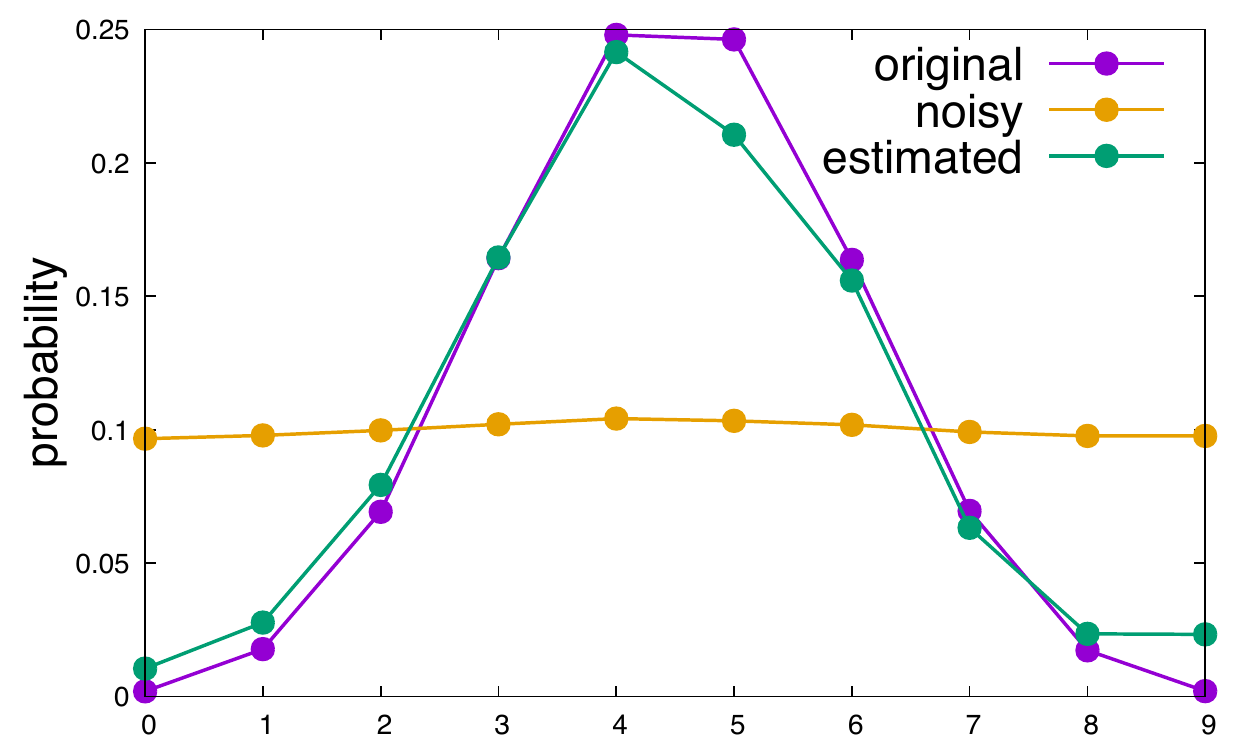}
      }
\subfigure[Uniform on $\{3,4,5,6\}$]{
      \label{fig:uniform_rappor_em}
      \includegraphics[width=0.22\textwidth]{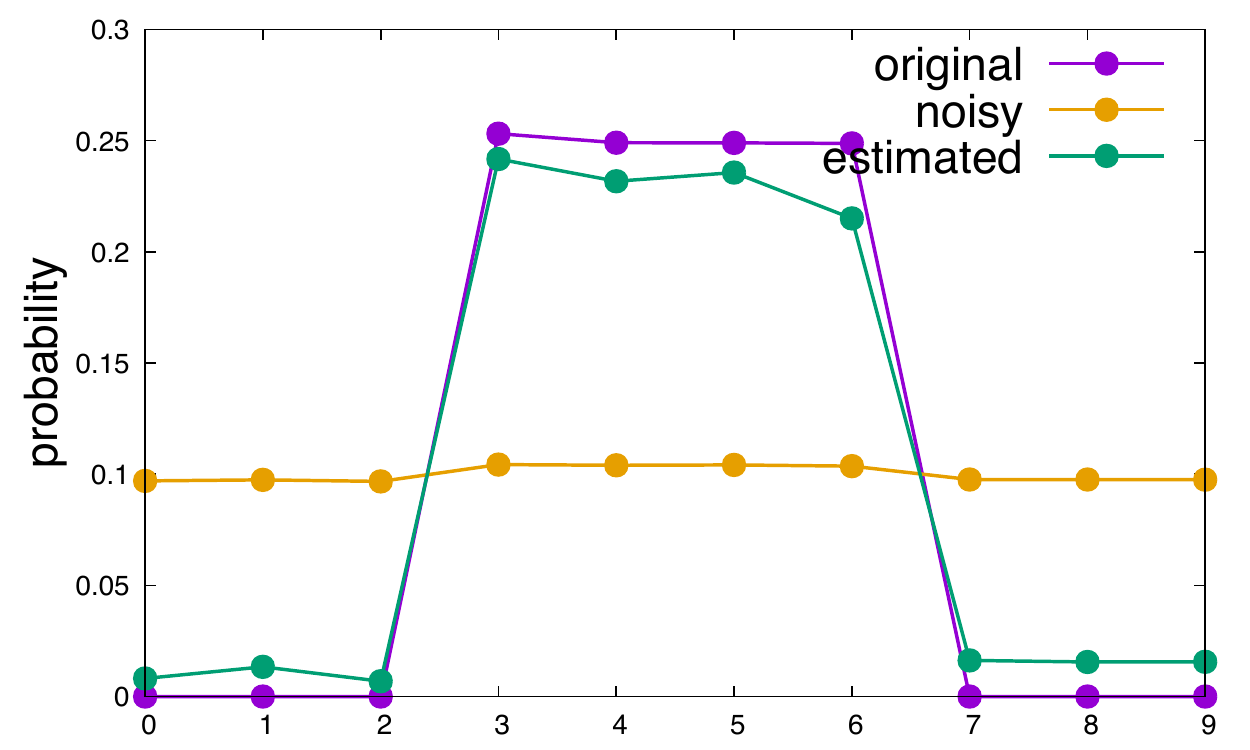}
      }
\caption{Using \emm{} to estimate the original distributions on $\{0,1,\dots, 9\}$ from noisy data produced by a basic one-time 
\textsc{Rappor} mechanism with $10^5$ user inputs and $\epsilon=0.5$. 
In (a), the real distribution is binomial. In (b), the real distribution is 
uniform on $\{3,4,5,6\}$, and assigns probability zero outside this range.}
\label{fig:rappor_on_line_em}
\end{figure}

Based on the fact that \textsc{Rappor} uses the binary randomized response obfuscation, 
the authors of \cite{Kairouz:16:ICML} adapted \invn{} and \invp{} to \textsc{Rappor}. 
We compare the performance of these methods to that 
of \emm{} by running the above experiment using the three estimators for $100$ times. In every run we evaluate the 
total variation (to the original distribution) for every method. Fig.~\ref{fig:box_estimators_rappor} shows the results of 
this procedure for a range of $\epsilon$.  
\begin{figure}[t]
\centering 
\subfigure[Original data sampled from binomial distribution]{
      \label{fig:box_rappor_binomial}
      \includegraphics[width=0.4\textwidth]{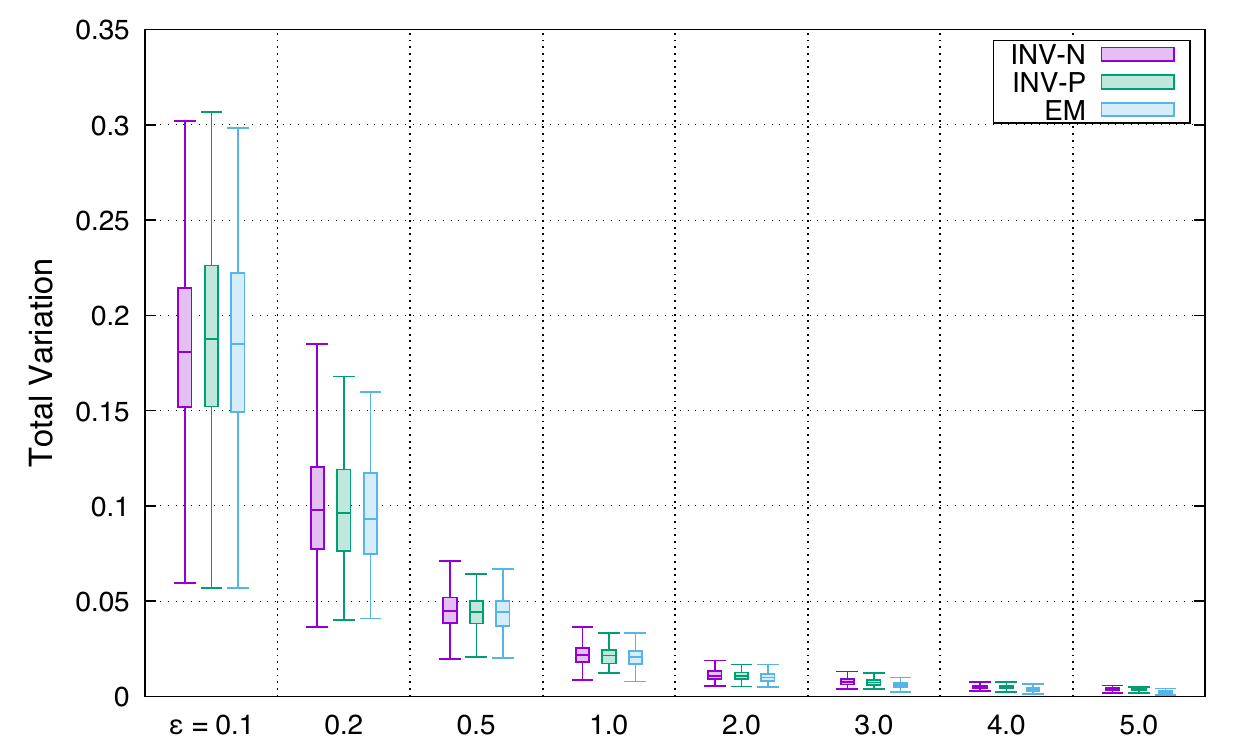}
      }
\subfigure[Original data sampled from uniform distribution]{
      \label{fig:box_rappor_uniform}
      \includegraphics[width=0.4\textwidth]{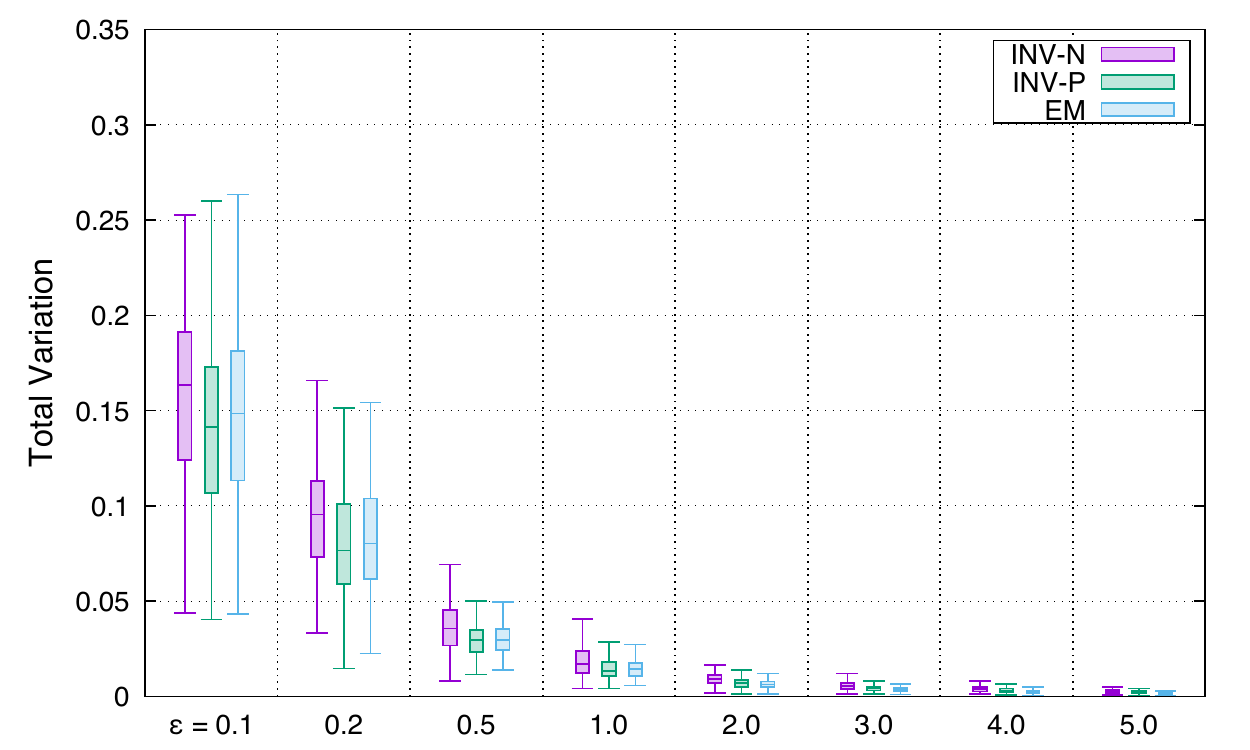}
      }
\caption{The statistical distance between the original and estimated distributions using \emm{}, \invn{}, and \invp{},  
using a \textsc{Rappor} mechanism for obfuscation. Original data of users are sampled from Binomial distribution (a), 
and from uniform distribution on $\{3,4,5,6\}$ (b).}
\label{fig:box_estimators_rappor}
\end{figure}
It can be seen that \invn{}, \invp{} and \emm{} exhibit almost the same estimation quality. This observation is anticipated 
from the similar results that we observed with the $k$-RR mechanism in Fig.~\ref{fig:box_estimators_krr} since both 
\textsc{Rappor} and $k$-RR are based on the randomized response technique.
\section{Related work}\label{sec:related_work}
The papers which are closest to our work have been already discussed in the introduction, so we will not mention them again here. 

Besides the local privacy mechanisms considered in this paper, there are many others that have been proposed, aimed at obtaining a good trade-off between privacy and precision of the statistical analyses.  Apple uses a method similar to $k$-RR to protect its users' privacy in applications such as spelling \cite{Apple:DP}.  Bassily and Smith use Random Matrix Projection  (BLH)  \cite{Bassily:15:STOC}, based on ideas from earlier works \cite{Mishra:06:SPDS,Duchi:13:FOCS}. Wang et al. \cite{Wang:17:USENIX}  have proposed the Optimized Unary
Encoding protocol, which  builds upon the Basic \textsc{Rappor}, and the  Optimized Local Hashing protocol, which is  inspired by BLH \cite{Bassily:15:STOC}.
Wang et al. \cite{Wang:16:BOOK} use both generalized random response
 and Basic \textsc{Rappor} for learning weighted histogram.

\section{Conclusion}\label{sec:conclusion}
In this paper we have analyzed the iterative Bayesian update (IBU) and we have exhibited various flaws in the underlying theory as presented in the literature. 
These flaws raise critical questions about the soundness of IBU. 
Therefore, we have provided new foundations for the IBU, and extended it to a more general local privacy model which allows, for example, the liberty of every user 
to choose his own mechanism, and to change mechanism while the data are collected. We have also studied various instances of this model.

Finally, we have compared the IBU with the matrix inversion method (INV), showing that, while for mechanisms like $k$-RR and \textsc{Rappor} the results are comparable, 
with the geo-indistinguishable  and the exponential mechanisms the results are clearly in favor of the IBU. 


As future work, we also plan to investigate   
how the choice of the   obfuscation mechanism  affects the performance of the IBU. 
This study should lay the basis for  identifying  mechanisms   which optimize the  trade-off between privacy and statistical utility.

\bibliographystyle{ieeetran}
\bibliography{short,new_refs}

\appendix

\appndexsec{The global convergence theorem}
This theorem introduced by Zangwill \cite{Zangwill:69:BOOK} describes for an iterative algorithm $M$ 
sufficient conditions that assures a sequence generated by $M$ in some space $\calc$ to converge to a 
set of points $\Gamma\in \calc$, called a solution set. Here $M$ is modeled as a set valued function mapping 
every point in $\calc$ to a set of points $M(\vtheta) \subset \calc$.    
%
%
%
\begin{theorem}[\cite{Zangwill:69:BOOK}]
\label{thm:gct}
Consider a sequence $(\vtheta^t)_{t\in\naturals}$ satisfying $\vtheta^{t+1}\in M(\vtheta^t)$, 
where $M$ is a point-to-set map on $\calc$. Let a solution set $\Gamma \subset \calc$  
be given, and suppose that: 
(i) all points $\vtheta^t \in\calc$ are contained in a compact subset of $\calc$; 
(ii) $M$ is closed at all points of $\calc \setminus \Gamma$; 
(iii) there is a real-valued continuous function $f$ on $\calc$ such that 
\begin{align*}
&f(\vtheta') > f(\vtheta) \quad \forall  \vtheta\not\in \Gamma, \vtheta' \in M(\vtheta),  \\
&f(\vtheta') \geq f(\vtheta) \quad \forall  \vtheta\in \Gamma, \vtheta' \in M(\vtheta).
\end{align*} 
Then all the limit points of $(\vtheta^t)_{t\in\naturals}$ are in the solution set $\Gamma$ and $f(\vtheta^t)$ converges
monotonically to $f(\vtheta)$ for some $\vtheta \in \Gamma$.
\end{theorem}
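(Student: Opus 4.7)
The plan is to apply a standard limit-point argument that combines the compactness in (i), the closedness of $M$ in (ii), and the monotonicity supplied by (iii). First, I would observe that since the entire sequence $(\vtheta^t)_{t\in\naturals}$ lies in a compact subset of $\calc$, it has at least one convergent subsequence, and $f$ is bounded on this set by continuity. Condition (iii) forces $f(\vtheta^{t+1}) \geq f(\vtheta^t)$ for every $t$, so $(f(\vtheta^t))_{t\in\naturals}$ is monotone non-decreasing and bounded, hence it converges to some value $f^\star \in \reals$.

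The heart of the argument is to show that every limit point $\hat\vtheta$ of $(\vtheta^t)$ lies in $\Gamma$. I would argue by contradiction: pick a subsequence $\vtheta^{t_k} \to \hat\vtheta$ and suppose $\hat\vtheta \notin \Gamma$. The successors $\vtheta^{t_k+1}\in M(\vtheta^{t_k})$ again lie in the same compact set, so by passing to a further subsequence we may assume $\vtheta^{t_k+1} \to \tilde\vtheta$ for some $\tilde\vtheta \in \calc$. Closedness of $M$ at $\hat\vtheta$ (condition ii) then yields $\tilde\vtheta \in M(\hat\vtheta)$. Since $\hat\vtheta \notin \Gamma$, condition (iii) gives the strict inequality $f(\tilde\vtheta) > f(\hat\vtheta)$. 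However, continuity of $f$ combined with $f(\vtheta^t) \to f^\star$ forces $f(\hat\vtheta) = f^\star = f(\tilde\vtheta)$, contradicting the strict inequality. Hence $\hat\vtheta \in \Gamma$, and $f(\vtheta^t)$ converges monotonically to $f(\hat\vtheta) = f^\star$ as required.

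The main obstacle I would expect is the careful handling of the closedness assumption (ii) on the point-to-set map. It is the closedness at $\hat\vtheta$ that allows us to take the limit inside $M$, passing from $\vtheta^{t_k+1}\in M(\vtheta^{t_k})$ to $\tilde\vtheta \in M(\hat\vtheta)$; without this, the update map could behave erratically at a non-solution limit point and the strict increase from (iii) would be unavailable. A minor technical subtlety is the double extraction of subsequences, since $(\vtheta^{t_k+1})$ need not converge even when $(\vtheta^{t_k})$ does, but this is handled routinely by the compactness in (i). Once these two points are in place, the rest is bookkeeping with continuity of $f$ and the monotone convergence of a bounded real sequence.
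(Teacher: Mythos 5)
Your proof is correct: the monotone-and-bounded argument for $f(\vtheta^t)$, the extraction of a convergent subsequence $\vtheta^{t_k}\to\hat\vtheta$ together with a further subsequence of the successors $\vtheta^{t_k+1}\in M(\vtheta^{t_k})$, the use of closedness of $M$ at the hypothesized non-solution point $\hat\vtheta$ to get $\tilde\vtheta\in M(\hat\vtheta)$, and the contradiction between the strict increase $f(\tilde\vtheta)>f(\hat\vtheta)$ and $f(\hat\vtheta)=f^\star=f(\tilde\vtheta)$ is exactly the classical argument. The paper itself offers no proof of this statement --- it is quoted verbatim from Zangwill's global convergence theorem --- and your argument coincides with the standard proof given there, so there is nothing to fix.
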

%
%

\appndexsec{planar geometric mechanisms}
\label{sec:plan_geom}
%
Suppose that the planar space is discretized by an infinite grid of squared cells, where the side length
of each cell is $s$. Let $\calg$ be the set of centers of these cells. Then every 
element of $\calg$ is indexed by its coordinates $(i,j) \in \mathbb{Z}^2$. For any $\calx \subseteq \calg$, 
a \emph{planar geometric mechanism} (parametrized by $\epsilon,s$) reports a point $z\in\calg$ 
from a real location $x\in\calx$ according 
to the probability 
\begin{align}
P(z | x) &= \lambda\, e^{-\epsilon\, \distp( x , z )} \qquad x\in \calx, z \in \calg \ \label{eq:pg} \\
	        \text{where }
	        \lambda &= 1 / \sum_{(i,j) \in \mathbb{Z}^2} e^ {- \epsilon\, s \, \sqrt{i^2+j^2}},
	                \nonumber
\end{align}
and $\distp$ is the planar (i.e. Euclidean) distance. 
For $\calx$ finite we can defined a \emph{truncated} version of the above geometric mechanism. Basically it is obtained by drawing  points in $\calg$ 
according to  the above distribution and then remapping each of them to its 
nearest point in $\calx$.  

\appndexsec{proofs}
We present here the Lemmas and proofs omitted from the paper due to space constraints.

\begin{restatable}[Derivative of the log-likelihood function]{lemma}{LQderivatives}
\label{prop:LQ-derivatives}
Consider a parameter space $\calc \subset \reals^m$ and any curve $\valpha: I \to \calc$. 
Then for every $\tau' \in I$ it holds  that
$
\frac{d L(\valpha(\tau))}{d \tau} = \frac{d Q(\valpha(\tau) | \valpha(\tau'))}{d \tau} 
$ 
at $\tau=\tau'$.
\end{restatable}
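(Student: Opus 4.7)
The plan is to leverage the decomposition $L(\vtheta) = Q(\vtheta \mid \vtheta') + H(\vtheta \mid \vtheta')$ established in equation~\eqref{eq:LQH}. Fix $\tau' \in I$ and set $\vtheta' := \valpha(\tau')$. Substituting $\vtheta = \valpha(\tau)$ into the decomposition and differentiating with respect to $\tau$ (with $\vtheta'$ held fixed) gives
\begin{equation*}
\frac{d L(\valpha(\tau))}{d \tau} = \frac{d Q(\valpha(\tau) \mid \vtheta')}{d \tau} + \frac{d H(\valpha(\tau) \mid \vtheta')}{d \tau}.
\end{equation*}
The whole task then reduces to showing that the $H$-derivative vanishes at $\tau = \tau'$.

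For this I would invoke Gibbs' inequality, which was already used in the preliminaries to justify $H(\vtheta \mid \vtheta') \geq H(\vtheta' \mid \vtheta')$ for every $\vtheta$ in the parameter space. Applied along the curve, this says that the real-valued function
\begin{equation*}
\tau \;\longmapsto\; H(\valpha(\tau) \mid \vtheta')
\end{equation*}
attains a global minimum at $\tau = \tau'$, since $\valpha(\tau') = \vtheta'$. Because $\valpha$ is differentiable by assumption and $H(\cdot \mid \vtheta')$ is differentiable in its first argument (it is a finite sum of terms of the form $-p_s \log p_s(\vtheta)$ with strictly positive weights, inherited from the differentiability assumptions implicit in the EM framework), the composite $\tau \mapsto H(\valpha(\tau) \mid \vtheta')$ is differentiable at $\tau'$. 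A differentiable function attaining a local minimum at an interior point of its domain must have derivative zero there, so
\begin{equation*}
\left.\frac{d H(\valpha(\tau) \mid \vtheta')}{d \tau}\right|_{\tau=\tau'} = 0.
\end{equation*}
(If $\tau'$ is an endpoint of $I$, the same conclusion follows by considering the one-sided derivative: at a minimum, the right derivative is $\geq 0$, but we will use this lemma in contexts where $\tau'$ is interior; alternatively one can argue by extending $\valpha$.)

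Combining the two displays yields the claim. The only genuine subtlety is making sure one can interchange differentiation and the finite sum defining $H$, and that the second argument of $Q$ is treated as constant when taking the derivative; both are immediate from the notation $Q(\valpha(\tau) \mid \valpha(\tau'))$, in which $\tau'$ is the fixed evaluation point. I do not expect any serious obstacle here: the argument is essentially a one-line application of Gibbs' inequality plus the first-order optimality condition at a minimum.
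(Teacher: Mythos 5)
Your route is genuinely different from the paper's. Both proofs start the same way, using \eqref{eq:LQH} to reduce the claim to showing that the derivative of $\tau\mapsto H(\valpha(\tau)\mid\valpha(\tau'))$ vanishes at $\tau=\tau'$; but you then argue abstractly — Gibbs' inequality makes $\tau'$ a global minimizer of this function, so first-order optimality gives derivative zero — whereas the paper differentiates the explicit expression $H(\valpha(\tau)\mid\valpha(\tau'))=-\sum_{s}\psi_s(\tau')\log\psi_s(\tau)$ by the chain rule and uses $\sum_s\psi_s(\tau)\equiv 1$ to conclude. For $\tau'$ in the interior of $I$ your argument is correct (under the same implicit smoothness and positivity assumptions on the posteriors $\psi_s$ that the paper also makes), and it is arguably the more conceptual of the two.

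The gap is the endpoint case, which you flag but dismiss on grounds that do not hold for this paper. The lemma quantifies over every $\tau'\in I$, and it is consumed in the proof of Theorem~\ref{thm:conv:em} along an arbitrary curve witnessing $\vtheta'\notin\xsolnset$ (Definition~\ref{def:ext_solution_set}); nothing there forces $\tau'$ to be interior to $I$. Worse, for points on the border of the simplex — exactly the situation this paper is designed to cover — the natural curves through such a point have it as an \emph{endpoint} of the parameter interval (compare the segment $\valpha(\tau)=(1-\tau)\vtheta+\tau\vphi$, $\tau\in[0,1]$, used with $\vtheta=\valpha(0)$ in the proof of Theorem~\ref{thm:convex_space}, where the derivative is necessarily one-sided). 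At an endpoint, minimality of $H$ only yields that the one-sided derivative is $\geq 0$, not $=0$, so your main argument stops short; and your fallback of extending $\valpha$ is not available in general, since the extension must remain inside $\calc$ for $L$, $Q$ and $H$ to be defined, which can fail precisely when $\valpha(\tau')$ is on the boundary of $\calc$. The repair is to fall back on the paper's direct computation: differentiating $-\sum_s\psi_s(\tau')\log\psi_s(\tau)$ term by term gives, at $\tau=\tau'$, the value $-\sum_s \frac{d\psi_s}{d\tau}(\tau')=0$ because $\sum_s\psi_s(\tau)$ is constant, and this calculation is equally valid for one-sided derivatives, so the identity holds at endpoints as well.
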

\begin{proof}
By (\ref{eq:LQH}), it is clear that the statement of the proposition is equivalent to
$d H(\valpha(\tau) | \valpha(\tau')) / d \tau = 0$ at $\tau=\tau'$, where $H(.|.)$ is defined by (\ref{eq:H}). 
We start by rewriting $H(.|.)$ in a more convenient form as follows.  
\[
H( \valpha(\tau) | \valpha(\tau') ) = - \sum_{w \in \cals} \psi_w(\tau') \log \psi_w(\tau)
\]
where $\psi_w(\tau) = P(S=w | O = o ; \valpha(\tau))$ for every $w \in \cals$.  
For more convenience we will write $\psi'_w = \psi_w(\tau')$.  
Now we can evaluate the derivative of $H$ with respect to $\tau$ using the 
chain rule of derivatives as follows. 
\begin{equation}\label{eq:dHdt}
\frac{d H(\valpha(\tau) | \valpha(\tau'))}{d \tau} = \sum_{s \in \cals} \frac{\partial H}{\partial \psi_s} \frac{d \psi_s}{d \tau}. 
\end{equation}
Using the definition of $H$ and the linearity of $\partial/{\partial \psi_s}$ we have 
\[
\frac{\partial H}{\partial \psi_s} = - \sum_{w \in \cals} \frac{\partial}{\partial \psi_s}\left(\psi'_w \log \psi_w\right) = - \psi'_s/\psi_s. 
\]
By substituting the above expression in (\ref{eq:dHdt}), and observing that at $\tau=\tau'$
we have $\psi_s=\psi'_s$ for all $s \in \cals$, we obtain
\[
\frac{d H(\valpha(\tau) | \valpha(\tau'))}{d \tau} \Big |_{\tau=\tau'} = - \sum_{s \in \cals} \frac{d \psi_s}{d \tau} = 0
\]
where the last equality follows from $\sum_{s \in \cals} \psi_s(\tau) =1$. 
\end{proof}

\EMconvergence*
\begin{proof}
We prove the theorem by showing that the conditions of Theorem \ref{thm:gct} are satisfied. 
Condition (i) is clearly satisfied. 
Condition (ii) is satisfied by the 
continuity of $Q(\vtheta | \vtheta')$ in both $\vtheta, \vtheta'$ as noted by \cite{Wu:83:jastat}. 
For Condition (iii) we define $f(.) = L(.)$ which is continuous, and proceed to 
prove the two parts of (iii) as follows. 
The map $\mapem$ of the EM algorithm is defined for every $\vtheta' \in \calc$ (by the M-step) 
as follows. 
\begin{equation}\label{eg:em-map}
\mapem(\vtheta') = \{ \vtheta \in \calc : Q(\vtheta |\vtheta') \geq Q(\vtheta'' |\vtheta')\,\forall \vtheta'' \in \calc\}.
\end{equation}
It is clear that the definition of $\mapem$ together with (\ref{eq:dQdL}) implies 
that $L(\vtheta) \geq L(\vtheta')$
for all $\vtheta' \in \calc$ and $\vtheta \in \mapem(\vtheta')$, hence satisfying the second part of 
Condition (iii) in Theorem \ref{thm:gct}. 
It remains to prove first part of (iii) with respect to $\mapem, \xsolnset$. 
Consider any $\vtheta' \not\in \xsolnset$. Then by Definition \ref{def:ext_solution_set} of $\xsolnset$, 
there must be a curve $\valpha:I \to \calc$ such that $\vtheta' = \valpha(\tau')$ for some $\tau' \in I$ 
and $\vtheta'$ is neither stationary nor a local maximizer along $\valpha$. This means 
\begin{align}
&| d L(\valpha(\tau)) / d \tau |_{\tau=\tau'} = c >0, \label{eq:non_s} \\
&\forall \delta>0 \, \exists \tau'' \in I :0<|\tau'' - \tau'|<\delta, \nonumber\\
&\qquad\qquad\qquad\qquad L(\valpha(\tau''))>L(\valpha(\tau')) \label{eq:non_m}. 
\end{align}
Using the definition of derivatives and that $|\lim_{\tau \to \tau'} f(\tau)| = \lim_{\tau \to \tau'} |f(\tau)| $
for any function $f$, 
we imply from (\ref{eq:non_s}) that 
\begin{equation}\label{eq:non_s2}
\lim_{\tau \to \tau'} \Big| \frac{L(\valpha(\tau)) - L(\valpha(\tau')) }{\tau - \tau'} \Big| = c.
\end{equation}
It also follows from Proposition \ref{prop:LQ-derivatives} and the definition of derivatives that 

\begin{align}\label{eq:non_m2}
\lim_{\tau \to \tau'} \Big( \frac{Q(\valpha(\tau)|\valpha(\tau')) - Q(\valpha(\tau')|\valpha(\tau'))}{ \tau - \tau' } &- \nonumber\\ 
\frac{L(\valpha(\tau)) - L(\valpha(\tau')) }{ \tau - \tau' } \Big) &= 0.
\end{align}

%
Choose any $\epsilon$ such that $0<\epsilon<c/2$. Then by the definition of limits there 
are $\delta_\epsilon>0$ and (by (\ref{eq:non_m})) $\tau''\in I$ with $|\tau'' - \tau'|<\delta_\epsilon$ 
and $L(\valpha(\tau'')) > L(\valpha(\tau'))$ such that
\begin{align}
&- \epsilon < \frac{L(\valpha(\tau'')) - L(\valpha(\tau')) }{| \tau'' - \tau' |} - c < \epsilon, \label{eq:dL} \\ 
&- \epsilon < \frac{Q(\valpha(\tau'')|\valpha(\tau')) - Q(\valpha(\tau')|\valpha(\tau'))}{| \tau'' - \tau' |} - \nonumber\\
&\qquad\qquad\qquad\qquad \frac{L(\valpha(\tau'')) - L(\valpha(\tau')) }{| \tau'' - \tau' |} < \epsilon \label{eq:dQ}
\end{align}
Inequalities in (\ref{eq:dL}) follow from (\ref{eq:non_s2}) with the substitution 
$|L(\valpha(\tau'')) - L(\valpha(\tau'))| = L(\valpha(\tau'')) - L(\valpha(\tau'))$ since 
$L(\valpha(\tau'')) > L(\valpha(\tau'))$.
Inequalities (\ref{eq:dQ}) follow from (\ref{eq:non_m2}). 
Now from the left inequalities in both (\ref{eq:dL}) and (\ref{eq:dQ}) we obtain respectively
\begin{align*}
&(c-\epsilon)\, |\tau'' - \tau'| <  L(\valpha(\tau'')) - L(\valpha(\tau')),\\
&L(\valpha(\tau'')) - L(\valpha(\tau')) - \epsilon | \tau'' - \tau' | < \\
&\qquad\qquad\qquad\qquad Q(\valpha(\tau'')|\valpha(\tau')) - Q(\valpha(\tau')|\valpha(\tau')). 
\end{align*}
By substituting the first inequality in the second one, we obtain
\[
(c- 2\epsilon)|\tau'' - \tau'| < Q(\valpha(\tau'')|\valpha(\tau')) - Q(\valpha(\tau')|\valpha(\tau'))
\] 
which implies that $Q(\valpha(\tau'')|\valpha(\tau')) > Q(\valpha(\tau')|\valpha(\tau'))$ because $c > 2\epsilon$. 
Since $\vtheta' = \valpha(\tau')$ and $\valpha(\tau'') \in \calc$ then the above inequality together with the definition 
of $\mapem$ in (\ref{eg:em-map}) imply that for all $\vtheta \in \mapem(\vtheta')$ we have 
$Q(\vtheta | \vtheta') > Q(\vtheta'|\vtheta')$ which implies by (\ref{eq:dQdL}) that $L(\vtheta) > L(\vtheta')$.  Thus the first 
part of condition (iii) in Theorem \ref{thm:gct} is satisfied. 
\end{proof}

\convex*
\begin{proof}
By Definition \ref{def:ext_solution_set} of $\xsolnset$, it is clear that every global maximizer of $L(.)$ 
is in $\xsolnset$. We show in the following that every element $\vtheta\in \xsolnset$ is a global maximizer. 
Consider any $\vtheta\in \xsolnset$ and $\vphi \in \calc$, and assume (for a contradiction) that $L(\vtheta) < L(\vphi)$. 
Consider the curve $\valpha(\tau)=(1-\tau)\vtheta + \tau \vphi$ for $\tau\in [0,1]$ which is the straight line segment 
between $\vtheta, \vphi$. Note that $\vtheta= \valpha(0), \vphi=\valpha(1)$, and also $\valpha: [0,1] \to \calc$ since 
$\calc$ is convex. Now it can be seen that $\vtheta$ is not stationary along $\valpha$ since 
\begin{align*}
&\frac{d L(\valpha(\tau))}{d \tau} \big|_{\tau \to 0} = \lim_{\tau \to 0} \frac{L( (1-\tau)\vtheta+\tau \vphi ) - L(\vtheta)}{\tau}\geq\\
&\lim_{\tau \to 0} \frac{(1-\tau) L(\vtheta)+\tau L( \vphi) - L(\vtheta)}{\tau} = L(\vphi)-L(\vtheta)>0 
\end{align*}
where the first inequality above follows from the concavity of $L(.)$, and the last inequality follows from the 
assumption $L(\vtheta) < L(\vphi)$. We also show in the following that $\vtheta$ is not a local maximizer along 
$\valpha$. Consider any $0<\delta$, and choose any $\tau$ such that $0<\tau< \min\{\delta,1\}$. 
Then $\tau \in [0,1]$ and $|\tau - 0|<\delta$.  
Now by the concavity of $L(.)$ it holds that
\begin{align*} 
&L(\valpha(\tau)) \geq (1-\tau) L(\vtheta)+\tau L( \vphi) > \\ 
&\qquad\qquad\qquad\qquad (1-\tau) L(\vtheta)+\tau L( \vtheta)= L(\vtheta)
\end{align*}
where the second (strict) inequality follows from the assumption that $L(\vtheta) < L(\vphi)$ and that
$\tau>0$. Therefore $\vtheta$ is not a local maximizer. We conclude that $\vtheta$ is neither stationary 
nor a local maximizer along $\valpha$, contradicting that $\vtheta\in \xsolnset$.  
\end{proof}

\estep*
\begin{proof}
Since $\{X^i: 1\leq i \leq n \}$ are mutually independent,  we can write $Q(.|.)$ as follows
\begin{align*}
Q(\vtheta | \vtheta^t) &=\sum_{i=1}^n \E \left[ \log P(X^i,Z^i | \vtheta) | Z^i= \vz^i ; \vtheta^t \right] \\
                                &=\sum_{i=1}^n \sum_{x\in\calx} 
                                                          P(X^i=x | Z^i= \vz^i ; \vtheta^t) \\[-5mm]
                                                           &\qquad\qquad\qquad\qquad \log P(X^i=x,Z^i=\vz^i | \vtheta)\\
                                &=\sum_{i=1}^n \sum_{x\in\calx}
                                                           P(X^i=x | Z^i= \vz^i ; \vtheta^t) \\[-5mm]
                                                           &\qquad\qquad\qquad\qquad\log \theta_x P(Z^i=\vz^i | X^i=x)\\
                                &=\sum_{x\in\calx} \psi_x(\vz, \vtheta^t) \log \theta_x + \\[-5mm]
                                         &\qquad\qquad\sum_{x\in\calx} \sum_{i=1}^n P(X^i=x | Z^i= \vz^i ; \vtheta^t) \log g_{x i}.
\end{align*}
where the last equality follows from the definition of $\psi_x(\vz, \vtheta^t)$ and 
Definition \ref{def:probability_matrix} for $g_{x i}$. 
\end{proof}

\mstep*
\begin{proof}
Since we require every $\vtheta$ to satisfy the constraint $\sum_{x} \theta_x=1$, we use the 
Lagrange multiplier $\lambda$ to preserve this constraint and therefore maximize the following function 
\[
\call(\vtheta, \lambda) = Q(\vtheta | \vtheta^t) + \lambda(\sum_{x\in \calx} \theta_x -1). 
\]
The above function is maximum at $\vtheta, \lambda$ that satisfy $\partial \call /\partial \theta_x =0$ for 
all $x\in \calx$, and also $\partial \call /\partial \lambda =0$. Then, by Theorem \ref{thm:estep} we obtain 
$\partial \call /\partial \theta_x = \psi_x(\vz, \vtheta^t)/ \theta_x +\lambda =0$ which yields 
$\theta_x = -\psi_x(\vz, \vtheta^t)/ \lambda$ for all $x \in \calx$. From the condition $\partial \call /\partial \lambda =0$ 
we obtain $\lambda = - \sum_{x\in\calx} \psi_x(\vz, \vtheta^t)= -n$ where the last equality follows from the definition of 
$\psi_x(\vz, \vtheta^t)$ in Theorem \ref{thm:estep}. Thus $\call(\vtheta, \lambda)$ is maximized at $\vtheta$ with 
$\theta_x = \psi_x(\vz, \vtheta^t)/ n$ for all $x \in \calx$. The proof is finally completed by substituting $\psi_x(\vz, \vtheta^t)$ 
with its definition in Theorem \ref{thm:estep}.
\end{proof}

\compact*
\begin{proof}
From the definition of $\calc', \calc$ it is clear that $\calc' \subset \calc \subset \reals^{|\calx|}$.
Then by the Bolzano–Weierstrass theorem, $\calc'$ is compact if and only if it is 
bounded in $\reals^{|\calx|}$ and closed. It is clear that $\calc'$ is bounded since $\calc$ is. 
It is also closed, i.e. containing its limit points as follows. Every limit point $\vtheta^*$ of $\calc'$ 
is the limit of some sequence $(\vtheta^t)_{t\in\naturals}$ sampled from $\calc'\setminus \vtheta^*$, 
i.e. $\lim_{t\to\infty} \vtheta^t=\vtheta^*$. 
Here it can be seen from the definitions of $\calc',\calc$ that 
$\theta^*_x \geq 0$ for all $x\in \calx$ and $\sum_{x\in \calx} \theta^*_x =1$. It also follows that 
$L^- \leq \lim_{t\to\infty} L(\vtheta^t) = L(\vtheta^*)$ where the last equality holds by the 
continuity of $L(.)$ on $\calc$, and therefore on $\calc'$. Thus $\vtheta^* \in \calc'$, 
hence $\calc'$ is closed.
\end{proof}

\convexityLPM*
\begin{proof}
$L(.)=\sum_{i=1}^n L_i(.)$ is concave because $L_i(.)$ is concave by (\ref{eq:L_lpm}) for all $i$.    
$\calc$ is convex. In fact for every two distributions $\vtheta, \vphi \in \calc$
and every $\tau \in [0,1]$, the distribution $\vtheta' = (1-\tau) \vtheta + \tau \vphi$ is in $\calc$ because 
$\vtheta'$ is a distribution and $L(\vtheta') \geq  (1-\tau) L(\vtheta) + \tau L(\vphi)> -\infty$ where the first 
inequality follows from the concavity of $L(.)$ and the second from $\vtheta, \vphi \in \calc$. 
\end{proof}

\algconv*
\begin{proof}
We show first that $(\vtheta^t)_{t\in\naturals}$ is contained in a compact subset $\calc'$ of $\calc$.
Let $\calc' = \{\vtheta \in \calc: L(\vtheta)\geq L(\vtheta^0) \}$. Since $L(\vtheta^0)>-\infty$ it follows 
from Lemma \ref{lemma:compact}, that
$\calc'$ is compact. Since also $L(\vtheta^t)$ is monotonically increasing by Theorem \ref{thm:mstep} 
and (\ref{eq:dQdL}), we have 
$L(\vtheta^t) \geq L(\vtheta^0)$ for all $t\in \naturals$. Therefore the generated sequence 
$(\vtheta^t)_{t\in\naturals}$ is contained in $\calc'$. 
Since $L(.)$ is continuous on $\calc$ and $Q(.|.)$ 
is continuous with respect to its two arguments (Theorem \ref{thm:estep}), it follows from 
Theorem \ref{thm:conv:em} that $L(\vtheta^t)$ converges to $L(\hat\vtheta)$ for some 
$\hat\vtheta \in \xsolnset$. Then by Lemma \ref{lemma:convexityLPM}, $L(.)$ is 
concave and $\calc$ is convex; and therefore $\xsolnset$ is exactly the set of global 
maximizers by Theorem \ref{thm:convex_space}. 
\end{proof}

\mlsingleinp*
\begin{proof}
Let $P(\vz | \vtheta)$ denote the probability of $\vz$ with respect to a distribution $\vtheta$ on $\calx$. 
Then for all distributions $\vtheta$ it holds that $P(\vz | \vtheta)$ is tightly upper-bounded as follows
\[
P(\vz | \vtheta) = \sum_{x\in \calx} \theta_x\, P(\vz | X= x) \leq \max_{x \in \calx} P(\vz | x). 
\]
Now define $M$ to be the above upper bound, i.e. $M = \max_{x\in \calx} P(\vz | x)$; then it holds 
for any distribution $\vtheta$ that
\begin{align*}
M - P(\vz | \vtheta) &= M \sum_{x \in \calx} \theta_x - \sum_{x \in \calx} \theta_x\,P(\vz | x) \\
&= \sum_{x \in \calx} (M - P(\vz | x)) \theta_x \geq 0.
\end{align*}
It follows from the above inequality that $P(\vz | \vtheta)$ attains the maximum value $M$ 
if and only if $\theta_x=0$ for all $x$ satisfying $M-P(\vz | x)>0$,  i.e. $x\not\in\calx$. 
This condition is equivalent to that given in the theorem statement. 
\end{proof}

\mlunique*
\begin{proof}
Let $\calc$ be the set of all probability distributions $\vtheta$ on $\calx$ with 
finite log-likelihood, i.e. $L(\vtheta)>-\infty$. 
Let also $\calc' = \{\vtheta \in \calc : L(\vtheta)\geq L(\vtheta')\}$.    
Then it follows from Lemma \ref{lemma:compact} that $\calc'$ is compact. 
Since also $L(.)$ is continuous on $\calc'$ it has a global maximizer in $\calc'$ 
(by the extreme value theorem). Therefore $L(.)$ has  
a global maximizer in $\calc$. Now we show  
under the condition stated in the theorem that $L(.)$ is also strictly concave 
on $\calc$. 
Since $L(\vtheta)= \sum_{i=1}^n L_i(\vtheta)$, we can split 
this sum into two parts as follows. 
\begin{equation}\label{thm:unique:eq:4}
L(\vtheta) = \sum_{i\in\cali'} L_i(\vtheta)+ \sum_{i \in (\cali \setminus \cali')} L_i(\vtheta).
\end{equation}
Now let  $\vtheta, \vphi$ be any two nonidentical distributions on $\calx$, and consider any $0<\gamma<1$. 
Then by (\ref{eq:L_lpm})  it holds for every $i \in \cali'$ that    
\begin{flalign*}
&L_i(\gamma \vtheta + (1-\gamma) \vphi) = \log \sum_{u \in \calx} (\gamma \theta_u + (1-\gamma) \phi_u) g_{u i} \\ 
          &\qquad\qquad= \log \left( \gamma \sum_{u \in \calx} \theta_u g_{u i} + (1-\gamma)\sum_{u \in \calx} \phi_u g_{u i} \right) \\
          &\qquad\qquad\geq \gamma \log \sum_{u \in \calx} \theta_u g_{u i} +  (1-\gamma) \log \sum_{u \in \calx} \phi_u g_{u i}\\
          & \qquad\qquad = \gamma \, L_i(\vtheta) +(1-\gamma) \, L_i(\vphi)
\end{flalign*}
where the above inequality follows from the Jensen inequality and the concavity of the 
$\log(.)$. The inequality is strict equality if and only if
\begin{equation}\label{thm:unique:eq:2}
\sum_{u \in \calx} \theta_u g_{u i} = \sum_{u \in \calx} \phi_u g_{u i}.
\end{equation}
Therefore
\begin{equation}\label{thm:unique:eq:3}
\sum_{i\in\cali'} L_i(\gamma \vtheta + (1-\gamma) \vphi) \geq \gamma \sum_{i\in\cali'} \, L_i(\vtheta) +(1-\gamma) \, \sum_{i\in\cali'} L_i(\vphi)
\end{equation}
where the above inequality is equality if and only if (\ref{thm:unique:eq:2}) is satisfied for all $i\in \cali'$, 
i.e. $(\vtheta-\vphi)\, \mg(\cali') = \vzeros$. 
Therefore if this condition is never satisfied for every non-identical $\vtheta, \vphi$
%
%
the inequality in (\ref{thm:unique:eq:3}) strict, which implies by (\ref{thm:unique:eq:4}) that 
$L(.)$ is strictly concave on $\calc$ which is convex (Lemma \ref{lemma:convexityLPM}). Therefore 
the global maximizer of $L(.)$ on $\calc$ is unique.  
\end{proof}

\mluniquekrr*
\begin{proof}
Let $\calx =\{x_1,x_2, \dots, x_m\}$ be the values reported by the $k$-RR. We define the set $\cali'$ 
to include for every $x\in \calx$ an index (in the sequence of outputs) at which the value $x$ is observed. 
Then both the rows and columns of $\mg(\cali')$ correspond to the $m$ elements of $\calx$. 
Let $a = e^\epsilon/(1-k+ e^\epsilon)$, $b = 1/(1-k+ e^\epsilon)$, and $C_i$ be the $i$-th column 
of $\mg(\cali')$. Then by (\ref{eq:krr})
\[
C_i = [b,b,\dots,b, a, b,b,\dots, b]^T
\]
in which the $i$-th entry is $a$ and every other entry is $b$. 
Let $\vones$ be the column vector having all entries equal to $1$. 
We first show that any row vector $\vv$ satisfying $\vv\,\mg(\cali') = \vzeros$, 
$\vv \vones = 0$ must be equal to $\vzeros$. These constraints imply for every $i=1,2,\dots,m$ that 
$\vv \, (C_i - b \vones) = \vzeros$ which means that $v_i=0$ for every $i$, i.e. $\vv=\vzeros$.  
We know that any two distributions $\vtheta, \vphi$ satisfy $(\vtheta- \vphi) \vones = 0$. Therefore if they 
also satisfy $(\vtheta - \vphi) \,\mg(\cali')=\vzeros$ they must satisfy $\vtheta - \vphi= \vzeros$, 
i.e. must be identical, which implies by 
Theorem \ref{thm:mlunique} that the MLE on $\calx$ is unique. 
%
%
\end{proof}

\mluniquegeometric*
\begin{proof}
Let $\calx =\{x_1,x_2, \dots, x_m\}$ be the set of reported values. We define the set $\cali'$ 
to contain for every $x\in \calx$ an index (in the sequence of outputs) at which the value $x$ is observed. 
Then both the rows and columns of $\mg(\cali')$ correspond to the elements of $\calx$. We assume without 
loss of generality that $x_1<x_2<\dots<x_m$. Then using (\ref{eq:geometric}), and letting $\alpha= e^{-\epsilon}$
we can write $\mg(\cali')$ as 
\[
\mg(\cali') =c\, \begin{bmatrix}
    1                                     & \alpha^{x_2 - x_1}        & \alpha^{x_3 - x_1}             &\dots        & \alpha^{x_m - x_1}  \\
    \alpha^{x_2 - x_1}          & 1                                   & \alpha^{x_3 - x_2}             &\dots        & \alpha^{x_m - x_2}  \\
    \alpha^{x_3 - x_1}          & \alpha^{x_3 - x_2}         & 1                                       &\dots        & \alpha^{x_m - x_3}    \\
    \vdots                             & \vdots                            & \vdots                               &\ddots       &\vdots                       \\
    \alpha^{x_m - x_1}         & \alpha^{x_m - x_2}        & \alpha^{x_m - x_3}           &\dots         & 1
\end{bmatrix}.
\]
We show that any vector $\vv$ satisfying $\vv\,\mg(\cali') = \vzeros$ must be zero. This is a system 
of $m$ linear equations in which the $i$-th one corresponds to the $i$-th column $C_i$ of the above matrix  
as  $\vv \, C_i = 0$. It is easy to see that $\vv \, (C_1 - \alpha^{(x_2-x_1)} C_2)= 0$ yields $v_1=0$. Using 
 this together with $\vv \, (C_2 - \alpha^{(x_3-x_2)} C_3) = 0$ yields $v_2=0$. Repeating this procedure
 inductively on every successive two columns yields that $v_i=0$ for all $i \leq m$. Therefore any two distributions 
 $\vtheta, \vphi$ satisfying $(\vtheta - \vphi) \,\mg(\cali')=\vzeros$ must be identical, which implies by 
 Theorem \ref{thm:mlunique} that the MLE on $\calx$ is unique. 
 \end{proof}

\mluniquetgeometric*
\begin{proof}
The proof is similar to that of Corollary \ref{thm:mlunique-geometric} with the only difference is that every column 
in $\mg(\cali')$ is scaled by $c_{x_i}/c$ where $x_i$ for $i=1,2,\dots, m$ are the observed values, and $c_{x_i}$ 
is given by (\ref{eq:tgeom-cz}).
\end{proof}

\end{document}